\def\doctype{1}
\def\tsubmission{2}
	\newcommand{\full}[1]{}
	\newcommand{\submit}[1]{#1}
	\newcommand{\full}[1]{#1}
	\newcommand{\submit}[1]{}
\newcommand{\ignore}[1]{}
\newtheorem{theorem}{Theorem}
\newtheorem*{theorem*}{Informal Theorem} 
\newtheorem*{remark*}{Remark}
\newtheorem{lemma}[theorem]{Lemma}
\newtheorem*{claim*}{Claim}
\newtheorem{corollary}[theorem]{Corollary}
\renewenvironment{proof}{

\noindent{\bf Proof:}} {\hfill$\blacksquare$
}
\newenvironment{proofsketch}{

\noindent{\bf Proof Sketch:}} {\hfill$\blacksquare$
}
\newcommand{\initOneLiners}{%
    \setlength{\itemsep}{0pt}
    \setlength{\parsep }{0pt}
    \setlength{\topsep }{0pt}
}
\newenvironment{OneLiners}[1][\ensuremath{\bullet}]
    {\begin{list}
        {#1}
        {\initOneLiners}}
    {\end{list}}
\newenvironment{proofof}[1]{

\bigskip\noindent{\bf Proof of {#1}:}}
{\hfill$\blacksquare$
}
\newcommand{\sse}{\subseteq}
\def\etal{{\em et al.}\xspace}
\DeclareMathOperator*{\E}{\mathbb{E}}
\def\Expectation#1{\E\left[#1\right]}
\def\opt{\ensuremath{\mathsf{Opt}}\xspace}
\def\junc{\ensuremath{\mathsf{junc}}\xspace}
\def\ssalg{\ensuremath{\mathsf{SingleSinkAlg}}\xspace}
\def\upp{\ensuremath{\mathrm{up}}\xspace}
\def\downn{\ensuremath{\mathrm{down}}\xspace}
\def\siti{\ensuremath{(s_i,t_i)}\xspace}
\newcommand{\INDSTATE}[1][1]{\STATE\hspace{#1\algorithmicindent}}
\newcommand{\comment}[1]{\textit {\em \color{blue} \footnotesize[#1]}\marginpar{\tiny\textsc{\color{blue} To Do!}}}
\newcommand{\eat}[1]{}
\newcommand{\cI}{{\cal I}}
\newcommand{\cJ}{{\cal J}}
\newcommand{\cP}{\mathcal{P}}
\newcommand{\cS}{\mathcal{S}}
\newcommand{\cT}{{\cal T}}
\newcommand{\cX}{{\cal X}}
\newcommand{\R}{\mathbb R}
\newcommand{\eps}{\varepsilon}
\newcommand{\polylog}{\mathrm{polylog}}
\newcommand{\ceil}[1]{\lceil#1\rceil}
\def\load{{\sf load}}
\def\Obj{{\sf Obj}}
\title{Online Buy-at-Bulk Network Design}
\author{Deeparnab Chakrabarty\thanks{Microsoft Research, 9 Lavelle Road, Bangalore, India. Email: {\tt dechakr@microsoft.com}.}
\and Alina Ene\thanks{Department of Computer Science and DIMAP, University of Warwick, Coventry, UK. Email: {\tt A.Ene@warwick.ac.uk}.}
\and Ravishankar Krishnaswamy\thanks{Microsoft Research, 9 Lavelle Road, Bangalore, India. Email: {\tt rakri@microsoft.com}.}
\and Debmalya Panigrahi\thanks{Department of Computer Science, Duke University, Durham, NC, USA. Email: {\tt debmalya@cs.duke.edu}.}}
\date{}
\begin{document}

\maketitle

\thispagestyle{empty}
\begin{abstract}
We present the first non-trivial online algorithms for the non-uniform, multicommodity
buy-at-bulk (MC-BB) network design problem. Our competitive ratios
qualitatively match the best known approximation factors for the
corresponding offline problems. In particular, we show
\begin{itemize}
	\item A polynomial time  online algorithm with a poly-logarithmic competitive
	ratio for the MC-BB problem in undirected edge-weighted graphs.
	\item A quasi-polynomial time online algorithm with a poly-logarithmic
	competitive ratio for the MC-BB problem in undirected node-weighted graphs.
	\item For any fixed $\epsilon > 0$, a polynomial time online algorithm with
	a competitive ratio of $\tilde{O}\big(k^{\frac{1}{2}+\epsilon})$ (where $k$ is
	the number of demands, and $\tilde{O}(.)$ hides polylog factors) for MC-BB in directed graphs.
    \item Algorithms with matching competitive ratios for the prize-collecting variants of all the above problems.
\end{itemize}    
Prior to our work, a logarithmic competitive ratio was
known for undirected, edge-weighted graphs only for the special case of {\em uniform} costs (Awerbuch and Azar,
FOCS 1997), and a polylogarithmic competitive ratio was known for the edge-weighted {\em single-sink} problem (Meyerson, SPAA 2004).
To the best of our knowledge, no previous online algorithm was known,
even for uniform costs, in the node-weighted and directed settings. \smallskip

Our main engine for the results above is an {\em online reduction theorem} of MC-BB problems
to their single-sink (SS-BB) counterparts. We use the concept of {\em junction-tree solutions} (Chekuri~\etal, FOCS 2006)
that play an important role in solving the {\em offline} versions of the problem via a greedy subroutine -- an inherently
offline procedure. Our main technical contribution is in designing an online algorithm using only the 
{\em existence} of good junction-trees to reduce an MC-BB instance to multiple SS-BB sub-instances.
Along the way, we also give the first non-trivial online node-weighted/directed single-sink buy-at-bulk algorithms.
In addition to the new results, our generic reduction also yields new proofs of recent results for
the online node-weighted Steiner forest and online group Steiner forest problems.

\end{abstract}

\newpage
\section{Introduction}
\label{sec:introduction}

In a typical network design problem, one has to find a minimum cost (sub) network satisfying
various connectivity and routing requirements. These are fundamental problems
in combinatorial optimization, operations research,  and computer science.
To model economies of scale in network design, Salman~\etal~\cite{SalmanCRS01} proposed the {\em buy-at-bulk} framework, which
has been studied extensively over the last two decades
(e.g.,~\cite{AwerbuchA97, GuhaMM09, Talwar02, GuptaKR03, MeyersonMP08, Meyerson04, ChekuriHKS10, ChekuriHKS07}).
In this framework, each network element is associated with a sub-additive function representing the cost  for a given utilization.
Given a set of connectivity demands comprising $k$ source-sink pairs, the goal is to route
integral flows from the sources to the corresponding sinks concurrently to minimize the total cost  of the routing.


An important application of the problem is capacity planning in telecommunication networks
or in the Internet.
As observed by Awerbuch and Azar~\cite{AwerbuchA97}, this application is inherently
``online" in that terminal-pairs arrive over time and need to be served without knowledge of future pairs.
The authors of~\cite{AwerbuchA97} give a logarithmic-competitive 
online algorithm for the  {\em uniform} case where every edge is
associated with the same cost function.
However, uniformity is not always a feasible assumption, especially in
heterogeneous, dynamic networks like the Internet. Indeed recent research (e.g.,~\cite{MeyersonMP08, Meyerson04, ChekuriHKS10}) has
focused on the non-uniform setting with a different
sub-additive function for every network element.
In this non-uniform setting, Meyerson~\cite{Meyerson04} gives a polylogarithmic-competitive algorithm for the special case when all terminal-pairs share the same sink.
To the best of our knowledge, no non-trivial online algorithm is known for the general {\em multicommodity} setting, which is the focus of our paper.

We consider, in increasing order of generality,
{\em undirected edge-weighted} graphs, {\em undirected node-weighted} graphs,
and {\em directed edge-weighted} graphs.\footnote{In undirected graphs, node
costs can simulate edge costs; in
directed graphs they are equivalent.} It is also convenient to classify the problems that we study into
the {\em single-sink} 
version where all the terminal-pairs share a common sink,
and the general {\em multicommodity} 
version where the sinks in the terminal-pairs may be distinct.
For notational convenience, we use the following shorthand forms for our problems:
X-Y-BB where X = SS or MC (single-sink and multicommodity,  respectively)
and Y = E or N or D (undirected edge-weighted, undirected node-weighted,
and the general directed case, respectively).

\subsection{Our Contributions}\label{sec:contrib}
We obtain the following new results (unless otherwise noted, our algorithms
run in polynomial time):
\begin{itemize}
	\item A poly-logarithmic competitive online algorithm for the MC-E-BB problem.
	\item A poly-logarithmic competitive online algorithm for MC-N-BB and
	SS-N-BB that runs in quasi-polynomial time.
	\item An $\tilde{O}(k^{\frac{1}{2}+\eps})$-competitive online algorithm for MC-D-BB for any constant $\eps> 0$ with running time $n^{O(1/\eps)}$,
	where $\tilde{O}(.)$ hides polylogarithmic factors. For SS-D-BB, the
	ratio improves to $\tilde{O}(k^{\epsilon})$, translating to a
	polylogarithmic	competitive ratio in quasi-polynomial time.
	\item Online algorithms for prize-collecting versions of all the above problems with the same competitive ratio.
%
\end{itemize}

\noindent
Up to exponents in the logarithm, our online algorithms match the best known {\em offline} approximation algorithms (Chekuri~\etal~\cite{ChekuriHKS10} for MC-E/N-BB and Antonakopoulos~\cite{Antonakopoulos10} for MC-D-BB); for MC-N-BB, however, a {\em polynomial time}, polylogarithmic approximation is known~\cite{ChekuriHKS07}, whereas our algorithm runs in quasi-polynomial time. Furthermore, a logarithmic lower bound, even for SS-E-BB,
follows from the lower bound for the online Steiner tree problem~\cite{ImaseW91}, and a polylogarithmic lower bound for online SS-N-BB follows from a matching one for set cover~\cite{AlonAABN09}.

From a technical perspective, we derive all the multicommodity
results using a generic online reduction theorem that reduces a multicommodity instance to several single-sink instances, for which we either use existing online algorithms or give new online algorithms. Informally, one can view this as the ``online analog'' of the junction-tree approach pioneered by Chekuri \etal~\cite{ChekuriHKS10} for offline multicommodity network design. We discuss this approach in the next subsection.

\subsection{An Online Reduction to Single Sink Instances}
Multicommodity network design problems, both online and offline,  are typically more challenging
than their single-sink counterparts, and have historically\footnote{For instance, compare~\cite{MeyersonMP08} and~\cite{CharikarK05} for the SS-E-BB and MC-E-BB problem, and
compare Naor~\etal~\cite{NaorPS11} and Hajiaghayi~\etal~\cite{HajiaghayiLP13} for the online node-weighted Steiner tree and Steiner forest problem.}
required new ideas every time depending on the specific problem at hand.
%
%
The situation is no different for buy-at-bulk, both for
uniform and non-uniform costs.
In the offline buy-at-bulk setting, this
shortcoming is addressed by Chekuri~\etal~\cite{ChekuriHKS10}
(expanded to other problems by \cite{ChekuriHKS07,ChekuriEGS11,Antonakopoulos10}),
who introduce a generic combinatorial framework for
mapping a single instance of a multicommodity problem to multiple instances of the
corresponding single-sink problem. At the heart of this scheme is the following
observation that holds for many multicommodity problems such as (edge/node) Steiner forest, directed
Steiner network, buy-at-bulk, and set connectivity:
 there exists a near-optimal\footnote{We call the quality of such a solution the junction-tree approximation factor; e.g., it is $O(\log n)$ for MC-E-BB and MC-N-BB~\cite{ChekuriHKS10}} {\em junction-tree} solution for the multicommodity problem
that decomposes into solutions to multiple single-sink problems where each single-sink problem connects some subset of the original terminal-pairs to a particular root.


The problem now reduces to finding good junction-trees to cover all the terminal-pairs. 
 The offline
 techniques~\cite{ChekuriHKS10,ChekuriEGS11,Antonakopoulos10} tackle
 this using a greedy algorithm for finding the single-sink solutions;
 more precisely, in each step they find the best
 density (cost per terminal-pair) solution that routes a subset of
 terminal-pairs via a single sink. A set cover style analysis then
 bounds the loss for repeating this procedure until all terminal-pairs are
 covered. However, as the reader may have already noticed, the greedy
 optimization approach is inherently offline, as finding the
 best-density solution requires knowledge of all terminal-pairs upfront. Our
 main technical contribution in this work is \emph{an online
 version of the junction-tree framework}. Indeed, we show how to
 reduce any multicommodity buy-at-bulk instance to a collection of
 single-sink instances online.

\smallskip\noindent
{\bf (Informal Theorem)} If the junction-tree approximation factor of the MC-BB problem is $\alpha$, the integrality gap of a natural LP relaxation of the SS-BB problem is $\beta$, and there is a $\gamma$-competitive online algorithm for the SS-BB problem, then there is an $O(\alpha\beta\gamma \cdot \polylog(n))$-competitive algorithm for the MC-BB problem.

To prove the above theorem, we first write a composite-LP, which has (a) an {\em outer-LP} comprising assignment variables that fractionally assign terminal-pairs to roots, and (b) many {\em inner-LP}s which correspond to the natural LP relaxations for the SS-BB problem for each root and the terminal-pairs fractionally assigned to it by the outer-LP. We then apply the framework of online primal-dual algorithms (see~\cite{BuchbinderN09} for instance) to solve the composite-LP online. However, there are two main challenges we need to surmount.

$\bullet$ First, the existing framework has been mostly applied to
purely covering/packing LPs\footnote{Our current understanding of
mixed packing-covering is limited~\cite{AzarBFP13} and does not
capture the problem we want to solve.} and our inner-LPs have both
kinds of constraints, and moreover, there is an outer-LP
encapsulating them. We show nonetheless that it can be extended to solving our LP fractionally up to polylogarithmic factors. Indeed, we use the specific flow-structure of the inner-LP, and each step of our algorithm solves many (auxiliary) min-cost max-flow problems.

$\bullet$ The second difficulty is in {\em rounding} this fractional solution online. This is a hard problem, and currently we do not know how to do so even for basic network design problems such as the
Steiner tree problem. To circumvent this, we show that it suffices to only
{\em partially round} the LP. More precisely, we round the LP
solution so that only the outer-LP (assignment variables) become
integral, and the inner-LPs remain fractional. This gives us an
\emph{integral assignment} of the terminal-pairs to different
single-sink instances, with bounded total fractional cost. Now, from
the bounded integrality gap of the inner-LPs, we know that
\emph{there exist good single-sink solutions} for our assignment of
terminal-pairs to roots, even though we cannot find them
online\footnote{The difficulty comes from the fact that the online solution we maintain must be monotonic, i.e., the decisions are irrevocable.}. Using this knowledge, our final step is to run online single-sink algorithms for each root, and send the terminal-pairs to the root as determined by the outer-LP assignment.
Figure~\ref{alg-framework} summarizes our overall approach.

\begin{figure}[h!]
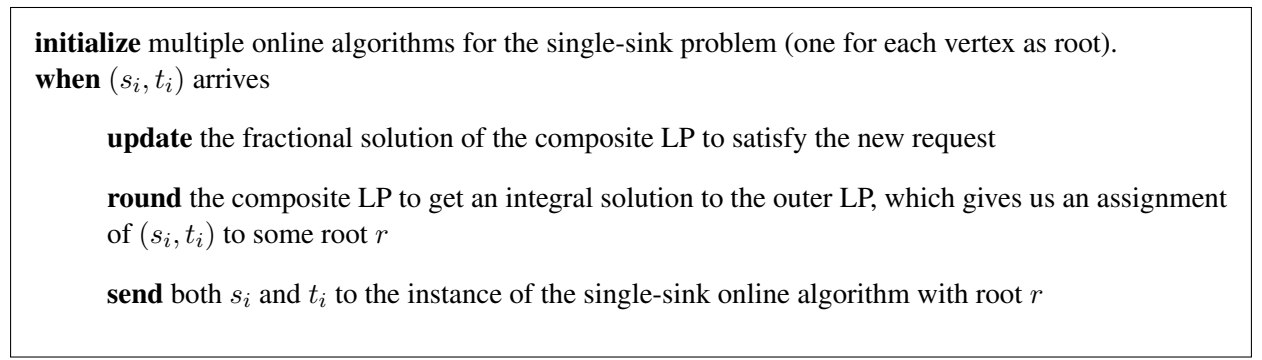

\begin{framed}
{\bf initialize} multiple online algorithms for the single-sink problem (one for each vertex as root).\\
{\bf when} \siti arrives
\begin{itemize}
\item[]  {\bf update} the fractional solution of the composite LP to satisfy the new request
\item[]  {\bf round} the composite LP to get an integral solution to the outer LP, which gives us an assignment of \siti to some root $r$
\item[]  {\bf send} both $s_i$ and $t_i$ to the instance of the single-sink online algorithm with root $r$
\end{itemize}
\end{framed}
\caption{{\bf Online Framework for Multicommodity Network Design Problems}}
\label{alg-framework}
\end{figure}

The results mentioned in~\Cref{sec:contrib} follow
by  bounding $\alpha,\beta,\gamma$ for the corresponding problems. For MC-E-BB, all of these are known to be $O(\polylog(n))$ (\cite{ChekuriHKS10,MeyersonMP08,Meyerson04} respectively).
For MC-N-BB, it is known both $\alpha,\beta$ are bounded by $O(\polylog(n))$~\cite{ChekuriHKS07}, and we bound $\gamma$\full{ in~\Cref{sec:node-weighted}} by giving the first online algorithms for SS-N-BB.
For MC-D-BB, we need some additional work. In this case we cannot directly bound $\beta$, since the integrality gap of the natural LP relaxation is {\em not known} to be bounded. Nevertheless,\full{ in~\Cref{sec:directed},} we show that it suffices to work only with more structured instances for which we can bound the integrality gap. 

Finally, we illustrate the generality of our reduction theorem by
noting that, when combined with existing bounds on $\alpha, \beta$, and $\gamma$, it immediately implies
(up to polylogarithmic factors) some recent results in online network design, such as
online node-weighted Steiner forest~\cite{HajiaghayiLP13}, and online edge-weighted group Steiner forest~\cite{NaorPS11} -- two problems for which specialized techniques were needed, even though their single-sink counterparts were known earlier.

\subsection{Related Work}

Buy-at-bulk network design problems have received considerable
attention over the last two decades, both in the offline and
online settings. For the uniform cost model, Awerbuch and Azar~\cite{AwerbuchA97} give an $O(\log n)$-approximation for MC-E-BB,
while $O(1)$-approximations are known~\cite{GuhaMM09, Talwar02, GuptaKR03} for SS-E-BB.
We also note that $O(1)$-approximations have been obtained in special cases for the multicommodity problem, such as in the {\em rent-or-buy} setting~\cite{GuptaKPR03}.
Meyerson~\etal~\cite{MeyersonMP08} give an $O(\log k)$ approximation for the general SS-E-BB, and the first non-trivial algorithm for MC-E-BB
is an $\exp(O(\sqrt{\log n\log\log n}))$-approximation due to Charikar and Karagiozova~\cite{CharikarK05}.
This was improved to a poly-logarithmic factor by Chekuri \etal \cite{ChekuriHKS10} who also solve MC-N-BB~\cite{ChekuriHKS07} with similar guarantees.
For directed graphs, our knowledge is much sparser. Even for special cases like directed Steiner tree and forest, the best polytime approximation factors known are $O(k^\eps)$~\cite{Zelikovsky97,CharikarCCDGGL99} and  $\min(O(\sqrt{k},n^{2/3}))$~\cite{ChekuriEGS11,FeldmanKN12,BermanBMRY13} respectively, and these ideas were extended to MC-D-BB by Antonakopoulos~\cite{Antonakopoulos10}.
On the hardness side,  Andrews~\cite{Andrews04} shows that even the MC-E-BB problem is $\Omega\big(\log^{1/2 - \eps} n)$-hard,
while MC-D-BB (in fact directed Steiner forest) is known to be label-cover hard~\cite{DodisK99}.

The online Steiner tree problem (a special case of online SS-E-BB) was first studied by Imase and Waxman~\cite{ImaseW91} who give an $O(\log k)$-competitive algorithm.
Berman and Coulston~\cite{BermanC97} give an $O(\log k)$-competitive algorithm for online Steiner forest, and both these results are tight, i.e., there is an $\Omega(\log k)$ lower bound.
As mentioned earlier, Awerbuch and Azar's algorithm~\cite{AwerbuchA97} can be seen as an $O(\log n)$-competitive online algorithm for the {\em uniform-cost} MC-E-BB. For non-uniform buy-at-bulk, 
the only online algorithm that we are aware of is Meyerson's~\cite{Meyerson04} polylog-competitive algorithm for the single-sink problem.
For online node-weighted network design, developments are much more recent.
 A polylogarithmic approximation for the node-weighted Steiner tree problem was first given by Naor~\etal~\cite{NaorPS11}
and later extended to the online node-weighted Steiner forest
problem~\cite{HajiaghayiLP13} and prize-collecting
versions~\cite{HajiaghayiLP14}. These algorithms, like ours
in this paper, utilize the online adaptation of the primal-dual
and LP rounding schemas pioneered by the work of Alon~\etal~\cite{AlonAABN09}
for the online set cover problem (see also
\cite{AlonAABN06} for its adaptation to network design problems).
We also note that in the node-weighted setting,
the online lower bound can be strengthened to $\Omega(\log n \log k)$
	using online set cover lower bounds~\cite{AlonAABN09,Korman05}.

\section{Preliminaries and Results}
\label{sec:preliminaries}

We now formally state the problem, set up notation
that we use throughout the paper, and state our main theorems.

\subsection{Our Problems}

\noindent
{\bf Buy-at-bulk Network Design.}
In the most general setting of the MC-D-BB problem, an instance $\cI = (G,\cX)$ consists of a directed graph $G = (V,E)$ and a
collection $\cX$ of \emph{terminal-pairs} $(s_i, t_i)\in V\times V$; each such
$s_i$ and $t_i$ is called a terminal. Each \siti pair also
has a positive integer demand $d_i$, which we assume
to be $1$ for clarity in presentation.\footnote{We can handle non-uniform demands by incurring an additional $O(\log D)$ factor in the competitive ratio and the running time (where $D$ is the maximum demand) by having $O(\log D)$ ``unit-demand'' instances, where the $i^{th}$ instance deals with demands between $2^{i-1}$ and $2^{i}$.}
Additionally, each edge $e \in E$ is associated with a {\em monotone, sub-additive}\footnote{That
is, $f_e(x) \geq f_e(y)$ whenever $x\geq y$, and $f_e(x+y) \leq
f_e(x) + f_e(y)$} cost function $f_e:\R_{\geq 0}\to\R_{\geq 0}$. A feasible solution to the problem is a collection of paths $\{P_1,
\ldots, P_k \}$ where $P_i$ is a directed path from $s_i$ to
$t_i$ carrying load $d_i$.
Given a solution $\{P_1, \ldots, P_k\}$, we let $\load(e) = \sum_{i : e \in P_i} d_i$ denote the
total load on edge $e$. The goal is to find a feasible solution
minimizing the objective $\Obj_{\mathsf{BB}}  := \sum_{e\in E} f_e(\load(e))$.
In the \textbf{online} problem, the offline input consists of the
graph $G$ and the cost functions $f_e$. The pairs \siti arrive online
in an unknown, possibly adversarial, order.  When a
pair \siti arrives, the algorithm must select the path $P_i$ that
connects them, and this decision is irrevocable. 

\medskip
\noindent
{\bf Reduction to the Two-metric Problem.}
Following previous work, throughout this paper we consider an equivalent problem (up to constant factors)
known as {\bf two-metric network design}.  In this
problem, instead of functions $f_e(.)$ on the edges, we are given two
parameters $c_e$ and $\ell_e$ on each edge. One can think of $c_e$ as
a fixed buying cost, or just {\em cost}, of edge $e$, and $\ell_e$ as a per-unit
flow cost, or {\em length}, of edge $e$.  The feasible solution space is the same as
for the buy-at-bulk problem, and the goal is to minimize the
objective $\Obj_{\mathsf{2M}} := \sum_{e \in \bigcup_i P_i} c_e +
	\sum_{i} \sum_{e \in P_i} \ell_e$.
The following lemma is well known (see e.g.,
\cite{ChekuriHKS10}).

\begin{lemma}
	Given an instance of the buy-at-bulk problem, for any $\eps>0$ one can find an
	instance of the two-metric network design problem such that, for any
	feasible solution, $\Obj_{\mathsf{2M}} \leq \Obj_{\mathsf{BB}} \leq
	(2+\eps)\Obj_{\mathsf{2M}}$.
\end{lemma}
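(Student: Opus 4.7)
The plan is to realize the 2M instance by introducing, for each edge $e$ of $G$, several parallel 2M edges indexed by geometrically-spaced load ``scales''. Set $\delta := \eps$, let $U$ be an a priori upper bound on the total load carried along any single edge in any feasible solution (e.g., $U := n \cdot \max_i d_i$), and consider the scales $u_j := (1+\delta)^j$ for $j = 0, 1, \ldots, \lceil \log_{1+\delta} U \rceil$. For each $(e, j)$ I create a parallel 2M edge with fixed cost $c_{e,j} := \alpha \cdot f_e(u_j)$ and per-unit length $\ell_{e,j} := \alpha \cdot f_e(u_j)/u_j$, where the normalization $\alpha := 1/(2+\delta)$ is chosen to calibrate the two sides of the claimed inequality. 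The resulting 2M instance has $O(|E| \cdot (\log U)/\eps)$ edges, which is polynomial in the input size.

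The crux is a pointwise per-edge estimate. Define $\widehat{f}_e(x) := \min_j (c_{e,j} + x\, \ell_{e,j}) = \alpha \min_j f_e(u_j)(1 + x/u_j)$, i.e., the cheapest 2M cost of carrying load $x$ on $e$ using a single parallel copy. For the first direction, pick $j^{*}$ so that $u_{j^{*}} \le x < (1+\delta) u_{j^{*}}$ (such $j^{*}$ exists for any $x \ge 1$). Then $\widehat{f}_e(x) \le \alpha f_e(u_{j^{*}})(1 + x/u_{j^{*}}) \le \alpha(2+\delta) f_e(x) = f_e(x)$, using monotonicity of $f_e$ and our choice of $\alpha$. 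For the other direction, I invoke the subadditivity of $f_e$: for any $j$, writing $x = k u_j + r$ with $0 \le r < u_j$ gives $f_e(x) \le f_e(k u_j) + f_e(r) \le (k+1) f_e(u_j) \le (1 + x/u_j) f_e(u_j) = \alpha^{-1}(c_{e,j} + x\, \ell_{e,j})$. Applying this with $j = j^{*}$ yields $f_e(x) \le (2+\delta)\, \widehat{f}_e(x) \le (2+\eps)\, \widehat{f}_e(x)$.

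Finally, I lift these per-edge bounds to entire solutions. Given a BB-feasible routing $\{P_i\}$ with induced loads $\load(e)$, I build a 2M-feasible routing by sending all load on $e$ through the single parallel copy that realizes $\widehat{f}_e(\load(e))$; consolidating onto one copy is never worse, since $\widehat{f}_e$ is concave on $\mathbb{R}_{\ge 0}$ with $\widehat{f}_e(0) = 0$. Summing the per-edge bound $\widehat{f}_e(\load(e)) \le f_e(\load(e))$ yields $\Obj_{\mathsf{2M}} \le \Obj_{\mathsf{BB}}$. Conversely, any 2M-feasible routing with loads $L_{e,j}$ on the parallel copies projects to a BB routing with $\load(e) = \sum_j L_{e,j}$, and subadditivity combined with the per-edge estimate gives $f_e(\load(e)) \le \sum_j f_e(L_{e,j}) \le (2+\delta) \sum_j (c_{e,j} + L_{e,j}\, \ell_{e,j})$; summing over $e$ gives $\Obj_{\mathsf{BB}} \le (2+\eps)\, \Obj_{\mathsf{2M}}$. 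The only real care needed is the discretization and the calibration of the constants $\alpha$ and $\delta$; once these are chosen as above, the rest is a routine calculation using monotonicity and subadditivity.
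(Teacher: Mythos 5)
The paper does not actually prove this lemma — it is stated as ``well known'', citing Chekuri~et al.~\cite{ChekuriHKS10} — so there is no in-paper argument to compare against. Your construction (parallel copies of each edge at geometrically spaced load scales $u_j = (1+\eps)^j$, with fixed cost proportional to $f_e(u_j)$ and per-unit length proportional to $f_e(u_j)/u_j$, followed by per-edge estimates via monotonicity and subadditivity of $f_e$) is the standard reduction, and your two per-edge inequalities and their lifting to whole solutions are correct. Two small points are worth fixing. First, the a priori load bound should be $U := \sum_i d_i$ (or $k\cdot\max_i d_i$), not $n\cdot\max_i d_i$, since the number of terminal-pairs $k$ need not be bounded by $n$. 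Second, the justification for consolidation is slightly misstated: the function $\widehat f_e$ cannot simultaneously be concave on $\mathbb{R}_{\ge 0}$ and satisfy $\widehat f_e(0)=0$ — if one takes the convention that an unused copy incurs no fixed cost then $\widehat f_e$ jumps at $0$, while the affine-minimum $g(x):=\min_j(c_{e,j}+x\ell_{e,j})$ has $g(0)=\min_j c_{e,j}>0$. The property you actually want is that $g$ is subadditive (which holds precisely because $g$ is a pointwise minimum of affine functions with nonnegative intercepts), giving $\sum_j g(L_j)\ge g(\sum_j L_j)$ and hence that consolidating onto the best single copy is never worse than splitting; alternatively you can drop the optimality claim entirely, since the direction $\Obj_{\mathsf{2M}}\le\Obj_{\mathsf{BB}}$ only requires exhibiting \emph{some} 2M routing of objective at most $\sum_e f_e(\load(e))$, which the consolidated routing does directly via $\widehat f_e(\load(e))\le f_e(\load(e))$.
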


\begin{remark*}
	In light of the above lemma, henceforth we abuse notation and let the buy-at-bulk problem mean the two-metric network design problem.
\end{remark*}

\subsection{Our Tools}

\noindent
{\bf Junction-tree solutions.}
Given an instance $\cI = (G, \cX)$ of the buy-at-bulk problem, we
consider \emph{junction-tree\footnote{The word tree is misleading
since the final solution need not be a tree in directed graphs. Nevertheless, 
we continue using this term for
historical reasons. Junction trees were originally proposed for undirected 
graphs, where the solution is indeed a tree.} solutions}, a
specific kind of solution to the problem introduced by~\cite{ChekuriHKS10}.  In such solutions, the collection of pairs are
partitioned into groups and each group is indexed by a {\em root
vertex} $r\in V$. For all terminal pairs \siti in a group indexed by
$r$, the path $P_i$ from $s_i$ to $t_i$ contains the root vertex $r$ (see
Figure~\ref{fig:junction}).

\begin{figure}[h]
	\centering
	\includegraphics{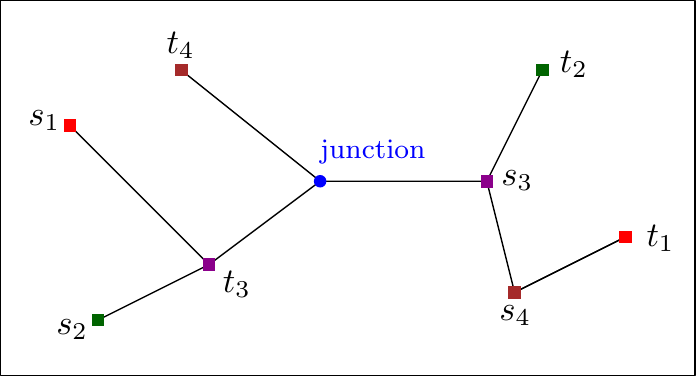}
	\caption{A group of terminal pairs routed via a junction-vertex in an undirected graph.}
	\label{fig:junction}
\end{figure}

Formally, consider an instance $\cI = (G,\cX)$ of the buy-at-bulk problem, and
let $\opt$ denote the objective value of the optimum solution.  Given
a partition $\Pi := (\pi_{r_1},\ldots,\pi_{r_q})$ of terminal pairs
indexed by $q$ different root vertices, a junction-tree solution is one that uses {\em
single-sink} solutions to connect the original terminal-pairs. Indeed, for
each part $\pi_r$ indexed by root $r$, consider the optimal solutions to the
single-sink problem on graph $G$ with demands $\{(s_i,r) : \siti \in \pi_r\}$ and the
single-source problem\footnote{The single-source problem in
a directed graph is identical to the single-sink problem with all the edges
reversed in direction. For undirected graphs, both problems are on the
same graph.} with pairs $\{(r,t_i) : \siti \in \pi_r\}$. 
Let $\opt_r(\pi_r)$ denote the sum of the objectives of the
optimal solutions to the single-sink and single-source problems, and let
$\opt(\Pi) := \sum_{r\in V} \opt_r(\pi_r)$.  Let $\opt_\junc$ denote
the minimum $\opt(\Pi)$ over all partitions. We call this solution the {\em
optimum junction-tree} solution for this instance.\footnote{Note that
copies of the same edge appearing in multiple single-sink solutions
are treated as distinct edges in the junction-tree solution. Hence,
decomposing the optimal multicommodity solution into its constituent
paths does not yield $\opt_\junc = \opt$.} Clearly,
$\opt_\junc \geq \opt$. The {\bf junction-tree approximation factor}
of $\cI$ is defined to be the ratio $\opt_\junc/\opt$.

\medskip
\noindent
{\bf LP Relaxation.}
We now describe a natural flow-based LP relaxation for the {\em
single-sink} buy-at-bulk problem for an instance $\cI = (G,\cT)$
where $\cT$ is a set of terminals that need to be connected to the
root $r$.

\begin{subequations}
\begin{alignat}{2}
		\text{minimize }   & \quad \sum_{e \in E} c_e x_{e} + \sum_{i}
		\sum_{e \in E} \ell_e f_i(e) \tag{SS-BaB LP} \label{sslablp}\\
		\text{s.t }    & \quad\{f_i(e) \colon e \in E(G)\} \; \text{ defines a flow from
		} s_i \text{ to } r \text{ of value } 1 &\quad \forall s_i \in
		\cT \notag \\
                       &\quad f_i(e) \leq x_e &\quad \forall e\in E \notag \\
                       & \quad x_e\geq 0,\ f_i(e) \geq 0 &\quad
											 \forall e \in E \notag
\end{alignat}
\end{subequations}

\noindent
Recall that the \emph{integrality gap} of \eqref{sslablp} on the
instance $\cI = (G, \cT)$ is defined to be the ratio of $\opt$ to the
optimal value of the LP \eqref{sslablp}. Also, we define the integrality gap
for the graph $G$ to be the worst case integrality gap (over all requests $\cT$ on
graph $G$) of the corresponding instance $\cI = (G,
\cT)$.

\subsection{Our Results}

\noindent
{\bf Main Technical Theorem and its Applications.}
Now we are ready to state our main theorem; the proof is in Section~\ref{sec:proofmain}. We say that an online
algorithm is $\gamma$-competitive for a graph $G$ if, for any
sequence of requests $\cX$,
the online algorithm for buy-at-bulk returns a solution within
a $\gamma$-factor of $\opt(\cI)$, where $\cI = (G,\cX)$.

\begin{theorem} [{\bf Reduction to Single-Sink Online Algorithms}]
\label{thm:main}
Fix an instance $\cI = (G,\cX)$ of the MC-BB problem.
Suppose the following three conditions hold.
\begin{OneLiners}
	\item [(i)]  The {junction-tree approximation factor} of $~\cI$ is
	at most $\alpha$.
	\item [(ii)] The integrality gap of \eqref{sslablp} on any
	{\em single-sink} instance on graph $G$ is at most
	$\beta$.
	\item [(iii)] There is a $\gamma$-competitive online SS-BB algorithm for any instance on graph $G$ that runs in time $T$.
\end{OneLiners}
Then there is an online algorithm for $\cI$ running in time
$\textrm{poly}(n,T)$ whose competitive ratio is $O(\alpha \beta
\gamma\cdot \polylog(n))$.
\end{theorem}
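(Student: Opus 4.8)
The plan is to realize the blueprint of Figure~\ref{alg-framework} in four steps: set up a composite LP, maintain a fractional solution to it online, partially round it so that only the terminal-pair$\to$root assignment becomes integral, and finally drive one copy of the single-sink online algorithm per root.

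\textbf{Step 1 (The composite LP).} I would write an LP with assignment variables $z_{i,r}$ (the extent to which \siti is routed through root $r$) subject to $\sum_{r\in V} z_{i,r}\ge 1$ for each pair that has arrived, together with, for every root $r$, a copy of \eqref{sslablp} in which the ``demand'' of $s_i$ (resp.\ $t_i$) is $z_{i,r}$ rather than $1$ and is linked to per-root capacities $x^r_e$ via $f^r_i(e)\le x^r_e$; the objective is $\sum_{r}\big(\sum_e c_e x^r_e+\sum_i\sum_e \ell_e f^r_i(e)\big)$. The optimum junction-tree solution for $\cI$ yields a feasible point of this LP (integral in the $z$'s): put $z_{i,r}=1$ for the part containing \siti and copy in the corresponding single-sink/single-source solutions as $x^r$ and $f^r$. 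Hence the composite LP has value at most $\opt_\junc\le\alpha\cdot\opt$.

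\textbf{Step 2 (Solving the composite LP online).} I would maintain a fractional solution to the composite LP online via the online primal--dual method (cf.~\cite{BuchbinderN09}); this is the step I expect to be the main obstacle. The off-the-shelf covering/packing machinery does not apply directly, because the inner LPs mix flow-conservation with the capacity couplings $f^r_i(e)\le x^r_e$, and they are nested inside the outer assignment LP. I would instead exploit the flow structure: when \siti arrives, repeatedly (over $O(\log n)$ phases) solve an auxiliary min-cost flow in each per-root residual network to decide how to raise the $z_{i,r}$, the $x^r_e$, and the $f^r_i(e)$ so that the relevant dual constraints stay satisfied while the primal cost increase is charged against the dual increase; an amortization/potential argument then keeps the fractional cost within $O(\polylog(n))$ of the composite-LP optimum, i.e.\ $O(\alpha\cdot\polylog(n))\cdot\opt$. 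Each arrival triggers only polynomially many min-cost flow computations.

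\textbf{Step 3 (Partial rounding of the outer LP).} Next I would round only the assignment variables, leaving the inner LPs fractional --- this sidesteps the (open) problem of rounding SS-BB solutions online. Using the online randomized-rounding scheme of Alon~\etal~\cite{AlonAABN09}: keep $\Theta(\log(nk))$ independent samples and, when \siti arrives, in each sample independently select root $r$ with probability $\min(1,z_{i,r})$, assigning \siti to every root so selected (with a cheap deterministic fallback --- e.g.\ the root $s_i$ itself --- invoked only when all samples fail, which by a union bound happens with negligible probability). Amplification guarantees every pair is assigned to at least one root, while the expected per-root ``assignment cost'' blows up by only $O(\log(nk))$; care is needed in the accounting, and performing the sampling at $O(\log n)$ geometric scales of the $z_{i,r}$-values controls the inflation caused by scaling a fractional demand $z_{i,r}$ up to $1$ inside the inner LP.

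\textbf{Step 4 (Invoking the single-sink algorithm and wrap-up).} For each root $r$ I feed the pairs assigned to $r$ to the $\gamma$-competitive online SS-BB algorithm for $G$. By condition (iii) its cost is at most $\gamma\cdot\opt_{\mathrm{SS}}(r,S_r)$, where $S_r$ is the set of pairs routed through $r$; by condition (ii) $\opt_{\mathrm{SS}}(r,S_r)\le\beta\cdot\mathrm{LP}_{\mathrm{SS}}(r,S_r)$, and the (scaled) inner-LP solution from Steps 2--3 is feasible for the single-sink instance $(r,S_r)$. Summing over $r$ bounds the total expected cost by $O(\gamma\beta)$ times the rounded inner-LP cost, which is $O(\polylog(n))$ times the composite-LP value $O(\alpha\cdot\opt)$ --- giving an $O(\alpha\beta\gamma\cdot\polylog(n))$-competitive algorithm. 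The running time is $\mathrm{poly}(n)$ for the LP updates and rounding plus $n$ invocations of the time-$T$ single-sink algorithm, hence $\mathrm{poly}(n,T)$ overall.
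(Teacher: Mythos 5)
Your high-level plan — composite LP with outer assignment variables and per-root inner SS-BB LPs, online fractional maintenance, partial rounding of only the outer variables, and finally feeding assignments to copies of the $\gamma$-competitive single-sink algorithm — is exactly the paper's architecture, and Steps~1 and~4 are essentially correct. But there is a genuine gap at the heart of Step~2.

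You never preprocess $G$ with the height-reduction construction (Theorem~\ref{thm:layering}), and this is what makes the fractional solver's analysis go through. The paper first replaces $G$ by layered DAGs $G^\upp, G^\downn$ of depth $h = \Theta(\log n)$ (costing an $O(\log n)$ factor in the junction-tree LP bound, per Lemma~\ref{lem:opt-jn}), and only then writes the composite LP on the layered graphs. The reason is that when the fractional algorithm raises tight $x^r_e$-variables at rate $x^r_e/c_e$, the instantaneous increase in cost $\sum_e c_e\,dx^r_e/dt = \sum_e x^r_e$ (summed over tight edges) must be bounded. On a layered graph this sum decomposes level by level: the edges at any fixed level form an $s_i$-$r$ cut, so the tight-edge $x$-mass at each level is at most the flow value $z_{ir}<1$, and summing over $O(\log n)$ levels gives an $O(\log n)$ rate bound (Lemma~\ref{lem:dpdt}). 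On an unlayered general graph the tight edges need not concentrate on $O(\log n)$ cuts and the rate bound you would need — the ``amortization/potential argument'' you gesture at — has no obvious substitute; the natural bound would be proportional to the diameter, i.e.\ up to $n$. Your remark about ``$O(\log n)$ phases'' of min-cost flow computations suggests you sensed an $O(\log n)$ was needed but did not identify its source; without the layering, Step~2's claimed $O(\polylog(n))$ loss is unsubstantiated.

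A smaller issue is the rounding in Step~3. The paper draws, once, a per-root threshold $\tau_r \in [1/2n, 1/(3\log n)]$ and sets $\tilde z_{ir}=1$ iff $z_{ir}\geq\tau_r$, while uniformly scaling $\tilde x^r_e = \min(1, x^r_e/\tau_r)$ and $\tilde f^r_{(\cdot)} = \min(1, f^r_{(\cdot)}/\tau_r)$. This coupling is what guarantees that, whenever a pair is assigned to $r$, the scaled inner variables dominate a unit $s_i$-$r$ (resp.\ $r$-$t_i$) flow and remain feasible for \eqref{sslablp}, with $\E[1/\tau_r]=O(\log^2 n)$ controlling the blowup. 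Your independent per-pair Bernoulli sampling does ensure coverage, but it does not by itself yield a single consistent scaling factor per root under which the inner LP stays feasible for all pairs simultaneously; the ``geometric scales'' remark hints at the right fix but would need to be made precise (essentially re-deriving the threshold scheme). So: same skeleton, but you are missing the layering preprocessing that the fractional-solver analysis actually rests on, and the rounding coupling needs to be tightened.
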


\noindent
Using this theorem, we can immediately obtain the following new results mentioned in the introduction.

\begin{theorem}[Undirected Edge-weighted Buy-at-Bulk]
\label{thm:undirected-edge-bab}
	There is a $\polylog(n)$-competitive, polynomial time
	randomized online algorithm for the MC-E-BB problem.
\end{theorem}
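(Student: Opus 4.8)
The plan is to apply Theorem~\ref{thm:main} to the MC-E-BB problem, so the task reduces to supplying good bounds on the three parameters $\alpha$, $\beta$, and $\gamma$ for undirected edge-weighted graphs and checking the running time. First I would invoke the junction-tree decomposition result of Chekuri~\etal~\cite{ChekuriHKS10}: for any MC-E-BB instance on $G$ there is a near-optimal junction-tree solution, i.e.\ the junction-tree approximation factor satisfies $\alpha = O(\polylog(n))$. Second, I would observe that the integrality gap $\beta$ of the flow-based relaxation \eqref{sslablp} for the single-sink edge-weighted problem is $O(\polylog(n))$; this follows from the LP-based (or approximation) analyses of SS-E-BB, e.g.\ the $O(\log k)$ rounding of Meyerson~\etal~\cite{MeyersonMP08} (and the framework of Chekuri~\etal~\cite{ChekuriHKS10}), which certify that a fractional single-sink solution can be rounded to an integral one losing only a polylogarithmic factor.

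Third, and this is the genuinely substantive ingredient, I would use the existence of a $\gamma$-competitive online algorithm for SS-E-BB with $\gamma = O(\polylog(n))$, namely Meyerson's algorithm~\cite{Meyerson04}, which runs in polynomial time. Feeding $\alpha,\beta,\gamma = O(\polylog(n))$ and $T = \poly(n)$ into Theorem~\ref{thm:main} yields an online algorithm for MC-E-BB with competitive ratio $O(\alpha\beta\gamma\cdot\polylog(n)) = O(\polylog(n))$ and running time $\poly(n,T) = \poly(n)$. The randomization in the statement enters because the reduction underlying Theorem~\ref{thm:main} (the partial rounding of the composite LP's outer assignment variables) is randomized, so the resulting guarantee is in expectation; I would note that standard concentration/repetition arguments, or simply stating the bound in expectation, suffice.

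The main obstacle I anticipate is not in this theorem per se — given Theorem~\ref{thm:main} the argument is essentially a lookup of known polylogarithmic bounds — but rather in being careful about \emph{which} graph the parameters are measured on: Theorem~\ref{thm:main} requires (ii) to hold for \emph{every} single-sink instance on the \emph{same} graph $G$, and (iii) a competitive online SS-BB algorithm for \emph{every} instance on $G$. For undirected edge-weighted graphs both the integrality-gap bound and Meyerson's online algorithm apply uniformly to all instances on any $G$, so this subtlety is benign here; I would simply remark that all three bounds are graph-independent (depending only on $n = |V(G)|$), so the hypotheses of Theorem~\ref{thm:main} are met with $\alpha = \beta = \gamma = \polylog(n)$, completing the proof.
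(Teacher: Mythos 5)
Your proof is correct and follows essentially the same route as the paper: apply Theorem~\ref{thm:main} with $\alpha = O(\log k)$ from Chekuri~\etal~\cite{ChekuriHKS10}, $\beta = O(\log k)$ for the integrality gap of \eqref{sslablp}, and $\gamma = O(\log^4 n)$ from Meyerson's online SS-E-BB algorithm~\cite{Meyerson04}. The only minor difference is the citation for $\beta$: the paper attributes the integrality-gap bound specifically to Chekuri, Khanna, and Naor~\cite{ChekuriKN01} rather than to~\cite{MeyersonMP08}, which gives an approximation guarantee but not directly a bound on the gap of this particular flow LP.
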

\begin{proof}
	The theorem follows directly by combining Theorem~\ref{thm:main}
	with the following results from previous work. Chekuri~\etal~\cite{ChekuriHKS10} prove that the junction-tree
	approximation factor for the undirected edge-weighted buy-at-bulk
	problem is $O(\log k)$. Chekuri~\etal~\cite{ChekuriKN01} prove
	that the integrality gap of \eqref{sslablp} in undirected
	edge-weighted graphs is $O(\log k)$. Meyerson~\cite{Meyerson04}
	gives a randomized polynomial time online algorithm for the
	single-sink buy-at-bulk problem with competitive ratio
	$O(\log^4{n})$.
\end{proof}


\begin{theorem}[Undirected Node-weighted Buy-at-Bulk]
\label{thm:undirected-node-bab}
		For any constant $\eps > 0$, there is an $O(k^{\eps}\polylog(n))$-competitive,
		randomized online algorithm for MC-N-BB with running time $n^{O(1/\eps)}$. As a corollary, this yields a $\polylog(n)$-competitive,
		quasi-polynomial time algorithm for this problem.
\end{theorem}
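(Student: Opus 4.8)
The plan is to obtain this theorem purely as an application of our reduction, \Cref{thm:main}: it suffices to bound the three parameters $\alpha,\beta,\gamma$ for undirected node-weighted graphs. For the first two I would invoke known offline results of Chekuri~\etal~\cite{ChekuriHKS07}, who show that the junction-tree approximation factor of every MC-N-BB instance is $O(\polylog n)$ (giving $\alpha = O(\polylog n)$) and that the integrality gap of~\eqref{sslablp} on every single-sink node-weighted instance is $O(\polylog n)$ (giving $\beta = O(\polylog n)$). Thus the entire content of the theorem reduces to verifying hypothesis~(iii): an online SS-N-BB algorithm with competitive ratio $\gamma = O(k^{\eps}\polylog n)$ and running time $T = n^{O(1/\eps)}$. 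Given such an algorithm, \Cref{thm:main} immediately yields an $O(\alpha\beta\gamma\cdot\polylog n)=O(k^{\eps}\polylog n)$-competitive online algorithm for MC-N-BB running in time $\poly(n,T)=n^{O(1/\eps)}$; the corollary then follows by taking $\eps=\Theta(1/\log n)$, since then $k^{\eps}=O(1)$ and $n^{O(1/\eps)}=n^{O(\log n)}$ is quasi-polynomial.

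It remains to construct the online SS-N-BB algorithm, which I would do as an online version of the classical layered ``access-network'' structure for single-sink buy-at-bulk. First I would establish an \emph{offline} structural claim: writing $L:=\lceil 1/\eps\rceil$, every SS-N-BB instance has a solution of cost $O(L\cdot\opt)$ built from a hierarchy of hubs at levels $0,1,\dots,L$ (terminals are the level-$0$ hubs, the root is the unique level-$L$ hub), where each level-$j$ hub aggregates roughly $k^{j/L}=k^{j\eps}$ terminals and level $j$ is a node-weighted Steiner forest $F_j$ joining the level-$(j-1)$ hubs to the level-$j$ hubs. In such a solution every edge used at level $j$ carries load $\Theta(k^{j\eps})$, so $F_j$ should be built in the node metric $w^{(j)}_v:=c_v+k^{j\eps}\ell_v$.

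To run this online, when a terminal $s_i$ arrives I route it upward one level at a time. At level $j$, I feed the current level-$(j-1)$ hub of $s_i$ into an online ``connect-to-the-nearest-already-built-hub'' subroutine run in the metric $w^{(j)}$ --- an online node-weighted Steiner-tree routine in the spirit of Naor~\etal~\cite{NaorPS11}, which is $O(\polylog n)$-competitive. The new ingredient handles aggregation: since the terminals are adversarial we cannot fix the hubs in advance, so whenever the level-$j$ subroutine attaches $s_i$ at a fresh node $v$, we promote $v$ to a level-$(j+1)$ hub independently with probability $k^{-\eps}$. This makes each level-$j$ hub serve $\Theta(k^{\eps})$ hubs of level $j-1$ in expectation, emulating the geometric aggregation of the offline structure; the price of opening a hub (and paying its upward-connection cost) before its $\Theta(k^{\eps})$ clients arrive is a single extra $O(k^{\eps})$ factor. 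Composing the $L$ levels --- each losing $O(\polylog n)$ to its subroutine, plus the one-time $O(k^{\eps})$ aggregation loss --- gives $\gamma=O(\eps^{-1}k^{\eps}\polylog n)$, and the $n^{O(1/\eps)}$ running time comes from the $L$ levels each requiring a $\poly(n)$ enumeration (over the scaled metrics $w^{(j)}$ and over the probabilistic tree embeddings used to host each Steiner-tree subroutine), composed across levels.

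I expect the main obstacle to be the analysis of this randomized online aggregation against a single fixed near-optimal hierarchical solution. Two quantities must be controlled simultaneously: (a) in expectation the number of level-$j$ hubs is $\tilde{O}(k^{1-j\eps})$, so that the buying costs $\sum_v c_v$ and the upward-connection costs telescope across levels and charge to $\opt$ with only a $\polylog$ loss; and (b) with high probability the load carried by any level-$j$ edge is $\tilde{O}(k^{j\eps})$, so that the length term $\ell_v\cdot\load(v)$ stays comparable to the corresponding term of the optimum. Proving (a) and (b) needs a careful coupling of the adversarial arrival order with the coin flips together with a concentration argument; this is where most of the technical effort lies. It is worth noting that no \emph{online rounding} of~\eqref{sslablp} is needed anywhere: \Cref{thm:main} exploits only the \emph{existence} of a good fractional single-sink solution via the bound $\beta$, so the single online primitive we must supply is precisely the combinatorial SS-N-BB algorithm above.
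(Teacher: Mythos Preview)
Your outer framework is exactly the paper's: invoke \Cref{thm:main} with $\alpha,\beta=O(\polylog n)$ from Chekuri~\etal, and supply an online SS-N-BB algorithm for~$\gamma$. The divergence is in how that single-sink algorithm is obtained.

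The paper does \emph{not} build a new Meyerson-style hierarchical aggregation. Instead it observes that undirected node-weighted instances embed into directed (edge-weighted) instances, and then simply invokes the SS-D-BB algorithm already developed in the directed section. That algorithm applies the height-reduction theorem with $h=\Theta(1/\eps)$ to obtain an $n^{O(1/\eps)}$-size layered graph, unrolls it into a forest of in-/out-arborescences, and reduces single-sink buy-at-bulk on this structure to online \emph{group Steiner tree on a tree}, for which the Alon~\etal algorithm is $O(h^2\log^2 n\log k)$-competitive. This yields $\gamma=O(k^{\eps}\polylog n)$ in time $n^{O(1/\eps)}$ (and $\gamma=\polylog n$ in quasi-poly time by taking $h=\Theta(\log k)$), with essentially no new analysis beyond the reduction to group Steiner tree. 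So the $n^{O(1/\eps)}$ running time arises from the \emph{size} of the height-reduced graph, not from any per-level enumeration.

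Your proposed route --- random hub promotion with a node-weighted Steiner subroutine at each level --- is a plausible analogue of Meyerson's SS-E-BB algorithm, but as written it has real gaps. The ``connect-to-the-nearest-already-built-hub'' primitive you need is not the single-root Steiner tree problem that \cite{NaorPS11} solves; it is closer to an online node-weighted facility-location/forest problem where the set of open hubs itself evolves, and you have not said how to get a $\polylog$-competitive subroutine for that. Your running-time accounting is also off: $L=O(1/\eps)$ sequential calls to a $\poly(n)$ subroutine give $\poly(n)/\eps$, not $n^{O(1/\eps)}$, so either the algorithm is faster than you claim or something multiplicative across levels is missing from the description. Finally, the concentration argument you flag as ``the main obstacle'' is indeed nontrivial against an adaptive adversary that sees your hub coin-flips. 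None of this is fatal in principle, but the paper's path through group Steiner tree sidesteps all of it.
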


\begin{theorem}[Directed Buy-at-Bulk]
\label{thm:gen-bab}
	For any constant $\eps>0$, there is an $O\big(k^{1/2 +
	\eps}\polylog(n)) \big)$-competitive, polynomial time online
	algorithm for the MC-D-BB.
\end{theorem}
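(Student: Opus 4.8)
The plan is to deduce the theorem from the reduction theorem (Theorem~\ref{thm:main}), but with one extra preprocessing step that gets around the fact that the integrality gap $\beta$ of \eqref{sslablp} is not known to be polylogarithmic in general directed graphs. The easy parameter is $\alpha$: extending the directed Steiner forest analysis, one has a junction-tree approximation factor of $\alpha = \tilde O(\sqrt k)$ for MC-D-BB (Antonakopoulos~\cite{Antonakopoulos10}), so it would suffice to pair this with $\beta,\gamma = \polylog(n)$. The obstacle, flagged in the introduction, is that the natural flow LP can have polynomially large integrality gap on directed graphs (e.g.\ Zosin--Khuller-type instances), so Theorem~\ref{thm:main} cannot be applied to $G$ as-is.

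First I would pass to \emph{structured} (``layered'') instances. Using the standard height-reduction that underlies the offline $O(k^\eps)$-approximations for directed Steiner problems, for $L = \Theta(1/\eps)$ I build a layered expansion $\hat G$ of $G$ with $|\hat G| = n^{O(1/\eps)}$, having the properties that (a) every single-sink / single-source solution in $\hat G$ projects back to a no-more-expensive solution in $G$, (b) every such solution in $G$ has a counterpart in $\hat G$ of cost at most a $k^{O(\eps)}$ factor larger (the $L\cdot k^{1/L}$ tree-balancing loss, charged correctly to both the fixed costs $c_e$ and the lengths $\ell_e$), and (c) on $\hat G$ the integrality gap of \eqref{sslablp} is only $\polylog(n)$, since flow on the layered structure can be rounded level by level. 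I then run (the proof of) Theorem~\ref{thm:main} with all single-sink sub-instances, their LPs, and their online subroutines living on $\hat G$ rather than on $G$. The composite LP is still cheap: take the optimal junction-tree solution for $G$ (cost $\le \alpha\cdot\opt = \tilde O(\sqrt k)\,\opt$), apply (b) to each of its single-sink and single-source pieces, and obtain a feasible point of the layered composite LP of cost $\le \tilde O(\sqrt k)\cdot k^{O(\eps)}\cdot\opt$.

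Second I would supply the remaining ingredient: a $\polylog(n)$-competitive online algorithm for SS-BB on layered graphs (equivalently a $\tilde O(k^\eps)$-competitive polynomial-time online algorithm for SS-D-BB on general $G$, taking $L = \Theta(1/\eps)$) --- the ``first non-trivial online directed single-sink buy-at-bulk algorithm'' promised in the introduction --- obtained by running the online primal--dual / LP-rounding machinery on the layered flow LP and rounding online on $\hat G$. Putting it together: with $\alpha = \tilde O(\sqrt k)$, $\beta = \polylog(n)$, $\gamma = \polylog(n)$ and running time $T = \poly(|\hat G|) = n^{O(1/\eps)}$, the reduction produces an online algorithm with competitive ratio $O(\alpha\beta\gamma\polylog n)\cdot k^{O(\eps)} = \tilde O(\sqrt k)\cdot k^{O(\eps)}$; projecting the maintained $\hat G$-solution back to $G$ via (a) only decreases cost, and rescaling $\eps$ by a constant yields the claimed $\tilde O(k^{1/2+\eps})$ ratio in time $n^{O(1/\eps)}$ (polynomial for fixed $\eps$, and randomized since both the rounding in Theorem~\ref{thm:main} and the single-sink subroutine are randomized).

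The main obstacle is the second step together with the compatibility requirement: the same layered graph $\hat G$ must simultaneously keep the junction-tree factor near $\sqrt k$, make the flow LP \eqref{sslablp} have polylogarithmic integrality gap, and admit a polylog-competitive online single-sink algorithm --- and this last point is itself a new result, requiring the online primal--dual framework to be run on the layered flow LP together with an online rounding argument on $\hat G$. A secondary subtlety is checking that the height-reduction lemma transfers the buy-at-bulk (two-metric) objective, not just Steiner-tree cost, so that rerouting a path through a layered copy correctly accounts for both the buying costs $c_e$ and the per-unit lengths $\ell_e$.
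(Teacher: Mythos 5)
Your high-level strategy is the paper's: pre-process $G$ into a structured instance where the integrality gap of \eqref{sslablp} is controlled, absorb the $\tilde O(\sqrt k)$ junction-tree factor when relating the two instances, and then apply Theorem~\ref{thm:main}. But there is a real gap in step (c). You assert that on the layered expansion $\hat G$ ``the integrality gap of \eqref{sslablp} is only $\polylog(n)$, since flow on the layered structure can be rounded level by level.'' This is not true as stated and cannot be true by the reasoning given: the known $\Omega(\sqrt k)$ integrality-gap instances for the directed flow LP are themselves bounded-depth DAGs, so merely having a layered structure of height $O(1/\eps)$ does not by itself tame the LP. What makes the paper's argument work is a much stronger pre-processing: after producing the layered graphs $G^\upp, G^\downn$ via the height-reduction lemma, the paper ``unrolls'' each into a collection of trees $T^\upp_r, T^\downn_r$ (nodes indexed by root-to-vertex paths, size $n^{O(1/\eps)}$), and stitches these together with the terminals into the graph $H$. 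The key structural consequence is that every single-sink sub-instance of $\cJ=(H,\cX)$ lives on an \emph{actual tree}, and reduces bijectively to a group Steiner tree instance on that tree. The polylog bound on $\beta$ then comes from Garg--Konjevod--Ravi's $O(\log N\log K)$ integrality gap for group Steiner tree on trees, not from any generic level-by-level rounding of a layered DAG.

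This same tree reduction also resolves your ``main obstacle.'' You identify that you need a new $\polylog$-competitive online SS-BB algorithm on $\hat G$ and propose building it from scratch via online primal-dual plus online rounding of the layered flow LP --- but that is exactly the part your plan does not supply, and it is not clear it can be done on a general layered DAG. The paper sidesteps it entirely: once each single-sink sub-instance is a group Steiner tree instance on a tree, the online algorithm $\gamma$ is obtained as a black box from Alon et al.'s online group Steiner tree algorithm ($\gamma = O(h^2\log^2 n\log k)$). A secondary cosmetic difference in the accounting: the paper sets $\alpha = 1$ on the transformed instance $\cJ$ (because in $H$ \emph{every} feasible solution is automatically a junction-tree solution, Lemma~\ref{clm:any-feasible-solution-in-J-is-a-junction-tree}) and pushes the $\tilde O(\sqrt k)\cdot k^{O(\eps)}$ loss into the comparison $\opt(\cJ) \le O(k^{1/2+\eps})\opt(\cI)$ (Corollary~\ref{cor:ItoJ}); you keep $\alpha = \tilde O(\sqrt k)$ and multiply by $k^{O(\eps)}$ when moving the optimum to $\hat G$. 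The two bookkeeping schemes give the same bound, so that part is fine; the substantive missing idea is the tree-blowup construction and the reduction to group Steiner tree.
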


\submit{
\begin{theorem} \label{thm:prizes}
(Prize Collecting Buy-at-Bulk)
For each of the above problems, there is an online algorithm with matching running time and competitive ratio for the corresponding prize-collecting\footnote{In the prize-collecting problem, every terminal-pair also comes with a penalty $q_i$, and the algorithm can opt to not satisfy the request by incurring this value in the objective.} version.
\end{theorem}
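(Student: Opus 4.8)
The plan is to reduce the prize-collecting problem to (essentially) the machinery already built for \Cref{thm:main}, changing as little as possible. Write $\opt^{\mathsf{pc}}$ for the optimum of the prize-collecting instance. I would augment the composite LP with one new variable $z_i \in [0,1]$ per terminal-pair $\siti$, interpreted as the fraction of the demand that is \emph{abandoned}: the outer covering constraint for pair $i$ becomes $z_i + \sum_{r \in V} a_{i,r} \ge 1$ (where $a_{i,r}$ is the fractional assignment of $i$ to root $r$), and the objective gains the term $\sum_i q_i z_i$. The inner LPs — one copy of \eqref{sslablp} per root — are left \emph{untouched}, since a pair that is ultimately assigned to a root is to be served in full. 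Note this adds no new packing constraints, so it does not worsen the mixed covering/packing structure already handled in the proof of \Cref{thm:main}.

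First I would check that the \emph{existential} ingredient of the reduction survives: the prize-collecting junction-tree optimum is still at most $\alpha \cdot \opt^{\mathsf{pc}}$. This is immediate — take an optimal prize-collecting solution, let $S$ be the set of pairs it serves, apply the ordinary junction-tree decomposition to the sub-instance induced by $S$ to get a junction-tree solution of cost at most $\alpha\cdot\opt(S) \le \alpha\cdot\opt^{\mathsf{pc}}$, and keep paying $q_i$ for $i\notin S$. Hence there is a partition of a \emph{subset} of the pairs into rooted groups, together with per-pair penalties for the rest, of total cost $O(\alpha)\cdot\opt^{\mathsf{pc}}$, which is exactly what the augmented outer LP models.

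Next, the online fractional solver carries over with only cosmetic changes: $z_i + \sum_r a_{i,r}\ge 1$ is again a covering constraint, and $z_i$ is a fresh variable of objective coefficient $q_i$ occurring in only this one constraint, so the primal–dual multiplicative-weights update treats it like a singleton set in online set cover, while the $a_{i,r}$'s are updated and coupled to the inner min-cost-flow LPs exactly as in \Cref{thm:main}; the fractional competitive analysis is unchanged up to constants. For the rounding phase I would scale $(a_{i,r}, z_i)$ up by an $O(\log n)$ factor and then randomly round the outer variables so that, with high probability, each pair $i$ is either \emph{assigned integrally to a single root} $r$ or \emph{permanently abandoned}; abandoned pairs contribute $q_i$ to the objective (in expectation $O(\log n)\cdot q_i z_i^\star$), and assigned pairs are handed to the online SS-BB algorithm at $r$, precisely as before. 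In particular no prize-collecting single-sink subroutine is needed — abandonment is decided entirely at the assignment layer.

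The main obstacle is the interaction of this rounding with irrevocability: once a pair is abandoned we must never need it again, and once it is sent to a root we are committed to routing it even if later arrivals would have made abandonment cheaper. This is handled by the usual monotone device — only ever letting $a_{i,r}, z_i$ increase, rounding an option the moment its scaled mass crosses a fixed random threshold, and never un-rounding — so that the expected total number of threshold crossings, and hence the expected blow-up over the fractional cost, is $O(\polylog(n))$. A minor point to verify is that the duality-based objective bookkeeping still goes through with the drop option present in every outer constraint, but this is routine. Assembling the pieces yields, for each of the three settings, an online algorithm for the prize-collecting problem with the same running time and, up to constants, the same competitive ratio as in Theorems~\ref{thm:undirected-edge-bab}, \ref{thm:undirected-node-bab}, and~\ref{thm:gen-bab}.
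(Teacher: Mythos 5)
Your proof is essentially correct and its core ingredients — augmenting the outer LP with a per-pair abandonment variable paying $q_i$, the observation that the prize-collecting junction-tree optimum is at most $\alpha\cdot\opt^{\mathsf{pc}}$ (serve the pairs the optimum serves, pay penalties for the rest), and rounding the abandon option monotonically at the assignment layer so that no prize-collecting single-sink subroutine is needed — all align with the paper's argument. Where you diverge is in \emph{how} the augmented LP is fed to the existing machinery. You treat the abandonment variable $z_i$ as a genuinely new kind of outer variable and argue (plausibly but informally) that the online fractional solver and the threshold rounding can be extended to handle it, flagging the bookkeeping as ``routine.'' The paper instead makes this step a pure reduction: it observes that the modified LP is \emph{identical} to the original composite LP \eqref{lp1:eq1} run on a modified graph in which a dummy root vertex ``$0$'' is added at level $0$, each $s_i$ is given a path to ``$0$'' in $G^\upp$ of total length $q_i/2$ and zero fixed cost, and each $t_i$ is given an analogous path from ``$0$'' in $G^\downn$. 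Assigning pair $i$ to root ``$0$'' then costs exactly $q_i$ in length and nothing in buying cost, so $z_{i,0}$ plays the role of your $z_i$ but now carries the same flow structure as every other $z_{ir}$. The payoff is that Algorithm~\ref{alg-fractional} (whose updates are driven entirely by auxiliary min-cost flows, not by a generic multiplicative-weights rule for singleton covering variables) and the threshold rounding of Algorithm~\ref{alg-partialround} apply \emph{verbatim}, with ``$0$'' simply treated as one more candidate root; nothing in the clock-based analysis of \Cref{lem:dpdt,lem:tau-vs-opt} or in \Cref{thm:partialround} has to be re-derived. Your version requires you to actually carry out the checks you defer — in particular, the flow-based continuous-time solver has no built-in rule for increasing a variable with no associated flow, so you would have to supply and analyze one — whereas the dummy-root encoding shows that such a rule already exists implicitly and is exactly the right one. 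Both routes reach the same theorem; the paper's is cleaner precisely because it turns the new feature into an instance of the old problem rather than an extension of the old algorithm.
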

}
\full{
\noindent
We again use~\Cref{thm:main} to prove the above theorems. However, unlike for MC-E-BB, we are not aware of any
online algorithms for the SS-N-BB and SS-D-BB problems. We therefore first give online algorithms for
these problems, and then use~\Cref{thm:main}; the details appear in Sections~\ref{sec:node-weighted} and
\ref{sec:directed}.
}
\submit{
We again use~\Cref{thm:main} to prove the above theorems. However, unlike for MC-E-BB, we are not aware of any
online algorithms for the SS-N-BB and SS-D-BB problems. We therefore first give online algorithms using fairly standard techniques for
these problems, and then use~\Cref{thm:main}. Finally, we can almost directly use~\Cref{thm:main} to also obtain matching results for prize-collecting versions of the above problems.
Due to space constraints, we defer the proofs of the above to the full version of the paper.
}

\full{\smallskip \noindent Finally, we can almost directly
use~\Cref{thm:main} to also obtain matching results for
prize-collecting versions of the above problems. Recall that in a
prize-collecting problem, every terminal-pair also comes with a
penalty $q_i$, and the algorithm can opt to not satisfy the request
by incurring this value in the objective. We give the extension
of our results to the corresponding prize-collecting problems 
in~\Cref{sec:prizes}.
}
\full{
\begin{theorem} \label{thm:prizes}
For each of the above problems, there is an online algorithm with matching running time and competitive ratio for the corresponding prize-collecting version.
\end{theorem}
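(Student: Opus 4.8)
The plan is to show that the reduction of \Cref{thm:main} carries over to the prize-collecting setting essentially verbatim, losing only constant factors; plugging in the parameters already used for \Cref{thm:undirected-edge-bab}, \Cref{thm:undirected-node-bab} and \Cref{thm:gen-bab} then yields the three claimed prize-collecting guarantees. Three points have to be checked: that the junction-tree parameter $\alpha$ survives, that the online fractional relaxation can absorb penalties, and that the rounding step and the single-sink black box are unaffected. A useful observation up front is that all penalty (``drop'') decisions will be taken at the \emph{outer} level of the composite LP, so the reduction continues to invoke only the ordinary, non-prize-collecting online single-sink algorithms of condition~(iii) --- no new single-sink results are needed.

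\emph{Prize-collecting junction trees.} Fix an optimal prize-collecting solution, let $S^\star$ be the set of terminal-pairs it connects, and let $B^\star$ be the network cost it pays to connect them, so that its value is $B^\star+\sum_{i\notin S^\star} q_i$. Since $\alpha$ is a worst-case bound over request sets on $G$, the junction-tree approximation factor of the sub-instance $(G,S^\star)$ is also at most $\alpha$, so there is a partition $\Pi$ of $S^\star$ into rooted groups with $\opt(\Pi)\le \alpha B^\star$. Declaring every pair outside $S^\star$ dropped therefore produces a ``prize-collecting junction-tree'' solution of value at most $\alpha B^\star+\sum_{i\notin S^\star} q_i\le \alpha\big(B^\star+\sum_{i\notin S^\star} q_i\big)$, i.e.\ within an $\alpha$ factor of the prize-collecting optimum; conditions (ii)--(iii), which concern only ordinary single-sink instances on $G$, are unchanged.

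\emph{Online fractional solution.} I would augment the composite LP from the proof of \Cref{thm:main} with a variable $z_i\ge 0$ for each terminal-pair, relax each outer covering constraint $\sum_r a_{ir}\ge 1$ to $\sum_r a_{ir}+z_i\ge 1$, and add $\sum_i q_i z_i$ to the objective; the inner LPs remain exactly the single-sink instances \eqref{sslablp}. The outer layer is still a pure covering LP in which paying the penalty is just one extra covering option per pair. When \siti arrives, with $q_i$ revealed, I raise its dual variable and, in the primal, simultaneously raise $z_i$ and the assignment mass $a_{ir}$ of whichever root $r$ currently has smallest marginal routing cost --- read off from the auxiliary min-cost flow computations already performed in \Cref{thm:main} --- until $\sum_r a_{ir}+z_i=1$. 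The dual-fitting argument of \Cref{thm:main} goes through with the additional dual mass charged against $\sum_i q_i z_i$, yielding a fractional composite solution of cost $O(\polylog n)$ times the prize-collecting composite-LP optimum. This is the direct analogue of the passage from online (group) Steiner forest to its prize-collecting variant carried out in~\cite{HajiaghayiLP14}.

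\emph{Rounding and assembly.} For each pair $i$: if $z_i\ge 1/2$, drop $i$ and pay $q_i\le 2q_i z_i$; otherwise $\sum_r a_{ir}\ge 1/2$, and I apply the randomized threshold rounding of the outer variables from the proof of \Cref{thm:main} to commit $i$ irrevocably to a single root at a $\polylog n$ loss, leaving the inner LPs fractional. The integrality-gap bound $\beta$ of \eqref{sslablp} then certifies single-sink solutions of cost $O(\beta)$ times the inner fractional cost for this integral assignment, so sending $s_i$ and $t_i$ to the $\gamma$-competitive ordinary single-sink online algorithm for their assigned root is $O(\beta\gamma)$-competitive against the network cost. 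Adding the dropped penalties gives an $O(\alpha\beta\gamma\cdot\polylog n)$-competitive online algorithm running in time $\poly(n,T)$, matching \Cref{thm:main}. The step I expect to require real care is the online fractional update after the penalty augmentation: one has to confirm that inserting the $z_i$ option does not disturb the invariants of the per-step min-cost-flow-based increase of the $a_{ir}$, so that the dual still bounds $O(\polylog n)$ times the augmented objective; the structural claim and the rounding are routine once this is settled.
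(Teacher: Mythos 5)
Your proposal identifies the right LP modification (add a drop variable $z_i$, relax the outer covering constraint to $\sum_r z_{ir}+z_i\ge 1$, add $\sum_i q_i z_i$ to the objective) and the right junction-tree argument (restrict to $S^\star$, the pairs the optimum connects), and these are exactly the first moves the paper makes. But from there you and the paper diverge. You propose to re-engineer the online fractional algorithm so that the new $z_i$ variable is handled by a modified dual/primal update (``raise $z_i$ and the assignment mass $a_{ir}$ of whichever root has smallest marginal routing cost''), and you honestly flag that verifying this update still yields an $O(\polylog n)$ bound is the part requiring real care. The paper instead observes that the augmented LP is literally the \emph{unchanged} composite LP on a modified instance: it introduces one extra root-level vertex $``0"$ in both $G^\upp$ and $G^\downn$, gives each $s_i$ an edge to $``0"$ of length $q_i/2$ and zero fixed cost (symmetrically for $t_i$), and then $z_{i,0}$ is just the ordinary assignment variable for this phantom root. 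With that gadget, Algorithm~1, the clock-based charging in Lemmas~\ref{lma:edge-clock}--\ref{lma:terminal-clock}, and the threshold rounding of Theorem~\ref{thm:partialround} all apply verbatim, with $``0"$ simply treated as one more candidate root during rounding. No invariant needs re-verifying.

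Two remarks on the places where your version drifts from the paper's machinery. First, your description of the fractional update as greedily favoring the cheapest root is not what Algorithm~1 does: it simultaneously increases $z_{ir}$ for all $r\in R_i$ at rates set by the per-root auxiliary min-cost max-flow problems. If you modify the update rule you also have to redo the terminal-clock bound, and that bound must now absorb $q_i$ in place of $\max(\ell(P^*_{s_i}),\ell(P^*_{t_i}))$ whenever the optimum drops $i$; the phantom-vertex reduction hands you this for free because the optimal ``path'' to $``0"$ has exactly length $q_i/2$. Second, your two-case rounding ($z_i\ge 1/2$ versus $\sum_r a_{ir}\ge 1/2$) works but is a separate argument; in the paper's setup the drop option is rounded by the same threshold scheme as every other root, so Theorem~\ref{thm:partialround} does not need to be touched. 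So: the proposal is on the right track, but the paper's phantom-root reduction is the key structural observation you are missing, and it is what turns the remaining ``routine but delicate'' re-verification into literally no new work.
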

}

In addition to the new results mentioned above, we can also use 
Theorem~\ref{thm:main} to give alternative proofs (with slightly 
worse polylog factors) of some recent results in online network design. 
By combining Theorem~\ref{thm:main} with the $\polylog(n)$-competitive
algorithm for online group Steiner Tree due to Alon~\etal \cite{AlonAABN06}, 
we obtain a $\polylog(n)$-competitive online algorithm for the 
group Steiner forest problem -- a result shown earlier by
Naor~\etal \cite{NaorPS11}.  Similarly, by combining
Theorem~\ref{thm:main} with the $\polylog(n)$-competitive online
algorithm for the node-weighted Steiner tree problem due to
Naor~\etal \cite{NaorPS11}, we obtain a
$\polylog(n)$-competitive online algorithm for the node-weighted 
Steiner forest problem -- a result shown earlier by Hajiaghayi~\etal \cite{HajiaghayiLP13}.

\submit{
 \paragraph{Height Reduction Theorem.} One of the technical tools that we use repeatedly in this paper is the following result, which builds on the work of Helvig~\etal~\cite{HelvigRZ01}. Due to lack of space, we defer its proof to the full version.}
\full{
 \smallskip \noindent {\bf Height Reduction Theorem.} One of the technical tools that we use repeatedly in this paper is the following result, which builds on the work of Helvig~\etal~\cite{HelvigRZ01}. We give the proof in~\Cref{app:layering}.}

\begin{theorem} \label{thm:layering}
	Given a directed graph $G = (V,E)$ with edge costs $c_e$ and
	lengths $\ell_e$, for all $h>0$, we can efficiently find an {
	upward directed, layered} graph $G^\upp_h$ on $(h+1)$
	levels and edges (with new costs and lengths) only between
	successive levels going from bottom (level $h$) to top (level $0$), such that each layer
	has $n$ vertices corresponding to the vertices of $G$, and, for any
	set of terminals $X$ and any root vertex $r$,
	\begin{OneLiners}
		\item[(i)] the optimal objective value of the {\em single-sink}
		buy-at-bulk problem to connect $X$ (at level $h$) with $r$ (at
		level $0$) on the graph $G^{\upp}_h$ is at most $O(h k^{1/h})
		\phi$, where $\phi$ is the objective value of an optimal solution
		of the same instance on the original graph $G$;
		\item[(ii)] given a integral (resp. fractional solution) of objective
		value $\phi$ for the single-sink buy-at-bulk problem to connect
		$X$ with $r$ on the graph $G^{\upp}_h$, we can efficiently
		recover an integral (resp. fractional solution) of objective value  at
		most $\phi$ for the problem on the original graph $G$.
	\end{OneLiners}
	Likewise, we can obtain a downward directed, layered graph
	$G^{\downn}_h$ on $(h+1)$-levels with edges going from top to
	bottom, with the same properties as above except for single-source
	instances instead.
\end{theorem}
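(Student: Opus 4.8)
The plan has three parts: construct $G^{\upp}_h$ as a \emph{layered metric completion} of $G$; prove part~(ii) by a level-collapsing projection; and prove part~(i) by reducing an optimal $G$-solution to a bounded-depth arborescence and then applying a Helvig--Robins--Zelikovsky-style peeling.

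\textbf{Construction.} Take $V\times\{0,1,\dots,h\}$ as the vertex set, writing $v^i$ for the copy of $v$ at level $i$; level $0$ is the top (home of the root) and level $h$ the bottom (home of the terminals), and all edges will go from level $i$ to level $i-1$. Between consecutive levels $i$ and $i-1$ we add, for every ordered pair $(u,v)$ with $v$ reachable from $u$ in $G$ (including $u=v$), layered edge(s) $u^i\to v^{i-1}$. To make the reverse map lossless while keeping polynomial size, I would bucket by a geometric length budget: for each dyadic $\lambda$, keep the minimum-cost $u$--$v$ path $Q$ with $\ell(Q)\le\lambda$ and give the corresponding layered edge cost $c(Q)$ and length $\lambda$. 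Thus every layered edge is \emph{realized} by one $G$-path whose $c$-cost equals the edge cost and whose $\ell$-length is at most the edge length, and $u=v$ yields zero-cost zero-length ``identity'' edges. Each layer has $n$ vertices, as required; the downward graph $G^{\downn}_h$ is obtained by first reversing every edge of $G$.

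\textbf{Part (ii): projection.} Given a solution on $G^{\upp}_h$ (integrally a set of level-descending paths; fractionally flows $f_i$ and capacities $x$), replace each used layered edge by its realizing $G$-path, and for each $G$-edge $e$ set $x'_e$, $f'_i(e)$ to be the sums of the corresponding layered quantities over all layered edges whose realizing path contains $e$. Flow conservation and $f'_i(e)\le x'_e$ are preserved, and since $\sum_{e\in Q}c_e$ equals the cost of the layered edge for $Q$ while $\sum_{e\in Q}\ell_e$ is at most its length, the resulting $G$-objective is at most the $G^{\upp}_h$-objective; integrally one then shortcuts each $s_i$--$r$ walk into a path, which only helps. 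This is the easy direction and gives objective $\le\phi$, verbatim for the single-source/$G^{\downn}_h$ case.

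\textbf{Part (i): the blowup.} Let $\phi$ be the optimal $G$-objective for connecting $X$ (of size $k$) to $r$, realized by paths $\{P_i\}$ with union $F$. First I would argue we may assume the solution is an in-arborescence: let $A\subseteq F$ be a shortest-path in-tree of $F$ (w.r.t.\ $\ell$) rooted at $r$; then $c(A)\le c(F)$, and routing each $s_i$ along its unique $A$-path gives total length at most $\sum_i\ell(P_i)$, so the ``combined objective'' of $A$ (its $c$-cost plus the total $\ell$-length of the root-paths) is at most $\phi$. I then claim, by induction on $h$, that any in-arborescence $A$ rooted at $r$ spanning $k'$ terminals with combined objective $\psi$ admits a depth-$\le h$ \emph{metric arborescence} (each edge a realizing $G$-path, hence a layered edge) spanning the same terminals, of combined objective $O(h\,k'^{1/h})\psi$. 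For $h=1$, connect each terminal to $r$ by one bucketed metric edge, paying $O(k')\psi$. For the inductive step I would invoke the HRZ peeling \cite{HelvigRZ01}: split $A$ into pieces, each a subtree carrying $\le k'^{(h-1)/h}$ terminals and rooted at a hub $v_j$ lying on those terminals' root-paths, with $\sum_j c_A(v_j,r)\le k'^{1/h}c(A)$; recurse with parameter $h-1$ on each piece (blowup $O((h-1)(k'^{(h-1)/h})^{1/(h-1)})=O((h-1)k'^{1/h})$), and reconnect each $v_j$ to $r$ by one metric edge. The cost terms sum to $O(k'^{1/h})\psi$ from the hubs plus $O((h-1)k'^{1/h})\sum_j\psi_j \le O((h-1)k'^{1/h})\psi$ from the recursion (the pieces are edge-disjoint subtrees of $A$, so $\sum_j\psi_j\le\psi$), and the length terms telescope since each terminal's new root-path decomposes metric-by-metric into a prefix inside its piece and the suffix $v_j\to r$, each dominated by the matching portion of its $A$-path. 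Finally, embed this depth-$\le h$ metric arborescence into $G^{\upp}_h$ (placing a tree-vertex at level $h$ minus its tree-depth, and padding each short terminal-path with identity edges), yielding a $G^{\upp}_h$-solution of objective $O(h\,k^{1/h})\phi$.

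\textbf{Main obstacle.} The crux is the inductive step: classical HRZ controls only the fixed-cost term, whereas here the per-unit-flow \emph{length} term must survive the same recursion, and the peeling must balance the number of pieces (which controls both how often a $G$-edge is re-bought and the glue cost to the root) against the piece sizes --- cutting at ``maximal light vertices'' is too lossy, so one needs the more careful hub/assignment choice. A secondary point is engineering the layered-edge costs and lengths (the dyadic bucketing) so that part~(ii) is exactly lossless and part~(i) loses only a constant, while $|G^{\upp}_h|=\poly(n,h)$.
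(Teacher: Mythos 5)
Your overall skeleton matches the paper's (reduce to an in-tree, depth-reduce à la Helvig--Robins--Zelikovsky, embed into a layered graph), but the construction of $G^\upp_h$ and the argument for part~(i) are genuinely different, and there are two gaps that the paper's route avoids.

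\paragraph{Construction.} You build the layered edges by \emph{dyadic bucketing on length}: for each dyadic budget $\lambda$, take the minimum-cost $u$--$v$ path $Q$ with $\ell(Q)\le\lambda$ and give the edge cost $c(Q)$ and length $\lambda$. This is a length-constrained shortest path problem, which is NP-hard in general; you would need to substitute an FPTAS (and propagate a $(1+\eps)$ loss) or otherwise redesign the buckets. The paper instead uses a \emph{level-dependent combined metric}: the edge $(u_i, v_{i-1})$ has cost equal to the shortest $u$--$v$ path under $c_e + k^{1-i/h}\ell_e$, and length equal to the $\ell$-length of that path. This is an ordinary shortest-path computation, so $G^\upp_h$ has exactly $n$ vertices and $n^2$ edges per consecutive level pair and is computed in polynomial time. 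Your version also inflates the edge count by a $\log(\text{aspect ratio})$ factor, which is not automatically polynomial.

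\paragraph{Part~(i).} You try to re-derive an HRZ-style depth-reduction from scratch via an explicit recursion, carrying a combined objective $\psi$ and claiming a per-level blowup of $O(k'^{1/h})$. The telescoping argument for the length term is correct and in fact matches the paper's observation (the $T'$-edges realize disjoint subpaths of each terminal's $A$-path, so total length does not grow). So ``the per-unit-flow length term must survive the recursion'' is not the real obstacle. The genuine gap is the hub selection: you need to decompose $A$ into $O(k'^{1/h})$ pieces, each with at most $k'^{(h-1)/h}$ terminals, while keeping $\sum_j c_A(v_j,r)\le O(k'^{1/h})\,c(A)$. You correctly observe that cutting at maximal light vertices fails (a long path with $k$ terminals hanging off the far end makes the number of pieces $\Theta(k)$, not $\Theta(k^{1/h})$). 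What the HRZ construction actually does --- and what your sketch leaves open --- is to allow the \emph{same} tree vertex to serve as a hub multiple times, with terminals distributed so that each hub carries exactly $k^{(h-1)/h}$ of them; this caps the number of pieces at $\Theta(k^{1/h})$ and makes the hub-cost bound trivial ($\sum_j c_A(v_j,r)\le (\#\text{pieces})\cdot c(A)$). The paper sidesteps all of this by citing the HRZ lemma directly, with the crucial extra property that the subtree of $T$ hanging below a level-$i$ node of $T'$ carries \emph{exactly} $k^{1-i/h}$ terminals. That exactness is what makes the level-$i$ metric $c_e + k^{1-i/h}\ell_e$ the ``right'' edge cost in $G^\upp_h$: the fixed cost of the layered edge simultaneously pays for both the buying cost and the cumulative routing cost of the realizing path, so the final embedding loses only a constant factor. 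Your bucketed construction avoids needing that exact terminal count, which is a real conceptual simplification, but only if you fully establish the peeling; as written, that step (and with it the whole part~(i) bound) is incomplete.

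\paragraph{Part~(ii).} Your projection argument is essentially the same as the paper's and is fine.
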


\section{Proof of \Cref{thm:main} (Online Reduction to Single-Sink Instances)}
\label{sec:proofmain}

There are three main steps in the proof. In
Section~\ref{sec:layered}, we describe the composite LP which is a
relaxation of optimal junction-tree solutions (for technical reasons,
we first need to pre-process the graph). Next, in
Section~\ref{sec:solveLP}, we show how to fractionally solve the LP
online. Third, in Section~\ref{sec:roundLP}, we show how to {\em
partially round} the LP online. The resulting
solution then decomposes as fractional solutions to different
single-sink instances. Finally, we use the bounded integrality gap and the
online algorithm for SS-BB to wrap up the proof in
Section~\ref{sec:wrap}.

\subsection{The Composite-LP Relaxation: \ref{lp1:eq1}}
\label{sec:layered}

We first apply~\Cref{thm:layering} with $h = \Theta(\log n)$ to
obtain layered graphs $G^\upp$ (resp., $G^\downn$) of height $O(\log
n)$ where all the edges are directed upward (resp. downward);
see~\Cref{fig:gup} for an illustration.
The
reason for this preprocessing is that the length of the \siti paths
appear as a factor in our final competitive ratio and the above step
bounds it to a logarithmic factor. Recall that the graph $G^\upp$
(resp., $G^\downn$) approximately preserves the single-sink (resp.,
single-source) solutions for any set of terminals and any root.
After this step, we can
imagine that all the roots (of the single-sink instances we will
solve) are vertices in level $0$, and all the terminals will be
vertices in level $h = \Theta(\log n)$.
For clarity of presentation, we refer to
the root and terminal vertices by the same name in both $G^\upp$ and
$G^\downn$ (even though the graphs are completely disjoint).
Overloading notation, let $V$ denote the vertices in level $0$ in
both $G^\upp$ and $G^\downn$, and let $E$ be the union of the edge
sets of $G^\upp$ and $G^\downn$. Furthermore, the cost $c_e$ and
length $\ell_e$ of these edges are inherited from~\Cref{thm:layering}.

\begin{figure}[h]
	\centering
	\includegraphics[scale=0.6]{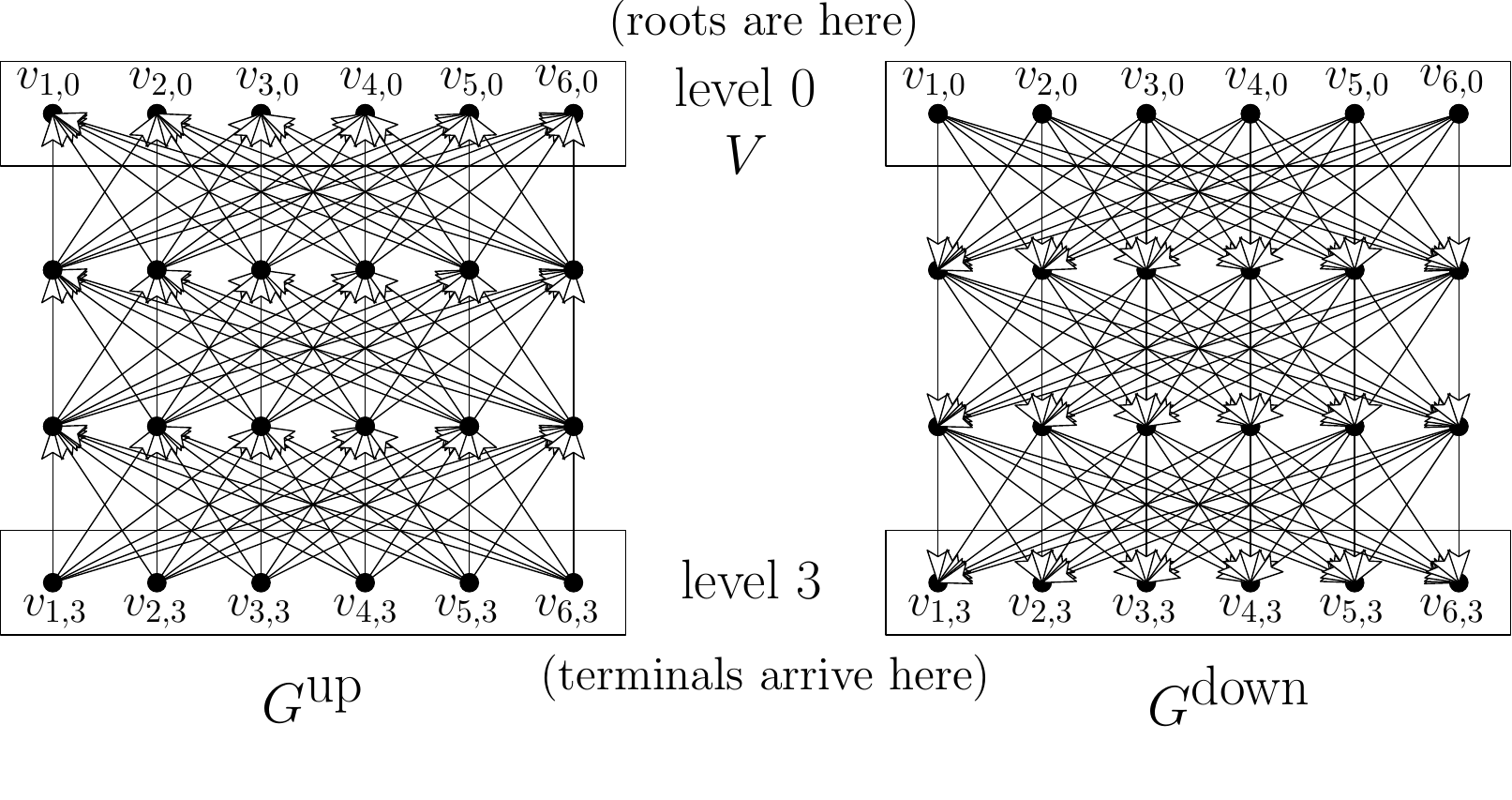}
	\caption{The graphs $G^\upp$ and $G^\downn$ with $h=3$, where the original graph $G$ has $6$ vertices $\{v_1, v_2, \ldots, v_6\}$.}
	\label{fig:gup}
\end{figure}

Now, using the junction-tree decomposition with approximation factor $\alpha$, we get the following lemma.

\begin{lemma} \label{lem:opt-jn}
	There exists a set $R^* \sse V$ of root vertices, a partition
	$\Pi^* := \{ \pi_{r} \, : \, r \in R^* \}$ of the terminal-pairs in
	$\cX$, a collection of in-trees $\{ T^\upp_{r} \, : \, r \in R^*\}$
	rooted at $r$ in $G^\upp$, and a collection of out-trees $\{
	T^\downn_{r} \, : \, r \in R^* \}$ rooted at $r$ in $G^\downn$ such
	that
	\begin{itemize}
		\item [(i)] Each $\siti \in \cX$ belongs to $\pi_r$ for some $r \in R^*$.
		\item [(ii)] For each $r \in R^*$, the in-tree $T^\upp_r$ is a
		feasible solution to the single-sink buy-at-bulk problem
		connecting $\{s_i \, : \, \siti  \in \pi_r\}$ to $r$ in $G^\upp$; likewise,
		the out-tree $T^\downn_r$ is a feasible solution to the
		single-source buy-at-bulk problem connecting $r$ to $\{t_i \, :
		\,  \siti  \in \pi_r\}$ in $G^\downn$.
		\item [(iii)] The sum of objective values of the single-source
		and single-sink solutions is at most $O(\alpha \log n) \cdot
		\opt(\cI)$.
	\end{itemize}
\end{lemma}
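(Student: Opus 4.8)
The plan is to transport an optimal junction-tree solution on the original graph $G$ through the Height Reduction Theorem (\Cref{thm:layering}). Since the junction-tree approximation factor of $\cI$ is at most $\alpha$, there is a partition $\Pi = \{\pi_r : r \in R\}$ of $\cX$, indexed by a set $R \sse V$ of roots, with
\[
\opt(\Pi) \;=\; \sum_{r \in R} \opt_r(\pi_r) \;\le\; \alpha \cdot \opt(\cI),
\]
where, by definition, $\opt_r(\pi_r) = \phi^\upp_r + \phi^\downn_r$, with $\phi^\upp_r$ the optimal objective of the single-sink buy-at-bulk instance connecting $\{s_i : \siti \in \pi_r\}$ to $r$ in $G$, and $\phi^\downn_r$ the optimal objective of the single-source instance connecting $r$ to $\{t_i : \siti \in \pi_r\}$ in $G$. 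I would take $R^* = R$ and $\Pi^* = \Pi$, so that property (i) is immediate.

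For property (ii), fix $r \in R^*$ and apply \Cref{thm:layering}(i) to the single-sink instance on $G$ connecting the (at most $k$) sources $\{s_i : \siti \in \pi_r\}$ to $r$. With the choice $h = \Theta(\log n)$ fixed in this section, the blow-up factor $O(h\,k^{1/h})$ is merely $O(\log n)$: indeed $k^{1/h} = 2^{(\log k)/h} = 2^{O(1)} = O(1)$ since $\log k = O(\log n)$. Hence the optimal single-sink objective in $G^\upp$ for connecting these sources at level $h$ to $r$ at level $0$ is at most $O(\log n)\cdot \phi^\upp_r$. This optimum may be taken to be an in-arborescence rooted at $r$: given any feasible solution on the layered DAG $G^\upp$, re-route each source to $r$ along an $\ell$-shortest path inside the solution's support and keep the resulting shortest-path in-arborescence; this only shrinks the support (so the buying cost does not increase) and does not increase any routing path's length. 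This yields the in-tree $T^\upp_r$. Symmetrically, applying the downward version of \Cref{thm:layering} to the single-source instance connecting $r$ to $\{t_i : \siti \in \pi_r\}$ produces an out-tree $T^\downn_r$ in $G^\downn$ of objective at most $O(\log n)\cdot\phi^\downn_r$. Both trees are feasible for their respective single-sink/single-source instances by construction, establishing (ii).

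Finally, for (iii), summing over $r \in R^*$,
\[
\sum_{r \in R^*}\Big(\Obj(T^\upp_r) + \Obj(T^\downn_r)\Big) \;\le\; O(\log n)\sum_{r \in R^*}\big(\phi^\upp_r + \phi^\downn_r\big) \;=\; O(\log n)\cdot\opt(\Pi) \;\le\; O(\alpha\log n)\cdot \opt(\cI).
\]
The only steps that are more than bookkeeping are the observation that $k^{1/h} = O(1)$ when $h = \Theta(\log n)$ — which is precisely why the layering preprocessing costs only a logarithmic factor — and the short shortest-path-arborescence argument ensuring the near-optimal single-sink/single-source solutions on the layered graphs are genuine in-/out-trees (so that they can later be encoded cleanly in the composite LP). I expect this last point to be the only place needing an actual sentence of proof; everything else is a direct composition of the junction-tree hypothesis with \Cref{thm:layering}.
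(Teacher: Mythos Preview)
Your proposal is correct and follows essentially the same approach as the paper: invoke the junction-tree approximation factor to obtain the partition $\Pi^*$ on $G$, then push each single-sink and single-source instance through \Cref{thm:layering} at height $h=\Theta(\log n)$, incurring the $O(\log n)$ blow-up, and sum. The paper's own proof is a terse two-sentence version of exactly this argument; your explicit computation that $k^{1/h}=O(1)$ and your shortest-path-arborescence remark (to justify that the layered solutions can be taken as genuine in/out-trees) are details the paper leaves implicit, but they are correct and worth including.
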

\full{
\begin{proof}
	Since the junction-tree approximation ratio of the given instance
	is $\alpha$, there exists a junction-tree solution given by a set
	of roots $R^*$ and a partition $\Pi^* := \{ \pi_{r} \, : \, r \in
	R^* \}$ such that the total objective value of all the single-sink
	and single-source junction-trees is at most $\alpha\opt(\cI)$.
	Moreover, by~\Cref{thm:layering}, because we choose the height $h =
	\Theta(\log n)$, the objective value of each single-sink and
	single-source solution in $G$ increases by a factor of $O(\log n)$
	in $G^\upp$ and $G^\downn$ respectively. Thus, the overall
	objective value of
	the resulting junction-trees in the graphs $G^\upp$ and $G^\downn$
	is $O(\alpha \log n) \cdot \opt(\cI)$.
\end{proof}
}

The above lemma motivates the LP relaxation given in
Fig.~\ref{fig:full-lp} which seeks to assign each \siti pair
to some rooted instance, and then minimizes the total fractional
objective value
of the rooted instances. Each individual rooted instance is
represented by an inner-LP (see the boxed constraints in
Fig.~\ref{fig:full-lp}).

\begin{figure}[!h]
\begin{subequations}
\begin{alignat}{2}
    \text{minimize }   & \sum_{r \in V} \sum_{e \in E} c_e x^r_{e} + \sum_{\siti \in \cX} \sum_{r \in R} \sum_{e \in E} \ell_e \left( f^r_{(e,s_i)} + f^r_{(e,t_i)} \right) \tag{MC-BaB LP} \label{lp1:eq1} \\
    \text{s.t } & \sum_{r \in V} z_{ir}  \geq 1 & & \forall i \label{lp1:eq2} \\
                       & z_{ir} \geq 0 \ &\ & \label{lp1:eq2a}
\end{alignat}
\begin{empheq}[box=\fbox]{alignat=2}
                       & \{ f^r_{(e, s_i)} \}  \text{ define a flow from }  s_i  \text{ to }  r  \text{ of value }  z_{ir} \text{ in }  G^\upp & \qquad  & \forall i, \forall r \in V \label{lp1:eq3a}   \\
                       & \{ f^r_{(e, t_i)} \}  \text{ define a flow from }  r  \text{ to }  t_i  \text{ of value }  z_{ir} \text{ in }  G^\downn &  & \forall i, \forall r \in V \label{lp1:eq3b}   \\
                       & f^r_{(e,s_i)} \leq x^r_e & &  \forall i, \forall e, \forall r \in V \label{lp1:eq6} \\
                       & f^r_{(e,t_i)} \leq x^r_e & &  \forall i, \forall e, \forall r \in V \label{lp1:eq7} \\
                       & x^r_e\geq 0,\ f^r_{(\cdot)} \geq 0\ &\ & \label{lp1:eq8}
\end{empheq}
\end{subequations}
\caption{Composite LP for MC-BB. \Cref{lp1:eq2,lp1:eq2a} form the outer-LP;
(\ref{lp1:eq3a})-(\ref{lp1:eq8}) form the inner-LPs.}
\label{fig:full-lp}
\end{figure}

In the LP, $z_{ir}$ denotes the extent to which the pair \siti chooses
root $r$ to route its flow. Within each inner-LP corresponding to a
root $r$, $\{x^r_e\}$ are the variables which denotes whether edge
$e$ is used to route flow in the corresponding rooted instance, and
$f^r_{(e, s_i)}$ (resp. $f^r_{(e, t_i)}$) denotes the amount of flow
$s_i$ sends (resp., $t_i$ receives) along $e$ to (resp., from) root
$r$. 
Observe that if the $z_{ir}$ variables are integral, then the
inner-LP corresponding to every root $r$ constitutes a feasible
fractional solution to \eqref{sslablp} for the single-sink instance
$\cI' = (G^\upp, \cX')$ where $\cX' = \{ (s_i, r) \! : \! z_{ir} = 1
\}$ and the single-source instance $\cI'' = (G^\downn,\cX'')$
where $\cX'' = \{ (r,t_i) \! : \! z_{ir} = 1\}$.
The next lemma, which bounds the optimal value of the \ref{lp1:eq1},
follows directly from Lemma~\ref{lem:opt-jn}.

\begin{lemma}
\label{lp:rel}
	The optimum value of~\eqref{lp1:eq1} is $O(\alpha \log n)\cdot
	\opt(\cI)$.
\end{lemma}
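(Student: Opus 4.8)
The plan is to exhibit a feasible solution to \eqref{lp1:eq1} whose objective value is $O(\alpha \log n) \cdot \opt(\cI)$, and then invoke weak LP duality (i.e., the optimum is at most the value of any feasible solution). The feasible solution will be read off directly from the junction-tree decomposition guaranteed by Lemma~\ref{lem:opt-jn}. First I would take the set of roots $R^*$, the partition $\Pi^* = \{\pi_r : r \in R^*\}$, the in-trees $\{T^\upp_r\}$ and the out-trees $\{T^\downn_r\}$ from that lemma. For each pair $\siti$, letting $r$ be the unique root with $\siti \in \pi_r$, set $z_{ir} = 1$ and $z_{ir'} = 0$ for all $r' \neq r$; this immediately satisfies the outer constraints \eqref{lp1:eq2} and \eqref{lp1:eq2a}.

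Next I would define the flow variables. For a root $r \in R^*$ and a pair $\siti \in \pi_r$, since $T^\upp_r$ is an in-tree rooted at $r$ spanning $\{s_i : \siti \in \pi_r\}$ in $G^\upp$, there is a unique path in $T^\upp_r$ from $s_i$ to $r$; set $f^r_{(e,s_i)} = 1$ for every edge $e$ on this path and $0$ otherwise. This is a unit flow from $s_i$ to $r$ in $G^\upp$, so \eqref{lp1:eq3a} holds; symmetrically, the unique path in the out-tree $T^\downn_r$ from $r$ to $t_i$ gives $f^r_{(e,t_i)}$ satisfying \eqref{lp1:eq3b}. For the capacity variables, set $x^r_e = 1$ if $e \in T^\upp_r \cup T^\downn_r$ and $0$ otherwise; for any root $r' \notin R^*$ set all inner variables to $0$. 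Then $f^r_{(e,s_i)} \le x^r_e$ and $f^r_{(e,t_i)} \le x^r_e$ hold by construction (constraints \eqref{lp1:eq6}, \eqref{lp1:eq7}), and nonnegativity \eqref{lp1:eq8} is clear. Hence the constructed point is feasible for \eqref{lp1:eq1}.

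Finally I would bound the objective. The first term $\sum_{r} \sum_e c_e x^r_e = \sum_{r \in R^*} \sum_{e \in T^\upp_r \cup T^\downn_r} c_e$ is exactly the total edge-cost of all the in-trees and out-trees. The second term $\sum_{\siti} \sum_r \sum_e \ell_e (f^r_{(e,s_i)} + f^r_{(e,t_i)})$ equals the sum, over all pairs, of the length of the $s_i$-to-$r$ path in $T^\upp_r$ plus the length of the $r$-to-$t_i$ path in $T^\downn_r$; bounding the length of each such path by the total $\ell$-length of the whole tree it lies in, this is at most the total routing cost charged in the single-sink/single-source objectives of the trees. Therefore the LP objective of our solution is at most the sum of the objective values of the single-sink and single-source solutions $\{T^\upp_r, T^\downn_r\}_{r \in R^*}$, which by part~(iii) of Lemma~\ref{lem:opt-jn} is $O(\alpha \log n) \cdot \opt(\cI)$. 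Since \eqref{lp1:eq1} is a minimization LP and we have a feasible solution of this value, the optimum is $O(\alpha \log n) \cdot \opt(\cI)$, as claimed.

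The only place requiring a little care — and the closest thing to an obstacle — is matching the LP's second term with the "objective value of the single-sink solution" in Lemma~\ref{lem:opt-jn}: one must confirm that the routing cost the lemma accounts for is $\sum_i (\text{length of } s_i\text{-path})$ summed over the tree, not merely the tree's total length once; since the flow $f^r_{(e,s_i)}$ for distinct $i$ can overlap on shared tree edges, the LP pays $\ell_e$ once per pair routed through $e$, which is precisely what a single-sink buy-at-bulk objective (fixed cost $c_e$ once, length $\ell_e$ per unit of load) charges. So the accounting is consistent and no extra loss is incurred beyond what Lemma~\ref{lem:opt-jn}~(iii) already absorbs.
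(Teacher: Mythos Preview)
Your proposal is correct and follows essentially the same approach as the paper: exhibit an integral feasible solution to \eqref{lp1:eq1} by reading off $z_{ir}$, $x^r_e$, and the flow variables directly from the junction-tree decomposition of Lemma~\ref{lem:opt-jn}, and observe that the LP objective equals the total two-metric cost of those single-sink/single-source solutions, which is $O(\alpha \log n)\cdot \opt(\cI)$ by part~(iii).

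One small cleanup: the intermediate step where you ``bound the length of each such path by the total $\ell$-length of the whole tree'' is both unnecessary and in the wrong direction; as you yourself note in the final paragraph, the LP's length term is \emph{exactly} $\sum_i (\ell(P^*_{s_i}) + \ell(P^*_{t_i}))$, which is precisely the routing (per-unit) part of the two-metric objective for the trees, so equality holds with no slack and no extra argument is needed.
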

%

\subsection{An online fractional algorithm for the
\ref{lp1:eq1}}\label{sec:solveLP}

\begin{theorem} \label{thm:lponline}
	There is a randomized, polynomial-time online algorithm that
	returns a feasible fractional solution for \eqref{lp1:eq1} of value
	at most $O(\alpha \log^3 n)\cdot \opt(\cI)$.
\end{theorem}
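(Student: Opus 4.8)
The plan is to solve the composite LP \eqref{lp1:eq1} online using the primal-dual/continuous-update framework of Buchbinder–Naor, adapted to handle the two-level structure (an outer assignment LP wrapping many inner flow LPs with mixed covering and packing constraints). The guiding principle is that, although the overall LP is not a pure covering LP, the only genuinely ``online'' covering requirement is the outer constraint $\sum_r z_{ir}\geq 1$ for each arriving pair $\siti$; the inner flow and capacity constraints $\eqref{lp1:eq3a}$–$\eqref{lp1:eq8}$ can be maintained \emph{exactly} at all times by re-solving an auxiliary optimization, because raising $z_{ir}$ and the attendant flow/capacity variables only ever relaxes feasibility of the already-present constraints. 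So the skeleton is: when $\siti$ arrives, if $\sum_r z_{ir}\geq 1$ already, do nothing; otherwise run a multiplicative-weights style continuous increase of the $z_{ir}$ variables until the covering constraint is met, and for each unit of $z_{ir}$ bought, augment flow in $G^\upp$ from $s_i$ to $r$ and in $G^\downn$ from $r$ to $t_i$ at marginal cost (cost $c_e x^r_e$ plus length $\ell_e$ along the flow path), where the marginal cost of routing that flow is computed by a min-cost-flow / shortest-path subroutine in the ``fractional capacity'' graph.

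\textbf{Key steps, in order.} (1) \emph{Dual setup and rate of increase.} Write the LP dual: each covering constraint $\sum_r z_{ir}\ge 1$ gets a variable $y_i$, and the inner constraints get their own duals; the objective is $\sum_i y_i$ against the constraint that, for every $(i,r)$, $y_i$ is at most the ``cost of cheapest way to push one unit of $z_{ir}$,'' i.e. the min-cost of an $s_i\!\to\! r$ flow in $G^\upp$ plus an $r\!\to\! t_i$ flow in $G^\downn$ under current edge costs. Increase $y_i$ continuously; whenever some $(i,r)$ becomes tight, increase $z_{ir}$ multiplicatively (the standard $z \leftarrow z(1+1/c)+\delta$-type update), and simultaneously push the corresponding incremental flow and raise the $x^r_e$ capacities along the cheapest augmenting structure. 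Stop when $\sum_r z_{ir}\ge 1$. (2) \emph{Feasibility.} The inner constraints hold by construction at every instant, since flow is augmented in lock-step with $z_{ir}$ and capacities are raised to dominate flow; the outer constraint holds at the end of each request's processing. (3) \emph{Cost bound.} Bound the total primal cost increase during the processing of $\siti$ by $O(\log n)$ times the dual increase $\Delta y_i$, using the standard charging: the multiplicative update's $\log$(ratio of largest to smallest coefficient) term contributes one $\log n$ factor; here the relevant ratio is polynomial in $n$ because costs and lengths can be quantized and the number of roots / edges is $\poly(n)$. Summed over requests, primal $\le O(\log n)\cdot\sum_i y_i \le O(\log n)\cdot \mathrm{LP}^\star$. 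Combined with Lemma~\ref{lp:rel} ($\mathrm{LP}^\star = O(\alpha\log n)\cdot\opt(\cI)$) and one more $\log n$ from the height reduction / discretization already folded into the graph $G^\upp, G^\downn$, this gives the claimed $O(\alpha\log^3 n)\cdot\opt(\cI)$. The randomization enters either in the standard scaling trick that turns the fractional-covering guarantee into an expected bound with clean constants, or in a random perturbation of costs to avoid degeneracies; hence the statement is about a randomized algorithm.

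\textbf{The main obstacle} I anticipate is step (3) done honestly for the \emph{nested} LP: a single increment of $z_{ir}$ forces a coordinated change across the whole inner flow LP for root $r$ (potentially rerouting flow of other pairs already assigned to $r$, or raising many $x^r_e$'s at once), so the per-step primal cost is not simply ``one tight constraint's coefficient'' as in a flat covering LP. The fix is to observe that each inner-LP is a min-cost flow polytope, so the \emph{marginal} cost of an extra unit of $z_{ir}$ equals the cost of a cheapest augmenting flow in the residual graph (a shortest-path-like quantity), which is exactly the dual quantity $y_i$ is racing against; this is why ``each step of our algorithm solves many auxiliary min-cost max-flow problems.'' One must verify that augmenting along min-cost residual flows keeps the total spent on capacities $x^r_e$ and on lengths within the same $O(\log n)$ charging — this amounts to a careful but standard amortized/potential argument showing the incremental primal cost is at most the incremental dual cost times the multiplicative-update overhead, after which everything composes. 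A secondary subtlety is ensuring polynomial running time: the continuous process is discretized into polynomially many events (each event = some $(i,r)$ going tight or the covering constraint being met), and each event triggers a polynomial min-cost-flow computation, giving overall $\poly(n)$ time per request.
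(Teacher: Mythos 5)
Your high-level picture is right — continuous multiplicative updates on the $z,x$ variables, auxiliary min-cost flow subroutines to push the $f$ variables in lock-step, and a $\log$-type charging to bound cost — and you correctly anticipate that the nested LP structure is the main obstacle. But there is a genuine gap in how you propose to close the cost bound, and the paper in fact goes a rather different route.

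\textbf{The gap: you posit a feasible dual but never construct one.} Your step (3) concludes ``primal $\le O(\log n)\cdot\sum_i y_i \le O(\log n)\cdot\mathrm{LP}^\star$,'' which requires the $y$-variables to be (approximately) dual-feasible for the composite LP. That LP has covering constraints in the outer layer but \emph{packing} constraints ($f\le x$) and flow-conservation equalities in the inner layers; its dual is not a clean packing LP, and the usual Buchbinder--Naor ``dual goes up by $1$, primal goes up by $O(\log)$'' argument does not transfer unexamined. You flag this obstacle yourself but resolve it with ``a careful but standard amortized/potential argument,'' which is precisely the non-standard part. The paper never writes down the dual at all. Instead it bounds the total running time $\tau$ of the continuous process by charging directly against the \emph{optimal junction-tree solution} from Lemma~\ref{lem:opt-jn}: it introduces an \emph{edge clock} for every pair $(e,r)$ with $e\in T^\upp_r\cup T^\downn_r$ and a \emph{terminal clock} for every pair $\siti$, shows at least one clock ticks at every instant, and bounds each edge clock by $O(c_e\log n)$ (the $x^r_e$ variable can only multiply from $1/n^5$ to $1$) and each terminal clock by $O(\log n)\max(\ell(P^*_{s_i}),\ell(P^*_{t_i}))$ (the $z_{ir}$ variable can only multiply from $1/n^5$ to $1$). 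This is a combinatorial dual-fitting argument against the integral junction-tree solution, not against the LP dual, and it is where the paper's Lemma~\ref{lem:tau-vs-opt} does its work. If you want to go your route, you would owe the reader an explicit feasible dual and a per-step primal/dual ratio bound, and I do not see that this is easier than the clocks.

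\textbf{Your accounting of the three $\log n$ factors is off.} One of the three comes from the layered structure of $G^\upp,G^\downn$: Lemma~\ref{lem:dpdt} shows the instantaneous rate of increase of the primal objective is $O(\log n)$ because each of the $\Theta(\log n)$ levels is an $s_i$--$r$ cut, and the total tight $x$-mass crossing any one level is at most $O(1)$ (it is dominated by $\sum_r z_{ir}<1$, plus the cost constraint $\sum_e\ell_e g^r_{(e,\cdot)}\le z_{ir}$ bounds the length contribution). You attribute this instead to a ``ratio of largest to smallest coefficient'' in the multiplicative update, which is a different $\log n$ (it is the one that bounds the \emph{total ticking time} of a clock, not the \emph{rate} of primal increase); you then wave at ``one more $\log n$ from the height reduction / discretization already folded into the graph,'' but the height-reduction $\log n$ is already the one inside $\mathrm{LP}^\star=O(\alpha\log n)\opt$ (Lemma~\ref{lp:rel}), so as written you are double-counting it and missing the level-cut factor. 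Relatedly, you never mention the \emph{cost constraint} $\sum_e\ell_e\,g^r_{(e,s_i)}\le z_{ir}$ (and similarly for $t_i$) imposed in the auxiliary min-cost max-flow; without it the length part of the objective has no rate bound, and Lemma~\ref{lem:dpdt} fails. This constraint, together with the observation that routing along the optimal paths $P^*_{s_i},P^*_{t_i}$ with $\Delta=z_{ir}/\max(\ell(P^*_{s_i}),\ell(P^*_{t_i}))$ is always feasible when no edge on those paths is tight, is what makes the terminal-clock bound work (Lemma~\ref{lma:terminal-clock}). I would encourage you to rebuild the cost bound around the two explicit ingredients the paper uses: (a) levels-as-cuts to bound the rate, and (b) edge/terminal clocks charged to the junction-tree solution to bound the time.
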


\noindent
In the remainder of the subsection, we prove the above theorem. We
remark that the overall reduction uses~\Cref{thm:lponline} as a
black-box and the time-constrained reader can skip the proof and move
to Section~\ref{sec:roundLP}.

To simplify the exposition, we assume that we know the cost of an
optimal solution $\opt(\cI)$ up to a constant factor, using a standard doubling trick\footnote{Suppose our online algorithm has a competitive ratio of $\alpha$, and the true cost of an optimal solution is $c^*$. Then, we begin with an initial guess for the optimal cost, and run the online algorithm assuming this guess is the correct estimate for $c^*$. If our online algorithm fails to find a feasible solution of cost at most $\alpha$ times the current guess, we double our guess and run the online algorithm again. Eventually, our guess will exceed the optimal cost $c^*$ by at most a factor of two, and for this guess, the algorithm will compute a feasible solution of cost at most $2 \alpha c^*$. Moreover, since our guesses double every time, the total cost of the edges bought by the online algorithm over all the runs across different guesses is at most $\Theta(\alpha c^*)$.}.  Once we
know $\opt(\cI)$, by re-scaling all the parameters in the problem, we
may assume that it equals $1$. Next, we delete any edge
in $G^\upp$ or $G^\downn$ that has cost $c_e$ or length $\ell_e$
larger than $1$ as such edges cannot participate in any optimal
solution. Subsequently, we initialize all $x^r_e$ variables to $1/n^5$.
Likewise, we initialize all $z_{ir}$ variables to $1/n^5$ and also send
an initial flow of $1/n^5$ from each $s_i$ to $r$ in $G^\upp$ on an
arbitrary flow path from $s_i$ to $r$ and likewise from $r$ to $t_i$
in $G^{\downn}$. This setting ensures that the cost of the initial
solution is $o(1)$.

In the following, we partition the edge set $E$ into disjoint sets
$\{E_j: 0\leq j \leq h-1\}$, where $E_j$ denotes the set of edges
in $E$ between levels $j$ and $j+1$.  Furthermore, for clarity of
exposition, we describe a `continuous-time' version of the algorithm
where we increase the variables as a function of time. We note that
this algorithm can easily be discretized for a
polynomial-time\footnote{The polynomial is in the size of the input
to this algorithm, which for some of our algorithms/results is
quasi-polynomial in the size of the actual problem instance as
stated in the introduction.}
implementation. The algorithm is given as Algorithm~\ref{alg-fractional}.

\begin{algorithm}
	When a terminal-pair \siti arrives, we update the LP solution using
	the following steps:
	\begin{enumerate}[(1)]
		\item Let $R_i$ denote the set of roots $r$ in level $0$
		such that $s_i$ is connected to $r$ in $G^\upp$ and $r$ is
		connected to $t_i$ in $G^\downn$. For each $r \in R_i$,
		initialize a flow of value $1/n^5$ using any arbitrary flow path
		from $s_i$ to $r$ in $G^\upp$ and likewise from $r$ to $t_i$ in
		$G^\downn$. Also set $z_{ir} = 1/n^5$ for these roots.
		\item Repeat the following while $\sum_{r \in R_i} z_{ir} < 1$:
		\begin{enumerate}[(a)]
			\item Call an edge $e \in E$ {\bf tight} for root $r$ if $x^r_e =
			f^r_{(e,s_i)}$ or $x^r_e = f^r_{(e,t_i)}$.
			\item {\bf Edge Update:} For all {\em tight} edges $e \in E$,
			{\bf update} $x^r_e$ at the rate $\frac{d x^r_e}{d t} :=
			\frac{x^r_e}{c_e}$.
			\item {\bf Flow Update:} Solve the following min-cost max-flow
			problem for each $r \in R_i$: maximize $\Delta$ such that
			\begin{itemize}
				\item[-]  there exists a flow $\{ g^r_{(e,s_i)} \}$ sending
				$\Delta$ units of flow from $s_i$ to $r$ in $G^\upp$,
				\item[-] there exists a flow $\{ g^r_{(e,t_i)} \}$ sending
				$\Delta$ units of flow from $r$ to $t_i$ in $G^\downn$,
				\item[-]  {\em Capacity constraints}: $g^r_{(e,s_i)} \leq\medskip
				\noindent
				{\bf The Algorithm (Algorithm~\ref{alg-fractional}).}
				x^r_e/c_e$ and $g^r_{(e,t_i)} \leq x^r_e/c_e$ for all tight
				edges $e$,
				\item[-]  {\em Cost constraint:}  $\sum_{e} \ell_e \cdot
				g^r_{(e,s_i)} \leq z_{ir} $ and $\sum_{e} \ell_e \cdot
				g^r_{(e,t_i)} \leq z_{ir}$.
			\end{itemize}
			\item {\bf Update} $f^r_{(e,s_i)}$ at the rate $\frac{d
			f^r_{(e,s_i)}}{d t} := g^r_{(e,s_i)}$, and $f^r_{(e,t_i)}$ at
			the rate $\frac{d f^r_{(e,t_i)}}{d t} = g^r_{(e,t_i)}$ for all
			$e$, and update $z_{ir}$ at the rate $\frac{d z_{ir}}{d t} =
			\Delta$.
		\end{enumerate}
	\end{enumerate}
\caption{Online Fractional Algorithm for \eqref{lp1:eq1}}
\label{alg-fractional}
\end{algorithm}

We increase the $x$ variables on tight edges at a rate
inversely proportional to their cost, similar to the well-known
online set cover algorithm~\cite{AlonAABN09}.  However, the ``flow
constraints" are not pure packing (or covering) constraints and there
is no general-purpose way of handling them.  Indeed, we determine the rate of
increase of the flow variables by solving an
auxiliary min-cost max-flow subroutine which routes
incremental flows of equal value from $s_i$ to $r$ in $G^\upp$ and
from $r$ to $t_i$ in $G^\downn$ respecting capacity constraints (i.e., for
edges that are tight, the incremental flow is at most the rate of
increase of $x$).  This maintains feasibility in the inner LP.
Moreover, to bound the rate of increase in
objective, we enforce that the total length of the incremental flow
is at most $z_{ir}$ (this is the ``cost" constraint in the min-cost max-flow
problem). We stress that the incremental flows from  the auxiliary problem
dictate the {\em rate} at which we increase the original flow variables in
the LP. The final solution is feasible since the algorithm runs
until the outer-LP constraint is satisfied.

First, note that the total cost of initialization is $o(1)$ over all
the edge and flow  variables.
So it suffices to bound the cost of the updates.  The
next lemma relates the total cost of the updates to the total time
 $\tau$
for which the algorithm runs, and the subsequent lemma bounds $\tau$
in terms of $\opt(\cI)$.
%

\begin{lemma} \label{lem:dpdt}
	The LP objective value at the end of the above
	algorithm is $O(\log n)\cdot \tau$, where $\tau$ is the (continuous) time
	for which the algorithm runs.
\end{lemma}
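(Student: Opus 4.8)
The plan is to bound the instantaneous rate at which the objective of \eqref{lp1:eq1} grows during the continuous-time execution of Algorithm~\ref{alg-fractional}, and then integrate over the total run-time $\tau$. The objective splits into an edge-cost term $\sum_{r}\sum_{e} c_e x^r_e$ and a routing-cost term $\sum_{i}\sum_{r}\sum_{e}\ell_e\bigl(f^r_{(e,s_i)}+f^r_{(e,t_i)}\bigr)$. At any instant the algorithm is processing a single pair \siti and only modifies the variables $x^r_e, f^r_{(e,s_i)}, f^r_{(e,t_i)}, z_{ir}$ for roots $r\in R_i$; moreover, by the guard of the loop in Step~(2) we always have $\sum_{r\in R_i} z_{ir}\le 1$. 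So it suffices to show that while processing any one pair the objective grows at rate $O(\log n)$, and then integrate over the whole timeline $[0,\tau]$.

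First I would handle the edge-cost term. By the \textbf{Edge Update} rule, $\frac{d}{dt}\sum_{r}\sum_{e}c_e x^r_e=\sum_{r\in R_i}\sum_{e\text{ tight for }r} c_e\cdot\frac{x^r_e}{c_e}=\sum_{r\in R_i}\sum_{e\text{ tight for }r} x^r_e$. The key point is the preprocessed structure from \Cref{thm:layering}: $G^\upp$ and $G^\downn$ are layered on $h=\Theta(\log n)$ layers with edges only between consecutive layers, with all terminals at level $h$ and all roots at level $0$. Hence, for every root $r$, flow conservation across the cut separating levels $j$ and $j+1$ forces $\sum_{e\in E_j} f^r_{(e,s_i)}=z_{ir}$ for each layer $j$, and likewise $\sum_{e\in E_j} f^r_{(e,t_i)}=z_{ir}$ in $G^\downn$. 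Since an edge $e$ that is tight for $r$ satisfies $x^r_e=f^r_{(e,s_i)}$ or $x^r_e=f^r_{(e,t_i)}$, partitioning the tight edges by layer and by which of the two equalities holds gives $\sum_{e\text{ tight for }r} x^r_e\le\sum_{j}\bigl(\sum_{e\in E_j}f^r_{(e,s_i)}+\sum_{e\in E_j}f^r_{(e,t_i)}\bigr)=2h\,z_{ir}$. Summing over $r\in R_i$, the edge-cost term grows at rate at most $2h\sum_{r\in R_i}z_{ir}\le 2h=O(\log n)$.

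Next I would handle the routing-cost term, which is easier. By the update step that sets $\frac{df^r_{(e,s_i)}}{dt}=g^r_{(e,s_i)}$ and $\frac{df^r_{(e,t_i)}}{dt}=g^r_{(e,t_i)}$, we get $\frac{d}{dt}\sum_{i}\sum_{r}\sum_{e}\ell_e\bigl(f^r_{(e,s_i)}+f^r_{(e,t_i)}\bigr)=\sum_{r\in R_i}\sum_{e}\ell_e\bigl(g^r_{(e,s_i)}+g^r_{(e,t_i)}\bigr)$. The \emph{Cost constraint} imposed in the auxiliary min-cost max-flow problem is precisely $\sum_{e}\ell_e g^r_{(e,s_i)}\le z_{ir}$ and $\sum_{e}\ell_e g^r_{(e,t_i)}\le z_{ir}$, so this rate is at most $2\sum_{r\in R_i} z_{ir}\le 2$. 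Adding the two bounds, the objective of \eqref{lp1:eq1} grows at rate $O(\log n)$ at every instant of the run; integrating over $[0,\tau]$ and noting that the initialization contributes only $o(1)$, which is subsumed, yields that the final objective value is $O(\log n)\cdot\tau$.

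The step I expect to be the crux is the edge-cost bound, specifically the use of the layered structure to conclude $\sum_{e\text{ tight for }r} x^r_e\le 2h\,z_{ir}$: without the height reduction of \Cref{thm:layering}, a naive charging would bound this sum only by (number of tight edges)$\times z_{ir}$, which could be $\Omega(n)\cdot z_{ir}$, and the competitive ratio would blow up polynomially; this is exactly why the preprocessing to $h=\Theta(\log n)$ levels is essential. The remaining ingredients --- that $\sum_{r\in R_i}z_{ir}$ stays below $1$ until the loop terminates, and that the continuous updates keep each $\{f^r_{(e,s_i)}\}$ a genuine flow of value $z_{ir}$ because each incremental $\{g^r_{(e,s_i)}\}$ has value $\Delta=\frac{dz_{ir}}{dt}$ --- are routine.
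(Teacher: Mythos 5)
Your proposal is correct and follows essentially the same argument as the paper's proof: bound the instantaneous rate of objective increase by $O(\log n)$, splitting into the $x$-cost part (controlled via the layered cut structure and the fact that tight $x$-values equal flow values across each level cut, summing to at most $z_{ir}$ per root per level) and the $f$-length part (controlled directly by the cost constraint of the auxiliary flow problem). The only cosmetic difference is that you sum the edge-cost bound over levels before roots rather than the reverse, which yields the identical $O(h)=O(\log n)$ conclusion.
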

\full{
\begin{proof}
\def\tight{\mathrm{tgt}}
	We show that at any time $t$, the rate of the increase of the
	LP objective value in the algorithm is at most $O(\log n)$; this proves the
	lemma.\footnote{We remark that the ``$\log n$'' corresponds to the
	number of levels in $G_h$ justifying the preprocessing step before
	the LP description.}  The objective increases because of increase
	in $x$ variables and flow variables $f$; we bound these separately.
	
	We first upper bound the objective increase due to the changes in
	the $x$ variables. Fix a level $j$ and let $E^{\tight}_{j}$ denote
	the set of tight edges in $E_{j}$ at time $t$. By definition,
	$\sum_{e\in E^{\tight}_j \cap G^\upp} x^r_e$ (resp., $\sum_{e\in
	E^{\tight}_j \cap G^\downn} x^r_e$) equals the total flow on these
	edges for the pair $(s_i, t_i)$.  Since the edges in $E_j$ form a
	cut separating $s_i$ from $t_i$, the total flow across this cut is
	at most $z_{ir}$.  Since $\sum_{r} z_{ir} < 1$, we have
	$\sum_r\sum_{e\in E^\tight_j} x^r_e < 2$.  Now, the rate of
	increase of each such tight edge is precisely $x^r_e/c_e$, which
	implies that the total rate of increase of the LP value due to
	the increase of $x$ is at most
	$$\sum_{r} \sum_{e \in E^\tight_{j}} c_e \frac{d x^r_e}{d t} \leq \sum_{r} \sum_{e \in E^\tight_{j}} x^r_e \leq 2.$$
	Summing over all levels gives the desired $O(\log n)$
	bound.
	
	Next, we upper bound the objective increase due to the changes in
	the $f$ variables. When these variables are updated, the total rate
	of increase of the objective due to the lengths of the $(s_i, r)$
	and $(r, t_i)$ flow paths is at most $z_{ir}$ --- this is precisely
	the ``cost" constraint in the auxiliary flow problem.  Hence the
	total rate of increase of flow lengths is at most $2$, completing
	the proof.
\end{proof}
}
\submit{
\begin{proofsketch}
	We show that at any time $t$, the rate of the increase of overall
	LP value by the algorithm is at most $O(\log n)$; this proves the
	lemma\footnote{We remark that the ``$\log n$'' corresponds to the
	number of levels in $G^\upp$ and $G^\downn$ justifying the
	preprocessing step.}.  The rate of
	increase of the LP objective value due to increases in $x$ is at most the total
	$x$-value on the tight edges of the current solution. For each $r$,
	each level is a cut, and so the total $x$-value on tight edges
	crossing a cut is at most the total flow crossing which is
	$z_{ir}$. Summing over all roots $r$, and then summing over all
	levels, we get an $O(\log n)$ increase in LP objective value due to
	$x$-variables.  The rate of increase of LP objective contribution
	of $f$-variables corresponding to any root $r$  is at most
	$z_{ir}$; this is maintained by the `cost' constraint in the
	algorithm. Since $\sum_r z_{ir} < 1$, we get that the contribution
	of this is $<1$.
\end{proofsketch}
}

\full{\medskip\noindent
Given the above lemma, we are left to relate $\tau$ to $\opt(\cI)$ in
order to complete the proof of Theorem~\ref{thm:lponline}.}

\begin{lemma}\label{lem:tau-vs-opt}
	The time duration $\tau$ of the above algorithm satisfies $\tau =
	O(\alpha \log^2 n)\cdot \opt(\cI)$.
\end{lemma}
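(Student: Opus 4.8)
The plan is to charge the total running time $\tau$ against the cost of the near-optimal \emph{layered} junction-tree solution produced by Lemma~\ref{lem:opt-jn}, by means of a potential-function (dual-fitting) argument. First I would fix that solution: a root set $R^*\subseteq V$, a partition $\Pi^*=\{\pi_r:r\in R^*\}$ of $\cX$, in-trees $T^\upp_r\subseteq G^\upp$ and out-trees $T^\downn_r\subseteq G^\downn$, of total objective value $C^*=O(\alpha\log n)\cdot\opt(\cI)$. For a request $\siti$, let $r^*=r^*(i)\in R^*$ be the root with $\siti\in\pi_{r^*}$, let $Q^\upp_i\subseteq T^\upp_{r^*}$ be the (unique, simple) $s_i\!\to\! r^*$ path in that tree, $Q^\downn_i\subseteq T^\downn_{r^*}$ the $r^*\!\to\! t_i$ path, and put $\ell(Q_i)=\sum_{e\in Q^\upp_i\cup Q^\downn_i}\ell_e$ (we may assume $\ell(Q_i)>0$, else $z_{i,r^*}$ saturates instantly; we may also assume none of these witness edges was removed by the edge-deletion preprocessing---e.g.\ by running the decomposition on the pruned graph, which does not change $\opt(\cI)$). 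Since $T^\upp_{r^*}$ connects $s_i$ to $r^*$ in $G^\upp$ and $T^\downn_{r^*}$ connects $r^*$ to $t_i$ in $G^\downn$, the root $r^*$ lies in the set $R_i$ that Algorithm~\ref{alg-fractional} operates on while processing $\siti$. I would then define
\[
\Phi \;=\; \sum_{r\in R^*}\ \sum_{e\in T^\upp_r\cup T^\downn_r} c_e\,\ln\!\big(n^5 x^r_e\big)
\;+\!\!\!\sum_{i:\ \siti\text{ arrived}}\!\!\! \ell(Q_i)\,\ln\!\big(n^5 z_{i,r^*(i)}\big),
\]
and aim to show (a) $\Phi$ is nondecreasing and never jumps, (b) $\Phi_{\mathrm{final}}-\Phi_{\mathrm{init}}=O(\log n)\cdot O(C^*)$, and (c) $\tfrac{d\Phi}{dt}\ge 1$ whenever the algorithm is processing a request; together these give $\tau\le\Phi_{\mathrm{final}}-\Phi_{\mathrm{init}}=O(\alpha\log^2 n)\cdot\opt(\cI)$.

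The bookkeeping for (a) and (b) is routine. All update rates are nonnegative, so each $x^r_e$ and $z_{ir}$ is nondecreasing and hence $\Phi$ is nondecreasing; when $\siti$ arrives, step~(1) sets $z_{i,r^*}=1/n^5$ and does not touch any $x$-variable, so the new summand equals $\ell(Q_i)\ln 1=0$ and $\Phi$ does not jump. For the end value, note that whenever an edge $e$ is tight for a root $r$, $x^r_e$ equals a flow value of the current request, hence $x^r_e\le z_{ir}\le 1$; since $x^r_e$ grows only while tight, $x^r_e\le 1$ throughout, and likewise $z_{ir}\le 1$ throughout. Thus every $x$-summand is at most $c_e\ln n^5=O(\log n)c_e$ and every $z$-summand at most $O(\log n)\ell(Q_i)$, so
\[
\Phi_{\mathrm{final}}-\Phi_{\mathrm{init}}\;\le\;O(\log n)\Big(\ \sum_{r\in R^*} c\big(T^\upp_r\cup T^\downn_r\big)\ +\ \sum_i \ell(Q_i)\ \Big)\;=\;O(\log n)\cdot O(C^*),
\]
because $\sum_i\ell(Q^\upp_i)$ is at most the total per-unit routing cost of $\{T^\upp_r\}$, which is part of $C^*$ (and symmetrically for $Q^\downn$).

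The heart is the progress bound (c). Consider a moment while the algorithm processes $\siti$. If some edge $e_0\in Q^\upp_i\cup Q^\downn_i$ is tight for $r^*$, then by the Edge-Update rule $x^{r^*}_{e_0}$ grows at rate $x^{r^*}_{e_0}/c_{e_0}$, contributing $c_{e_0}\cdot\tfrac1{c_{e_0}}=1$ to the derivative of the first sum defining $\Phi$; since every other term of $\tfrac{d\Phi}{dt}$ is nonnegative, $\tfrac{d\Phi}{dt}\ge1$. Otherwise, \emph{no} edge of $Q^\upp_i\cup Q^\downn_i$ is tight for $r^*$, so in the auxiliary min-cost max-flow problem for $r=r^*$ none of these edges carries a capacity constraint; routing $\Delta$ units of $s_i$-flow along $Q^\upp_i$ in $G^\upp$ and $\Delta$ units of $t_i$-flow along $Q^\downn_i$ in $G^\downn$ is feasible exactly when the cost constraints $\Delta\cdot\ell(Q^\upp_i)\le z_{i,r^*}$ and $\Delta\cdot\ell(Q^\downn_i)\le z_{i,r^*}$ hold, giving $\Delta_{r^*}\ge z_{i,r^*}/\ell(Q_i)$. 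Hence $z_{i,r^*}$ grows at rate $\Delta_{r^*}\ge z_{i,r^*}/\ell(Q_i)$, so its summand in $\Phi$ has derivative $\ell(Q_i)\cdot\Delta_{r^*}/z_{i,r^*}\ge1$, and again $\tfrac{d\Phi}{dt}\ge1$. Finally, between requests the algorithm is idle and $\Phi$ is constant, so $\tau=\int_{\text{active}}1\,dt\le\int\tfrac{d\Phi}{dt}\,dt=\Phi_{\mathrm{final}}-\Phi_{\mathrm{init}}\le O(\alpha\log^2 n)\cdot\opt(\cI)$.

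I expect the ``no tight edge'' case of the progress bound to be the main obstacle: one has to use precisely the structure of the auxiliary subroutine---uncapacitated non-tight edges, together with the per-flow length budget $z_{i,r^*}$ enforced by its cost constraint---to certify $\Delta_{r^*}\ge z_{i,r^*}/\ell(Q_i)$, and to check that $r^*\in R_i$ and that the witness trees survive preprocessing. A secondary, but routine, point is that the potential must be set up so the buying cost $c_e$ of a tree edge shared by many requests is charged only once; this is automatic above since $\Phi$ tracks the single monotone variable $x^{r^*}_e\le 1$ rather than a sum over all requests routed through $e$.
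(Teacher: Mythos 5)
Your proof is correct and is essentially the same argument as the paper's, merely repackaged: the paper uses explicit ``charging clocks'' (one edge clock per pair $(e,r)$ with $e\in T^\upp_r\cup T^\downn_r$, one terminal clock per request), shows that at every instant at least one clock ticks, and bounds each clock's total ticking time; your potential $\Phi=\sum_{r,e} c_e\ln(n^5 x^r_e)+\sum_i \ell(Q_i)\ln(n^5 z_{i,r^*(i)})$ is exactly the running total of those clocks, with $d\Phi/dt\ge 1$ encoding the ``some clock ticks'' invariant, the $c_e\ln(n^5x^r_e)$ term encoding the $O(c_e\log n)$ edge-clock bound, and the $\ell(Q_i)\ln(n^5 z_{i,r^*})$ term encoding the $O(\log n)\cdot\ell(Q_i)$ terminal-clock bound. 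The case split (some witness-path edge tight for $r^*$ versus none tight) and the feasibility argument in the auxiliary flow problem match the paper step for step; the only cosmetic difference is that the paper uses $\max(\ell(P^*_{s_i}),\ell(P^*_{t_i}))$ where you use the sum $\ell(Q_i)$, which changes nothing up to constants.
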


We will need several new definitions and auxiliary lemmas in order to
prove \Cref{lem:tau-vs-opt}.  Recall from~\Cref{lem:opt-jn} that we
can assume that the solution that we are comparing against is the set
of junction-trees defined by $T^\upp_r$ and $T^\downn_r$ for $r \in
R^*$.  Also, recall that the terminal-pairs are partitioned by the
groups $\Pi^* = \{\pi_r: r \in R^*\}$.  For every $\siti  \in \cX$,
let $P^*_{s_i}$ denote the path from $s_i$ to the root $r$ in
$T^\upp_r$ such that $\siti \in \pi_r$.  Similarly, let $P^*_{t_i}$
denote the path from $r$ to $t_i$ in $T^\downn_r$.  Let $ \ell(P) = \sum_{e \in P} \ell_e$ for any path $P$. \Cref{lem:opt-jn}
asserts that
\begin{equation}\label{eq:007}
	\sum_{r \in R^*} \left(\sum_{e \in T^\upp_r \cup T^\downn_r} c_e +
	\sum_{\siti \in \cX} \left( \ell(P^*_{s_i})  + \ell(P^*_{t_i})
	\right)\right) = O(\alpha\log n)\cdot \opt(\cI)
\end{equation}
To bound $\tau$ against the optimal junction-tree solution, we use
two sets of \emph{charging clocks}:
\begin{itemize}
	\item We maintain an {\em edge clock} on every $(e,r)$ pair such
	that $e \in T^\upp_r$ or $e \in T^\downn_r$, i.e., if $e$ is used
	by the optimal junction-tree solution in the single-source (or
	single-sink) instance corresponding to $r$.  In particular, note
	that if an edge $e$ is in multiple junction-trees, then it has a
	separate clock for each such tree.
	\item We maintain a  {\em terminal clock} on every terminal-pair
	$\siti \in \cX$.
\end{itemize}
The crucial invariant that we maintain is the following: {\em at any
time instant $t$, {\em at least} one clock ``ticks," i.e., augments
its counter at unit rate.} The overall goal would then be to bound
the total time for which all the charging clocks can cumulatively
tick.

First, we describe the rule for the ticking of the clocks. Fix a time
$t$, and let the terminal-pair \siti be the pair that is active at
time $t$.  Let $r$ denote the root vertex which \siti has been
assigned to in the optimal junction-tree solution
from~\Cref{lem:opt-jn}, i.e., $\siti \in \pi_r$. Now, consider the
flow-paths $P^*_{s_i}$ in $G^\upp$ and $P^*_{t_i}$ in $G^\downn$. We
can have one of two situations:
\begin{itemize}
	\item[-] If any variable $x^r_e$ is tight for any edge $e \in
	P^*_{s_i}  \cup P^*_{t_i}$ at time $t$, then the edge clock on the
	pair $(e,r)$ ticks at time $t$. If there are multiple such edges,
	then all the corresponding clocks tick.
	\item[-] Otherwise, both paths are free of tight edges. In this
	case, the terminal clock for \siti ticks at time $t$.
\end{itemize}

\begin{lemma}
\label{lma:edge-clock}
	For any pair $(e,r)$ such that $e \in T^\upp_r \cup T^\downn_r$,
	its edge clock ticks for $O(c_e \log n)$ time.
\end{lemma}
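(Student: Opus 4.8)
The plan is to follow the single variable $x^r_e$ through the run of Algorithm~\ref{alg-fractional}, exploiting the fact that $x^r_e$ grows \emph{exponentially} (with ``time-constant'' $c_e$) exactly while the edge clock on $(e,r)$ is ticking. First I would observe that, by the ticking rule, the edge clock on $(e,r)$ ticks at a time $t$ only when $e$ is \emph{tight} for root $r$ with respect to the active terminal-pair \siti --- this is the first case of the ticking rule, and the root $r$ in question indeed lies in $R_i$ because $P^*_{s_i}$ connects $s_i$ to $r$ in $G^\upp$ and $P^*_{t_i}$ connects $r$ to $t_i$ in $G^\downn$. By the Edge Update rule, whenever $e$ is tight for $r$ the variable $x^r_e$ increases at rate $x^r_e/c_e$, so $\frac{d}{dt}\ln x^r_e = 1/c_e$. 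Hence if the clock on $(e,r)$ ticks for a total of $\tau_{e,r}$ time, then $\ln x^r_e$ increases by at least $\tau_{e,r}/c_e$ over the whole run of the algorithm.

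It then remains only to bound the total increase of $\ln x^r_e$ by $O(\log n)$; equivalently, to show $x^r_e$ always lies in $[1/n^5, 1]$ (recall that after rescaling $\opt(\cI)=1$). The lower bound is immediate: $x^r_e$ is initialized to $1/n^5$ and is only ever increased. For the upper bound $x^r_e \le 1$ --- which is the one genuinely technical point --- I would argue that $x^r_e$ is increased only during the processing of some terminal-pair $i$ with $r \in R_i$, and only at instants when $e$ is tight for $r$, i.e.\ when $x^r_e$ equals $f^r_{(e,s_i)}$ or $f^r_{(e,t_i)}$. Since $G^\upp$ and $G^\downn$ are layered, hence acyclic, any flow in them of value $v$ decomposes into source-to-sink paths, so the flow on any single edge is at most $v$; thus $f^r_{(e,s_i)} \le z_{ir}$ and $f^r_{(e,t_i)} \le z_{ir}$. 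Finally, throughout the processing of terminal-pair $i$ the loop maintains $\sum_{r' \in R_i} z_{ir'} \le 1$, so $z_{ir} \le 1$. Therefore at every instant where $x^r_e$ is strictly increasing we have $x^r_e \le 1$, and since $x^r_e$ is continuous, non-decreasing, and starts below $1$, it never exceeds $1$.

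Combining the two parts gives $\tau_{e,r}/c_e \le \ln 1 - \ln(1/n^5) = 5\ln n$, so the edge clock on $(e,r)$ ticks for at most $5 c_e \ln n = O(c_e\log n)$ units of time, which is the claim. The step I expect to need the most care is the upper bound $x^r_e \le 1$: it is the place where the acyclicity of the layered graphs $G^\upp, G^\downn$ and the loop invariant $\sum_{r'} z_{ir'} \le 1$ are both essential; everything else is a one-line exponential-growth estimate.
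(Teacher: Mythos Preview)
Your proof is correct and follows essentially the same approach as the paper: both arguments track the exponential growth of $x^r_e$ (at rate $1/c_e$ in the log) during ticks and bound the total growth by the range $[1/n^5,1]$. You are simply more explicit than the paper about why $x^r_e$ never exceeds $1$, spelling out the chain $x^r_e = f^r_{(e,\cdot)} \le z_{ir} \le 1$ via acyclicity of the layered graphs and the loop invariant, whereas the paper dismisses this step with ``clearly, $e$ cannot be a tight edge \ldots\ once $x_e$ reaches $1$.''
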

\full{
\begin{proof}
	Notice that $x^r_e$ is initialized to $1/n^5$ for all roots $r$,
	and increases at the rate
	\begin{equation}
		\frac{d x^r_e}{ d t} = \frac{x^r_e}{c_e}
	\end{equation}
	\emph{at all times when the edge clock on $(e, r)$ ticks}. To see
	why, consider a time $t$ when the clock on $(e,r)$ ticks, and let
	$\siti$ denote the active terminal-pair at time $t$. It must be
	that
	(i) \siti has been assigned to root $r$ in $\Pi^*$, and
	(ii) either $x^{r}_e = f^{r}_{(e,s_i)}$ or $x^{r}_e = f^{r}_{(e,t_i)}$.
	But in this case, we increase such variables at rate $x^{r}_e/c_e$
	in our algorithm (Step (2a)).
	Therefore, we can infer that the value of $x^{r}_e$ would be  $1$
	after the edge clock on $e$ has ticked for time $O(c_e \log n)$.
	But clearly, $e$ cannot be a tight edge for any subsequent
	terminal-pair \siti once $x_e$ reaches 1; therefore, the edge clock
	on $(e,r)$ ticks for $O(c_e \log n)$ time overall.
\end{proof}
}
\submit{
\begin{proofsketch}
	Whenever the $(e,r)$ clock ticks, edge $e$ is tight,
	in which case our algorithm increases $x^r_e$ at rate $\frac{dx^r_e}{dt}  =
	x^r_e/c_e$. Since all variables are initialized at $\frac{1}{n^5}$ and cannot
	exceed $1$, the number of ticks is 
	$O(c_e\ln n)$.
\end{proofsketch}
}

\begin{lemma}
\label{lma:terminal-clock}
	For every terminal-pair $\siti$ connected by the optimal
	junction-tree solution through the root vertex $r$, the total time
	for which the terminal clock ticks is at most $O(\log n)\cdot \max
	(\ell(P^*_{s_i}), \ell(P^*_{t_i}))$.
\end{lemma}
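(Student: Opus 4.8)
The plan is to analyze what happens when the terminal clock for $\siti$ ticks, which by definition occurs only at times $t$ when $\siti$ is the active pair and \emph{both} paths $P^*_{s_i}$ (in $G^\upp$) and $P^*_{t_i}$ (in $G^\downn$) are free of tight edges. The key observation is that when no edge of $P^*_{s_i}$ is tight for root $r$, the capacities $x^r_e/c_e$ on these edges are all strictly positive and, crucially, the flow $\{f^r_{(e,s_i)}\}$ along $P^*_{s_i}$ has slack, so the min-cost max-flow subroutine in Step (2c) can push additional flow along $P^*_{s_i}$ (and symmetrically along $P^*_{t_i}$). The only binding constraint on how much flow can be pushed along these specific paths is then the cost constraint $\sum_e \ell_e g^r_{(e,s_i)} \le z_{ir}$ (and likewise for $t_i$). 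Routing $\Delta$ units along $P^*_{s_i}$ costs $\Delta \cdot \ell(P^*_{s_i})$, so the cost constraint permits $\Delta \ge z_{ir}/\ell(P^*_{s_i})$; pushing flow along both paths simultaneously (one in $G^\upp$, one in $G^\downn$, which do not interfere) gives $\Delta \ge z_{ir}/\max(\ell(P^*_{s_i}), \ell(P^*_{t_i}))$. Since the subroutine maximizes $\Delta$, the actual augmentation rate is at least this much.

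Next I would use this to track the growth of $z_{ir}$. Whenever the terminal clock ticks, $\frac{dz_{ir}}{dt} = \Delta \ge z_{ir}/\max(\ell(P^*_{s_i}), \ell(P^*_{t_i}))$. This is exponential growth in $z_{ir}$ with rate constant $1/\max(\ell(P^*_{s_i}),\ell(P^*_{t_i}))$. Since $z_{ir}$ is initialized to $1/n^5$ and the whole process for pair $i$ halts once $\sum_{r'} z_{ir'} \ge 1$ (in particular once $z_{ir} \ge 1$), the terminal clock can tick for at most $\ln(n^5) \cdot \max(\ell(P^*_{s_i}),\ell(P^*_{t_i})) = O(\log n)\cdot\max(\ell(P^*_{s_i}),\ell(P^*_{t_i}))$ time, which is exactly the claimed bound. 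I should be slightly careful that $r$ here is the \emph{specific} root to which $\siti$ is assigned in $\Pi^*$, so $r \in R_i$ (the root is reachable from $s_i$ in $G^\upp$ and reaches $t_i$ in $G^\downn$, since $T^\upp_r$ and $T^\downn_r$ witness this), hence $z_{ir}$ is indeed one of the variables being incremented by the algorithm for this arrival.

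The main obstacle I anticipate is making rigorous the claim that the max-flow subroutine actually achieves augmentation rate $\Delta \ge z_{ir}/\max(\ell(P^*_{s_i}),\ell(P^*_{t_i}))$ on the relevant root $r$. One must verify: (a) that $P^*_{s_i}$ is a valid $s_i$–$r$ path in $G^\upp$ with every edge having strictly positive residual capacity under the constraints of Step (2c) — this uses the "no tight edge" hypothesis, since for non-tight edges $e$ the capacity $x^r_e/c_e$ strictly exceeds the current $f^r_{(e,s_i)}$, and for this argument we only need the incremental capacity bound $g^r_{(e,s_i)} \le x^r_e/c_e$ which is automatically satisfiable in the limit; (b) the cost constraint is the genuinely tight one, giving the stated rate; and (c) simultaneity across $G^\upp$ and $G^\downn$ is fine because they are edge-disjoint graphs. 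A secondary subtlety is that the feasible $\Delta$ witnessed by routing only along $P^*_{s_i}$ and $P^*_{t_i}$ is a \emph{lower} bound on the optimum $\Delta$ chosen by the algorithm, so the exponential-growth differential inequality holds with the right constant; I would state this as the crux and then the integration is routine.
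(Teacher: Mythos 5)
Your argument matches the paper's proof: you exhibit a feasible solution to the Step (2c) auxiliary flow problem by routing along $P^*_{s_i}$ in $G^\upp$ and $P^*_{t_i}$ in $G^\downn$ with $\Delta = z_{ir}/\max(\ell(P^*_{s_i}),\ell(P^*_{t_i}))$, observe that only the cost constraint binds (the ``no tight edge'' hypothesis removes the capacity constraints on these paths), and integrate the resulting exponential-growth inequality $dz_{ir}/dt \geq z_{ir}/\max(\ell(P^*_{s_i}),\ell(P^*_{t_i}))$ from $z_{ir}=1/n^5$ to $1$ to obtain the $O(\log n)$ factor. One small clarification on your obstacle (a): the auxiliary problem imposes capacity bounds \emph{only} on tight edges, so non-tight edges carry no constraint whatsoever---there is no need to reason about residual capacity or limits---which is exactly the cleaner formulation the paper uses.
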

\full{
\begin{proof}
	Recall that if the terminal clock for $\siti$ is ticking at time
	$t$, then it must mean that no edge is tight on either path
	$P^*_{s_i}$ or $P^*_{t_i}$.  In this case, we show that the
	variable $z_{ir}$ increases at a fast enough rate, where $r$ is the
	root \siti is assigned to in the optimal junction-tree, i.e.,
	$\siti \in \pi_r$.  We show this by exhibiting a feasible solution
	to the auxiliary LP  considered in Step (2b) of the algorithm for
	root $r$.  Indeed, send the flow from $s_i$ to $r$ along
	$P^*_{s_i}$, and likewise from $r$ to $t_i$ along $P^*_{t_i}$.
	Also set the value of $\Delta$ to be $z_{ir}/\max(\ell(P^*_{s_i}),
	\ell(P^*_{t_i}))$. Clearly, on the edges of these flow paths, we do
	not have any capacity constraints since no edge is tight. So, the
	only constraints are the cost constraints which are satisfied by
	the choice of $\Delta$. Hence, the rate of increase of $z_{ir}$ is at
	least
	\begin{equation}
		\frac{d z_{ir}}{ d t} \geq \frac{z_{ir}}{\max(\ell(P^*_{s_i}),
		\ell(P^*_{t_i}))}
	\end{equation}
	{\em at all times when the terminal clock on \siti ticks}. This
	proves the claim, for otherwise the variable $z_{ir}$ would have
	reached $1$, and the algorithm would have completed processing
	\siti.
\end{proof}
}
\submit{
\begin{proofsketch}
	Whenever a terminal clock ticks, there are no tight edge on
	$P^*_{s_i}$ or $P^*_{t_i}$.  This implies that the auxiliary flow
	found by our online fractional algorithm in the Flow-Update step is
	at least $\frac{z_{ir}}{\max(\ell(P^*_{s_i}),\ell(P^*_{t_i}))}$ (a
	candidate solution is to send flow along the paths $P^*_{s_i}$ and $P^*_{t_i}$) which in turn
	implies $dz_{ir}/dt$ is at least that large. Since $z_{ir}$ is
	initialized at $1/n^5$ and cannot exceed $1$, the number of ticks
	is bounded by $O(\log n)\max(\ell(P^*_{s_i}),\ell(P^*_{t_i}))$.
\end{proofsketch}
}

Since at least one clock ticks at all times, the total time clocked
is at least $\tau$, the duration of the algorithm.
~\Cref{lma:edge-clock} and~\Cref{lma:terminal-clock} imply that
\full{\[
	\tau \leq O(\log n)\sum_{r \in R^*} \left(\sum_{e \in
	T^\upp_r \cup T^\downn_r} c_e + \sum_{\siti \in \cX} \left(
	\ell(P^*_{s_i})  + \ell(P^*_{t_i}) \right)\right)
\]
}
\submit{
	$\tau \leq O(\log n)\sum_{r \in R^*} \left(\sum_{e \in
	T^\upp_r \cup T^\downn_r} c_e + \sum_{\siti \in \cX} \left(
	\ell(P^*_{s_i})  + \ell(P^*_{t_i}) \right)\right)$
}
which together with \eqref{eq:007} completes the proof
of~\Cref{lem:tau-vs-opt}. Theorem~\ref{thm:lponline} follows from
\Cref{lem:dpdt} and \Cref{lem:tau-vs-opt}.

\subsection{Partial Online LP Rounding}\label{sec:roundLP}

We partially round the fractional solution returned
by~\Cref{thm:lponline} to obtain integral values for only the
outer-LP variables $z_{ir}$, i.e., each \siti pair is integrally
assigned to a root. The inner-LP variables $x$ and $f$ continue
to be fractional but represent {\em unit} fractional flow
from $s_i$ to $r$ and $r$ to $t_i$ for the \siti pairs assigned to
$r$. The partial rounding algorithm is given as Algorithm~\ref{alg-partialround}.


\begin{algorithm}[!h]
\begin{enumerate}[(1)]
\item {\bf Initialization:} Each root chooses a threshold $\tau_r \in [1/2n, 1/(3\log n)]$ uniformly at random.
\item {\bf Partial Rounding:} At each time, maintain the scaled solution $\tilde{x}^r_e = \min\left(1, x^r_e/\tau_r\right)$,  $\tilde{f}^r_{(\cdot)} = \min\left(1, f^r_{(\cdot)}/\tau_r\right)$. Also set $\tilde{z}_{ir} = 1$ if  $z_{ir} \geq \tau_r$.
\end{enumerate}
\caption{Online Partial Rounding Algorithm}
\label{alg-partialround}
\end{algorithm}

\begin{theorem}\label{thm:partialround}
	The scaled solution $(\tilde{x}, \tilde{f})$ component-wise
	dominates a feasible solution to the outer-LP, and the expected
	objective value of the scaled solution $(\tilde{x}, \tilde{f})$ is
	at most $O(\alpha \log^5 n) \cdot \opt(\cI)$.  Moreover, for each
	\siti, there exists at least one root $r$ such that $\tilde{z}_{ir}
	\geq 1$ with probability at least $1 - 1/n^3$.
\end{theorem}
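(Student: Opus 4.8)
The plan is to establish the three assertions in turn, treating the random thresholds $\{\tau_r\}_{r\in V}$ (each drawn independently and uniformly from $[a,b]$, with $a:=1/(2n)$ and $b:=1/(3\log n)$) as the only source of randomness, and using as input the fractional solution $(z,x,f)$ of value $O(\alpha\log^3 n)\cdot\opt(\cI)$ supplied by~\Cref{thm:lponline}.

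\emph{Expected cost.} Since $\min(1,\cdot)$ is contractive, $\tilde x^r_e\le x^r_e/\tau_r$ and $\tilde f^r_{(\cdot)}\le f^r_{(\cdot)}/\tau_r$ pointwise; as $x^r_e$ and $f^r_{(\cdot)}$ are fixed, linearity of expectation bounds the expected objective value of $(\tilde x,\tilde f)$ by $\E[1/\tau_r]$ times the objective value of $(x,f)$ (the $\tau_r$ are identically distributed). A one-line integral gives $\E[1/\tau_r]=\ln(b/a)/(b-a)=O(\log^2 n)$, using $b/a=O(n)$ and $b-a\ge b/2=\Omega(1/\log n)$ for large $n$; hence the expected objective of $(\tilde x,\tilde f)$ is $O(\log^2 n)\cdot O(\alpha\log^3 n)\cdot\opt(\cI)=O(\alpha\log^5 n)\cdot\opt(\cI)$.

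\emph{Structural domination.} Fix $(i,r)$ with $\tilde z_{ir}=1$, i.e.\ $z_{ir}\ge\tau_r$. By~\eqref{lp1:eq3a}, $\{f^r_{(e,s_i)}\}$ is an $s_i$--$r$ flow of value $z_{ir}$ in $G^\upp$, so $\{f^r_{(e,s_i)}/\tau_r\}$ is one of value $z_{ir}/\tau_r\ge1$; decomposing it into simple $s_i$--$r$ paths (dropping cycle components) and truncating the path weights to total exactly $1$ gives a sub-flow $\hat f_{(e,s_i)}$ with $\hat f_{(e,s_i)}\le f^r_{(e,s_i)}/\tau_r$ and $\hat f_{(e,s_i)}\le1$, hence $\hat f_{(e,s_i)}\le\tilde f^r_{(e,s_i)}$; moreover $\tilde f^r_{(e,s_i)}=\min(1,f^r_{(e,s_i)}/\tau_r)\le\min(1,x^r_e/\tau_r)=\tilde x^r_e$ by monotonicity of $\min(1,\cdot)$ and~\eqref{lp1:eq6}. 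The mirror argument in $G^\downn$ yields a unit $r$--$t_i$ flow $\hat f_{(e,t_i)}\le\tilde f^r_{(e,t_i)}$. Thus, for every demand $i$ assigned to a root $r$, the scaled inner variables $(\tilde x^r,\tilde f^r)$ component-wise dominate a feasible fractional unit-flow solution of~\eqref{sslablp} connecting $s_i$ to $r$ and $r$ to $t_i$; together with the integral assignment $\tilde z$ this is the dominated feasible solution claimed (its outer feasibility $\sum_r\tilde z_{ir}\ge1$ being exactly the third assertion).

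\emph{Coverage, and the main obstacle.} Since $(z,x,f)$ satisfies~\eqref{lp1:eq2} we have $\sum_{r\in V}z_{ir}\ge1$, and since $|V|\le n$ it follows that $\sum_r(z_{ir}-a)^+\ge\sum_r z_{ir}-na\ge 1-\tfrac12=\tfrac12$: a constant amount of $z$-mass sits strictly above the smallest admissible threshold. If some $z_{ir}\ge b$ then $\tilde z_{ir}=1$ surely; otherwise, by independence of the thresholds, $\Pr[\forall r:\tilde z_{ir}=0]=\prod_r\bigl(1-(z_{ir}-a)^+/(b-a)\bigr)\le\exp\!\bigl(-\tfrac{1}{b-a}\sum_r(z_{ir}-a)^+\bigr)\le\exp\!\bigl(-\tfrac{1}{2(b-a)}\bigr)\le\exp(-3\log n)\le 1/n^3$, using $b-a\ge1/(6\log n)$ (one may enlarge the constant in $b$ for extra slack). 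The cost and structural steps are routine; the one genuinely essential point is the displayed lower bound on $\sum_r(z_{ir}-a)^+$, which crucially exploits that there are only $n$ candidate roots and that the threshold interval starts at $1/(2n)$, so that the constraint $\sum_r z_{ir}\ge1$ cannot be exhausted by mass below the minimum threshold — precisely what lets the exponential tail reach $1/n^3$.
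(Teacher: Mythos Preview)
Your proof is correct and follows essentially the same three-part structure as the paper. The expected-cost bound is identical (both compute $\E[1/\tau_r]=O(\log^2 n)$ and multiply by the bound from \Cref{thm:lponline}), and the coverage argument is the same independent-rounding calculation, though the paper phrases it as a Chernoff bound whereas you write out the product $\prod_r(1-p_r)\le e^{-\sum_r p_r}$ directly; your additional observation that $\sum_r(z_{ir}-a)^+\ge 1/2$ because $|V|\le n$ and $a=1/(2n)$ is a clean way to handle the lower endpoint, which the paper glosses over.

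One cosmetic difference: for structural domination you take the scaled flow $f^r/\tau_r$ of value $\ge 1$, path-decompose, and truncate to a unit sub-flow $\hat f\le\tilde f^r$, whereas the paper argues by contradiction via min-cut (if some cut $Q$ had $\tilde f$-capacity $<1$ then no edge of $Q$ would be capped at $1$, forcing its capacity to be $\sum_{e\in Q}f^r_{(e,s_i)}/\tau_r\ge z_{ir}/\tau_r\ge1$). These are dual views of the same fact via max-flow min-cut, so neither buys more than the other.

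One small slip: in the coverage step you invoke ``$b-a\ge 1/(6\log n)$'', but to get $\frac{1}{2(b-a)}\ge 3\log n$ you need the reverse inequality $b-a\le 1/(6\log n)$; with the stated $b=1/(3\log n)$ you only obtain $n^{-3/2}$. As you note, adjusting the constant in $b$ (e.g.\ $b=1/(6\log n)$) fixes this without harming the $\E[1/\tau_r]=O(\log^2 n)$ bound, since $b-a=\Theta(1/\log n)$ suffices for both directions.
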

\full{
\begin{proof}
	Since each root $r$ chooses its threshold $\tau_r$ independently
	and uniformly at random from $[1/2n, 1/\log n]$, the probability
	that $\tilde{z}_{ir} = 1$ is at least $z_{ir} \log n$ (since
	$\tilde{z}_{ir} = 1$ if and only if $\tau_r \leq z_{ir}$). Since
	this is independent for different roots, a standard
	Chernoff-Hoeffding bound application (see, e.g.,~\cite{MotwaniR97})
	shows that each \siti pair has $\tilde{z}_{ir} = 1$ for some root
	$r$ with probability at least $1 - 1/n^3$.  Moreover, the expected
	value of any variable $x^r_e$ is given by
	\begin{equation*}
		\Expectation{\tilde{x}^r_e} \leq \int_{\tau_r=1/2n}^{\log n}
		\frac{x^r_e}{\tau_r} \log n d\tau_r \leq  O(\log^2 n) x^r_e.
	\end{equation*}
	A similar argument shows that the expected values of scaled flow
	variables are also bounded by $O(\log^2 n)$ times their values in
	the fractional solution. This shows that the expected objective
	value of the $(\tilde{x},\tilde{f})$ solution is at most $O(\log^2
	n)$ times the value of $(x,f)$; by~\Cref{thm:lponline}, the latter
	is at most $O(\alpha \log^3 n) \opt(\cI)$.  Combining these facts
	gives us the desired bound on the value of the scaled solution.

	It remains to show that the scaled solution dominates a feasible
	solution to the LP. To this end, fix some root $r$ and let $\cX_r$
	denote the set of \siti pairs for which $z_{ir}=1$. We need to show
	that installing capacities of $\{\tilde{f}^r_{(e,s_i)}\}$ on the
	edges can support unit flow from $s_i$ to $r$ in $G^\upp$ for all
	$\siti \in \cX_r$. Suppose for contradiction that there is is a cut
	$Q$ separating $s_i$ from $r$ of capacity strictly smaller than
	$1$. This implies that every edge $e \in Q$ must have
	$f^r_{(e,s_i)} \leq \tau_r$; otherwise, we would have an edge $e$
	with $\tilde{f}^r_{(e,s_i)} = 1$, which contradicts our assumption
	on the cut capacity. But then the value of the min-cut is precisely
	$\left(\sum_{e \in Q} f^r_{(e,s_i)} \right)/\tau_r$, which must be
	at least $1$ because of the following two observations:
	(i) we know that $\{f^r_{(e,s_i)}\}$ is a feasible flow from $s_i$
	to $r$ of value $z_{ir}$ and hence it must be that $\sum_{e \in Q}
	f^r_{(e,s_i)} \geq z_{ir}$, and
	(ii) since $\tilde{z}_{ir} = 1$, it must be that $z_{ir} \geq
	\tau_r$. This contradicts the assumption that the cut capacity is
	strictly smaller than $1$.
	A similar argument shows that the variables
	$\{\tilde{f}^r_{(e,t_i)} \}$ can support unit flow from $r$ to
	$t_i$ for every $\siti$ with $\tilde{z}_{ir} = 1$.
\end{proof}
}
\submit{
\begin{proofsketch}
	For the first part we need to consider only roots $r$ for which
	$z_{ir}$ is rounded to $1$ (i.e., $z_{ir} \geq \tau_r$); but for
	any such $r$, the corresponding $x,f$ variables have also been
	scaled by $1/\tau_r$ factor which is {\em more} than the scaling
	factor of $z_{ir}$.  To bound the expectation, note
		$\Expectation{\tilde{x}^r_e} \leq \int_{\tau_r=1/2n}^{\log n}
		\frac{x^r_e}{\tau_r} \log n d\tau_r \leq  O(\log^2 n) x^r_e.$
	The third part follows from a standard randomized rounding analysis
	for set cover.
\end{proofsketch}
}

\subsection{Wrapping up: Invoking the Single-Sink Online
Algorithm}\label{sec:wrap}
We are now ready to put all the pieces together and present our overall
online multicommodity buy-at-bulk algorithm as Algorithm~\ref{alg3}.
$\ssalg$ is the online algorithm for SS-BB alluded to in point (iii)
of the statement of~\Cref{thm:main}.


\begin{algorithm}[!h]
{\bf when} \siti arrives
\begin{enumerate}[(1)]
\item  {\bf update} the fractional solution of the composite LP using the algorithm (Algorithm~\ref{alg-fractional}, Section~\ref{sec:solveLP}).
\item  {\bf partially round} the solution using algorithm in (Fig.~\ref{alg-partialround}, Section~\ref{sec:roundLP}).
\item  {\bf if}$(\exists r: z_{ir} \geq 1)$:  send both $s_i$ and $t_i$  to the instance of \ssalg with root $r$.
\item {\bf else}: buy a trivial shortest path between $s_i$ and $t_i$ on the metric $(c+\ell)$ and route along this path
\end{enumerate}
\caption{Online Multicommodity Buy-at-Bulk Algorithm}
\label{alg3}
\end{algorithm}

Clearly the algorithm produces a feasible solution; so we now argue
about the expected objective value.  Fix an \siti pair.  Since the
probability that a terminal-pair is not assigned to a root is $\leq
1/n^3$ (by~\Cref{thm:partialround}), the expected total contribution
of such unassigned terminal-pairs is $\leq 1 = \opt(\cI)$.  For a
root $r$, let $\pi_r$ be the terminal-pairs assigned to $r$. We know
that $(\tilde{x},\tilde{f})$ restricted to $\pi_r$ dominates a
feasible solution in \eqref{sslablp}. Letting $LP_r$ denote the
contribution of this restriction to the overall LP value, we get $\sum_r
LP_r = O(\alpha \log^5 n)\cdot \opt(\cI)$. By the integrality gap
condition, we get that $\opt_r$, i.e. the {\em integral} optimum
objective value of the instance generated by $r$ and $\pi_r$, is at
most $\beta\cdot LP_r$. (Here we are using the fact
from~\Cref{thm:layering} that moving to the layered instance does not
increase the integrality gap.) The objective value of the solution
produced by $\ssalg$ is at most $\gamma \cdot \opt_r$, where $\gamma$
is the competitive ratio of $\ssalg$. Putting these observations
together, we conclude that the overall objective value of the
solution returned by the online algorithm is $O(\alpha\beta\gamma
\log^5 n) \cdot \opt(\cI)$.  This completes the proof
of~\Cref{thm:main}.

\section{Online Directed Buy-at-Bulk}
\label{sec:directed}

In this section, we prove~\Cref{thm:gen-bab}. A natural approach is
to use the reduction given by Theorem~\ref{thm:main}. To this end, we need to
establish the following: the existence of a junction-tree scheme with
a good approximation; a good upper bound on the integrality gap for
single-sink instances of the LP given in
Section~\ref{sec:preliminaries}; and an online algorithm for single-sink
instances with a good competitive ratio.

Extending the work of Chekuri et al.~\cite{ChekuriEGS11}, Antonakopoulos~\cite{Antonakopoulos10} shows the existence of a
junction-tree scheme with approximation $O(\sqrt{k})$.
Unfortunately, the integrality gap of the LP relaxation is not very
well understood even for Steiner tree instances; \cite{ZosinK02}
gives an $\Omega(\sqrt{k})$ lower bound\footnote{However, in these
instances, $n$ is exponentially large in $k$. So, they do not rule out a
$\polylog(n)$ upper bound.} on the integrality gap for the Steiner
tree problem and no suitable upper bound is known.
We overcome this difficulty as follows. Instead of working with
general graphs, we pre-process the instance and obtain a
tree-like graph for which we can show that the LP has a good
integrality gap.  Finally, we give the first non-trivial online
algorithm for the directed single-sink buy-at-bulk problem.
These results, together with our reduction (\Cref{thm:main}), 
imply the online algorithm for MC-D-BB.

We devote the rest of this section to the proof of
\Cref{thm:gen-bab}; to aid the reader, we restate the theorem below.

\begin{theorem}\label{thm:genbab}
	For any constant $\eps> 0$, there is a
	$O(k^{\frac{1}{2}+\eps}\polylog(n))$-competitive, polynomial time
	randomized online  algorithm for the general buy-at-bulk problem.
\end{theorem}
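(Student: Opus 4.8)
The plan is to instantiate the reduction of \Cref{thm:main} in the directed setting, which amounts to supplying three ingredients: (i) a junction-tree scheme with a good approximation factor $\alpha$, (ii) an upper bound $\beta$ on the integrality gap of \eqref{sslablp} on single-sink instances, and (iii) an online algorithm for directed single-sink buy-at-bulk with competitive ratio $\gamma$. For (i) I would simply invoke Antonakopoulos~\cite{Antonakopoulos10} (building on Chekuri~\etal~\cite{ChekuriEGS11}), which gives $\alpha = O(\sqrt{k})$ for directed buy-at-bulk. The real work is in (ii) and (iii): the natural flow LP has integrality gap $\Omega(\sqrt{k})$ already for directed Steiner tree~\cite{ZosinK02}, and no online algorithm was previously known for SS-D-BB.

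To evade the integrality-gap obstruction, the idea is to never apply the reduction to a general directed graph, but only to \emph{structured} graphs. Concretely, I would re-run the proof of \Cref{thm:main} from Section~\ref{sec:proofmain} with a \emph{constant} number of levels $h = \Theta(1/\eps)$ in the preprocessing step, in place of $h=\Theta(\log n)$. Applying the height-reduction theorem (\Cref{thm:layering}, which builds on~\cite{HelvigRZ01}) with this $h$ maps each single-sink (resp.\ single-source) piece of the optimal junction tree onto the $h$-layered graph $G^{\upp}_h$ (resp.\ $G^{\downn}_h$) at the cost of an $O(h\,k^{1/h}) = O(k^{\eps}/\eps)$ multiplicative factor in \Cref{lem:opt-jn}, while \Cref{thm:layering}(ii) still lets us pull back any solution on the layered graphs to $G$ with no increase in cost. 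Thus it suffices to solve, online, an MC-BB instance whose inner single-sink sub-instances all live on $O(1/\eps)$-layered graphs, paying an extra $O(k^{\eps})$ factor overall. For the rounding step it is convenient to additionally ``unfold'' such an $O(1/\eps)$-layered DAG into a tree of size $n^{O(1/\eps)}$, which turns each single-sink sub-instance into a \emph{group} buy-at-bulk problem on a tree; this unfolding is the source of the $n^{O(1/\eps)}$ running time.

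On these structured (layered, or tree-unfolded) instances both remaining ingredients become tractable. For (ii) I would round the flow LP one level at a time via a GKR-style recursive randomized rounding on the tree, losing a $\polylog(n)$ factor per level; since there are only $O(1/\eps)$ levels, the total loss is $(\log n)^{O(1/\eps)} = \polylog(n)$, so $\beta = \polylog(n)$. For (iii) I would build the online SS-D-BB algorithm in two stages mirroring Section~\ref{sec:proofmain}: first solve the flow LP online and fractionally using the continuous multiplicative-update scheme of Section~\ref{sec:solveLP} --- the flow constraints are again handled by auxiliary min-cost max-flow computations, and the clock-based argument gives an $O(\polylog n)$-competitive fractional solution against the LP optimum on the layered graph --- and then round this fractional flow online, again level by level at $\polylog(n)$ cost per level, reusing the online group-Steiner-tree rounding machinery of Alon~\etal~\cite{AlonAABN06,AlonAABN09}. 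This yields $\gamma = \polylog(n)$.

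Combining everything, the modified \Cref{thm:main} with $\alpha = O(\sqrt{k})$, $\beta = \polylog(n)$, $\gamma = \polylog(n)$, and the extra $O(k^{\eps})$ factor from height-reducing to constantly many levels, produces an online algorithm for MC-D-BB that is $O\big(\sqrt{k}\cdot k^{\eps}\cdot\polylog(n)\big) = O\big(k^{1/2+\eps}\polylog(n)\big)$-competitive and runs in time $n^{O(1/\eps)}$, which is exactly \Cref{thm:genbab}. I expect the main obstacle to be step (iii) --- designing the first online algorithm for directed single-sink buy-at-bulk --- and in particular the online \emph{rounding} of the directed flow LP on a layered graph, since rounding network-design LPs online is delicate even in undirected edge-weighted settings; bounding the integrality gap in (ii) is essentially the offline shadow of the same difficulty and should follow from the standard recursive rounding once the bounded-depth tree structure is in hand.
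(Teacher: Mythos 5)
Your proposal is essentially the approach the paper takes: height-reduce with $h=\Theta(1/\eps)$ (losing $O(hk^{1/h})=O(k^\eps)$), unfold the constant-height layered DAGs into trees of size $n^{O(h)}$, reduce each single-sink sub-instance to group Steiner tree, and then invoke GKR for the LP integrality gap $\beta$ and Alon~\etal's online group Steiner tree algorithm for $\gamma$, combined with Antonakopoulos's $O(\sqrt{k})$ junction-tree bound for $G$.

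The one organizational difference worth noting: rather than ``re-running'' \Cref{thm:main} with $h=\Theta(1/\eps)$ in place of $\Theta(\log n)$ and carrying $\alpha=O(\sqrt{k})$ through the modified argument, the paper performs the layering-and-unfolding as a preprocessing step to build a new graph $H$, proves that in $H$ \emph{every} feasible solution is automatically a junction-tree (Lemma~\ref{clm:any-feasible-solution-in-J-is-a-junction-tree}), so $\alpha=1$ for the preprocessed instance $\cJ=(H,\cX)$, and then applies \Cref{thm:main} completely as a black box to $\cJ$. The $O(k^{1/2+\eps})$ factor is charged once, up front, via $\opt(\cJ)\le O(hk^{1/h})\opt_\junc(\cI)\le O(k^{1/2+\eps})\opt(\cI)$ (Lemma~\ref{clm:optJtooptI} and Corollary~\ref{cor:ItoJ}). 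This avoids having to re-verify the internals of the reduction theorem with a different parameter. Your per-level accounting of the GKR loss as $(\log n)^{O(1/\eps)}$ is also more pessimistic than necessary (the paper applies GKR once to the tree of size $n^{O(h)}$ and gets $\beta=O(h\log n\log k)$), though for constant $\eps$ both yield $\polylog(n)$, so the final bound is unaffected.
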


\medskip
\noindent
{\bf Pre-processing step.} We first give our reduction from
general instances of the directed buy-at-bulk problem to much more
structured instances; the reduction loses a factor of
$O(k^{\frac{1}{2} + \eps})$ in the approximation ratio.

Let $h = \ceil{1/\eps}$. Given an instance $\cI = (G,\cX)$ of the
directed buy-at-bulk problem, we map it to a tree-like instance $\cJ
= (H,\cX)$ as follows. We start by applying
Theorem~\ref{thm:layering} to $G$ to obtain the graphs $G^\upp$ and
$G^\downn$; recall that these graphs are layered $(h+1)$-level
graphs with $n$ vertices (corresponding to the vertices in $G$) in
each level, and the levels are numbered $0, 1, \ldots, h$ with $0$
being called the root level. The graph $G^\upp$ has edges directed
from higher numbered levels to lower numbered levels, and $G^\downn$
has edges in the opposite direction. To facilitate the construction
of the graph $H$, we now create $n$ trees from $G^\upp$ and $n$ trees
from $G^\downn$ as follows.

For every ``root vertex'' $r$ at level $0$ in $G^\upp$ (resp.
$G^\downn$), the tree $T^\upp_r$ (resp. $T^\downn_r$) is constructed
as follows:
\begin{itemize}
	\item The $0^{\text{th}}$ layer of $T^\upp_r$ has just one 
	vertex -- the root $r$.
	\item For each $i$ such that $1 \leq i \leq h$, the $i$-th
	layer of $T^\upp_r$ contains all $(i + 1)$-length tuples
	$(r,v_1,\ldots,v_i)$ where $v_j$ is a vertex present in the
	$j$-th layer of $G^\upp$.
	\item For every edge $e = (v_{i},v_{i-1}) \in G^\upp$, there is an
	arc from $(r,v_1,\ldots,v_{i-1},v_i)$ to $(r,v_1,\ldots,v_{i-1})$
	inheriting the same cost $c_e$ and length $\ell_e$.
	%
\end{itemize}

\medskip\noindent
Therefore each tree $T^\upp_r$ is an in-arborescence, with all edges
directed towards the root. The tree $T^\downn_r$ is constructed
analogously except all edges are directed away from the root. In the
following, we use the term \emph{leaves} to refer to the vertices on
layer $h$ of these trees.

After performing the above operation for every root vertex $r$ in
level $0$ of $G^\upp_r$ and $G^\downn_r$, we have $2n$ trees. Then
the final graph $H$ is obtained as follows (see
Figure~\ref{fig:hgraph}). For each root $r \in V$, we first add an
arc from the root of $T^\upp_r$ to $T^\downn_r$ of zero cost and
length. Finally, for every \siti pair, we add the vertices $s_i$ and
$t_i$ to $H$ and the following arcs connecting them to the trees: for
each tree $T^\upp_r$, we add an arc from $s_i$ to each leaf of
$T^\upp_r$ of the form $(r,v_1,\ldots, v_h)$ with $v_h = s_i$; for
each tree $T^\downn_r$, we add an arc to $t_i$ from each leaf of
$T^\upp_r$ of the form $(r,v_1,\ldots, v_h)$ with $v_h = t_i$. These
new arcs have zero cost and length (i.e., $c_e = \ell_e = 0$).

This completes the construction of $H$. Note that the graph $H$ has
$n^{O(h)}$ vertices and a similar number of edges. Our new instance
is $\cJ = (H, \cX)$ and we will apply \Cref{thm:main} to this
instance.

\begin{figure}[t]
	\centering
	\includegraphics[scale=0.6]{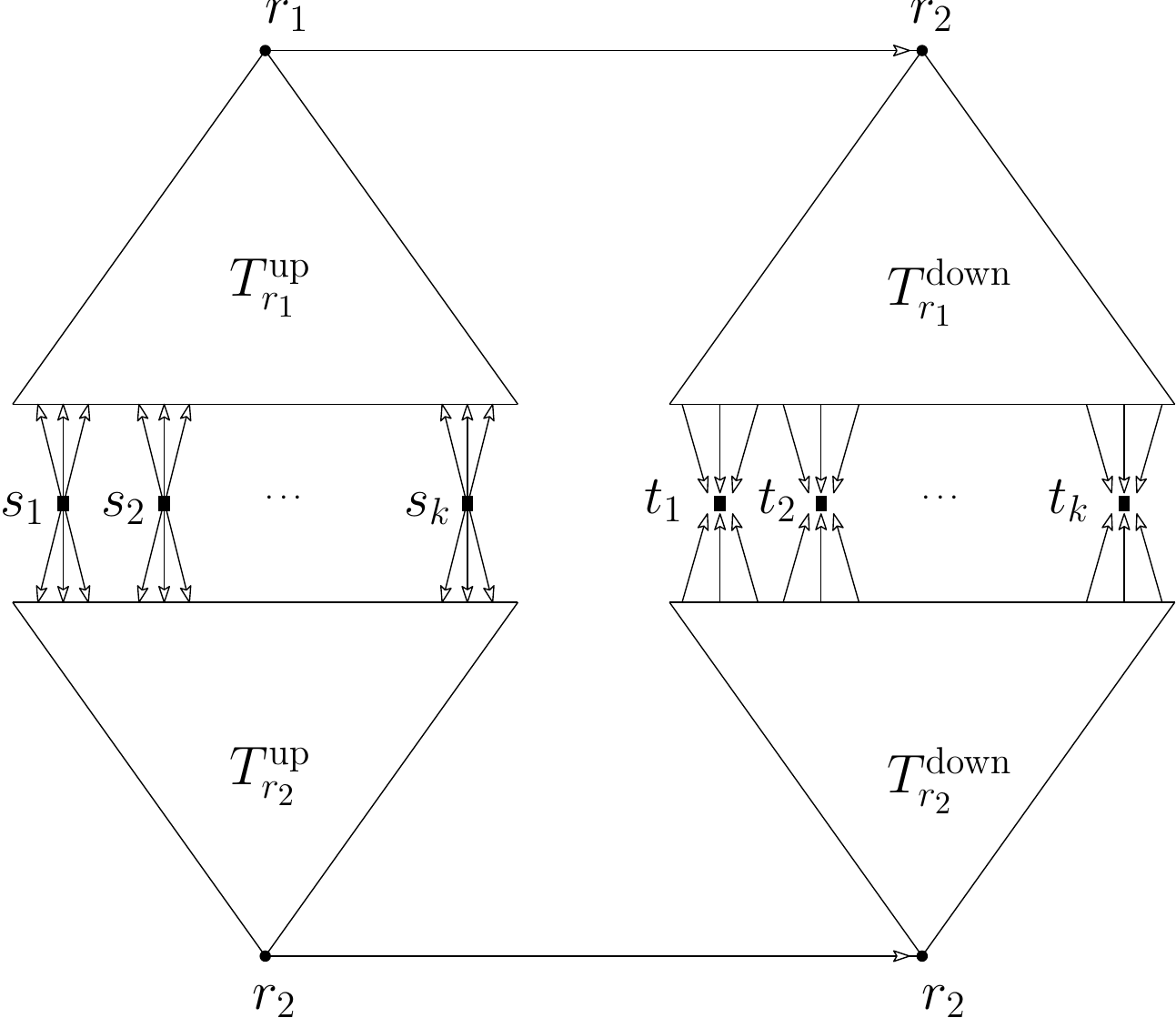}
	\caption{Construction of graph $H$.}
	\label{fig:hgraph}
\end{figure}

We first relate the objective values of $\cJ$ and $\cI$.

\begin{lemma}\label{clm:any-feasible-solution-in-J-is-a-junction-tree}
	Every feasible solution for $\cJ$ is a
	junction-tree solution.
\end{lemma}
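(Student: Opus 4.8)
The plan is to exploit the rigid, acyclic structure of the graph $H$. The core step is a structural lemma about $s_i$--$t_i$ paths: in $H$, every directed path from $s_i$ to $t_i$ enters some in-arborescence $T^\upp_r$ at one of its leaves, climbs along the (unique) leaf-to-root path of $T^\upp_r$ to its root $r$, crosses the zero-cost connector arc into the out-arborescence $T^\downn_r$, descends along a root-to-leaf path of $T^\downn_r$, and exits to $t_i$; in particular it passes through exactly one root vertex $r\in V$. I would first note that $H$ is a DAG --- its arcs respect the coarse ordering $\{\text{sources }s_j\}\prec\{\text{vertices of the }T^\upp\text{ trees}\}\prec\{\text{vertices of the }T^\downn\text{ trees}\}\prec\{\text{sinks }t_j\}$, and within this ordering $T^\upp_r$ has all arcs directed toward its root and $T^\downn_r$ all arcs directed away from its root --- so every directed path is simple.

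Next I would trace an arbitrary path $P_i$. The vertex $s_i$ is a newly added source with no incoming arcs whose only outgoing arcs go to leaves $(r,v_1,\dots,v_h)$ with $v_h=s_i$ of the trees $T^\upp_r$, so $P_i$ enters some $T^\upp_r$ at such a leaf. Inside $T^\upp_r$ every arc points toward the root, and the \emph{only} arc leaving $T^\upp_r$ at all is the zero-cost connector out of its root; since $P_i$ must eventually exit $T^\upp_r$ to reach $t_i$, it is forced to follow the unique leaf-to-root path of $T^\upp_r$ and then take the connector into $T^\downn_r$. Symmetrically, inside $T^\downn_r$ every arc points away from the root, and a leaf's only outgoing arcs go to $t$-vertices, so $P_i$ descends to some leaf $(r,v_1',\dots,v_h')$ and exits to the corresponding $t$-vertex; as $P_i$ ends at $t_i$, that vertex is $t_i$ and $v_h'=t_i$. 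Thus $P_i$ meets exactly one root vertex, the common root $r$ of the unique pair of trees it uses; I would fix the harmless convention that the group index is the root of $T^\upp_r$, since the connector to $T^\downn_r$ is free.

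Given the structural lemma, the conclusion is immediate. For each $r\in V$ set $\pi_r := \{(s_i,t_i)\in\cX : P_i\text{ passes through }r\}$; by the lemma this is well defined and $\{\pi_r:r\in V\}$ is a partition of $\cX$, and by construction every pair in $\pi_r$ has its path through $r$, which is precisely the definition of a junction-tree solution. For later use in the section I would also record that restricting the paths of $\pi_r$ to $T^\upp_r$ (resp. $T^\downn_r$) gives a feasible single-sink (resp. single-source) solution connecting $\{s_i:(s_i,t_i)\in\pi_r\}$ to $r$ (resp. $r$ to $\{t_i:(s_i,t_i)\in\pi_r\}$) within that arborescence, and that since the trees for distinct roots are vertex-disjoint except for the zero-cost terminal vertices and connector arcs, the objective of the solution equals the sum of the objectives of its per-root pieces exactly.

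I do not expect a real obstacle here; the only care needed is the bookkeeping in the path-tracing step --- verifying that the root of $T^\upp_r$ has no outgoing arc other than the connector into $T^\downn_r$ (so $P_i$ cannot escape $T^\upp_r$ early or wander into another tree) and that the freshly added $s_i,t_i$ are genuine source/sink vertices, which together force the canonical shape. The only notational wrinkle is whether the two tree-roots attached to a given $r$ are one vertex or two joined by a zero-cost arc; either reading works and I would simply pin one down.
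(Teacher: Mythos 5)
Your proposal is correct and takes essentially the same route as the paper: trace an arbitrary $s_i$--$t_i$ path in $H$ and observe that the rigid arborescence structure forces it through a unique root $r$ (leaf-to-root in $T^\upp_r$, connector arc, root-to-leaf in $T^\downn_r$), which yields the partition. The paper states this in two sentences; you have simply filled in the bookkeeping (acyclicity, no early exit from $T^\upp_r$, well-definedness of the group index), which is all sound.
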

\begin{proof}
	Note that any \siti path in $H$ has the following structure: $s_i$
	connects to a leaf node of $T^\upp_r$ for some $r \in V$, then
	continues to the root $r$, then traverse the edge to the root of
	$T^\downn_r$, then goes down to a leaf of $T^\downn_r$ and finally
	connects to $t_i$. Thus, for any feasible solution for $\cJ$, the
	\siti pairs can be partitioned based on the root $r$ through which
	they connect.
\end{proof}

\begin{lemma}\label{clm:JtoI}
	Any feasible solution for $\cJ$ can be mapped to a feasible
	solution --- in fact, a junction-tree solution --- for $\cI$ of equal
	or smaller objective value.
\end{lemma}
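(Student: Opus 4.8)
The plan is to combine the structural description of $\cJ$-solutions from \Cref{clm:any-feasible-solution-in-J-is-a-junction-tree} with the recovery guarantee of \Cref{thm:layering}(ii). Start with an arbitrary feasible solution $\{P_i\}_i$ for $\cJ$, where $P_i$ is an $s_i$--$t_i$ path in $H$; we may assume each $P_i$ is simple. By \Cref{clm:any-feasible-solution-in-J-is-a-junction-tree}, each $P_i$ passes through exactly one root $r$, so the terminal-pairs split into groups $\pi_r$ indexed by roots $r\in V$, and for $i\in\pi_r$ the path $P_i$ decomposes as a zero-cost arc into a leaf of $T^\upp_r$, a monotone path up $T^\upp_r$ to its root $r$, the zero-cost connector arc from the root of $T^\upp_r$ to the root of $T^\downn_r$, a monotone path down $T^\downn_r$ to a leaf, and a zero-cost arc into $t_i$.

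Next I would introduce the \emph{collapsing map} that sends a vertex $(r,v_1,\dots,v_i)$ of $T^\upp_r$ to the vertex $v_i$ in layer $i$ of $G^\upp$ (and the root of $T^\upp_r$ to $r$ in layer $0$), and the arc $(r,v_1,\dots,v_i)\to(r,v_1,\dots,v_{i-1})$ to the arc $(v_i,v_{i-1})$ of $G^\upp$, which by construction inherits the same cost $c_e$ and length $\ell_e$. Applying this map to the portion of $P_i$ inside $T^\upp_r$ yields a path $Q^\upp_i$ from $s_i$ to $r$ in $G^\upp$, and the key observation is the asymmetry of the map: restricted to the edges of one fixed $s_i$--$r$ path it is a bijection, so $\ell(Q^\upp_i)=\ell(P_i\cap T^\upp_r)$ exactly, while globally it is many-to-one on edges, so the total buying cost $\sum_e c_e$ over used edges can only decrease. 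Hence $\{Q^\upp_i : i\in\pi_r\}$ is a feasible single-sink solution in $G^\upp$ connecting $\{s_i : i\in\pi_r\}$ to $r$ whose objective is at most the objective contribution of $\{P_i\cap T^\upp_r : i\in\pi_r\}$ inside $H$. Symmetrically, collapsing $T^\downn_r$ produces a feasible single-source solution in $G^\downn$ connecting $r$ to $\{t_i : i\in\pi_r\}$ of no larger objective. Now invoke \Cref{thm:layering}(ii) for each $r$: from the $G^\upp$ solution recover a single-sink solution in $G$ for $\{s_i : i\in\pi_r\}$ and root $r$ of objective at most that in $G^\upp$, and likewise a single-source solution in $G$; concatenating these (and shortcutting any cycles, which only decreases the objective) gives, for each $i\in\pi_r$, an $s_i$--$t_i$ path through $r$ in $G$.

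The union of these paths over all $r\in V$ is, by definition, a junction-tree solution for $\cI$, and in particular a feasible solution. For the objective bound, note the newly added arcs of $H$ all have zero cost and length, so $\Obj_{\mathsf{2M}}$ of $\{P_i\}$ in $H$ equals $\sum_r$ of the objective contributions of the portions inside $T^\upp_r$ and $T^\downn_r$ (the trees for distinct roots are vertex-disjoint in $H$, so there is no double counting); each such contribution dominates the corresponding $G^\upp$/$G^\downn$ objective, which by \Cref{thm:layering}(ii) dominates the recovered $G$-objective; summing over $r$ shows the (per-root) junction-tree objective of the constructed solution in $G$ is at most $\Obj_{\mathsf{2M}}$ of $\{P_i\}$, and viewing that junction-tree solution as an ordinary feasible solution for $\cI$ can only decrease its $\Obj_{\mathsf{2M}}$ further, since an edge shared across single-sink trees is then paid for once. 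I expect the only delicate point to be this cost/length bookkeeping under collapsing — carefully separating the globally many-to-one edge map (which is what makes the fixed cost $\sum_e c_e$ go down) from its restriction to a single source-to-root path (which is a bijection, so per-path lengths are preserved) — together with matching the "feasible solution'' versus "junction-tree objective'' accounting against the remark in Section~\ref{sec:preliminaries}.
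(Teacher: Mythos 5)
Your proof takes essentially the same route as the paper's: invoke \Cref{clm:any-feasible-solution-in-J-is-a-junction-tree} to get the per-root decomposition, collapse edges of $T^\upp_r$/$T^\downn_r$ to edges of $G^\upp$/$G^\downn$ (noting this is many-to-one globally, so buying costs only drop, but injective on any single source-to-root path, so lengths are preserved), and then apply \Cref{thm:layering}(ii) to land in $G$. You simply spell out the cost/length bookkeeping that the paper's proof leaves implicit; the argument is correct.
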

\begin{proof}
	Note that from the previous lemma, any feasible solution $S$ in
	$\cJ$ is a junction-tree solution. Therefore, there is a partition
	$\Pi_S$ of the \siti pairs depending on which root vertex they are
	using to connect. Moreover, it follows from our construction of the
	trees in $H$ that any edge in $T^\upp_r$ (resp. $T^\downn_r$)
	corresponds to an edge in $G^\upp$ (resp.  $G^\downn$). Therefore,
	if we map each edge appearing in solution $S$ to its
	corresponding edge in $G^\upp$ or $G^\downn$, we obtain a mapping
	from each junction tree of $S$ rooted at $r$ to a junction tree in
	$G^\upp \cup G^\downn$ rooted at $r$ that is connecting the same
	subset of pairs. Finally, by \Cref{thm:layering}, each junction
	tree in $G^\upp \cup G^\downn$ rooted at $r$ can be mapped, without
	increasing the objective value, to a junction tree in $G$ rooted at
	$r$ that is connecting the same subset of pairs. This completes the
	proof of the lemma.
\end{proof}


\begin{lemma}\label{clm:optJtooptI}
	$\opt(\cJ) \leq O(h k^{{1}/{h}}) \opt_\junc(\cI)$, where
	$\opt_\junc(\cI)$ is the objective value of an optimal
	junction-tree solution for $\cI$.
\end{lemma}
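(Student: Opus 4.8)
The plan is to exhibit an explicit feasible solution for $\cJ = (H,\cX)$ whose objective value is at most $O(h k^{1/h}) \opt_\junc(\cI)$, starting from an optimal junction-tree solution for $\cI$. Let the optimal junction-tree solution for $\cI$ be given by a set of roots $R^*$, a partition $\Pi^* = \{\pi_r : r \in R^*\}$, and, for each $r \in R^*$, an optimal single-sink solution connecting $\{s_i : \siti \in \pi_r\}$ to $r$ together with an optimal single-source solution connecting $r$ to $\{t_i : \siti \in \pi_r\}$ in $G$. By definition $\opt_\junc(\cI)$ is the sum of the objective values of these $2|R^*|$ single-sink/single-source solutions. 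I would handle each root $r \in R^*$ independently, since in $H$ the trees $T^\upp_r, T^\downn_r$ attached to different roots are edge-disjoint (they share only the zero-cost/zero-length connector arcs and the terminal vertices), so there is no interaction across roots and it suffices to bound the cost incurred inside each pair of trees.

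The key step is the following: for a fixed root $r$, take the optimal single-sink solution in $G$ connecting the sources $\{s_i : \siti \in \pi_r\}$ to $r$, and invoke \Cref{thm:layering}(i) with $h = \lceil 1/\eps \rceil$. This produces a single-sink solution in $G^\upp$ connecting those sources (at level $h$) to $r$ (at level $0$) whose objective value is at most $O(h \, k_r^{1/h})$ times the cost of the optimal solution in $G$, where $k_r = |\pi_r| \le k$; since $x \mapsto x^{1/h}$ is increasing, this is at most $O(h k^{1/h})$ times the optimal cost. Now I need to lift this $G^\upp$-solution to the tree $T^\upp_r \subseteq H$. The tree $T^\upp_r$ is the standard ``unrolling'' of the layered DAG $G^\upp$ into a tree: a vertex at level $i$ of $T^\upp_r$ is a tuple $(r, v_1, \ldots, v_i)$ recording the full path history, and an arc of $T^\upp_r$ corresponding to $e = (v_i, v_{i-1}) \in G^\upp$ inherits $c_e, \ell_e$. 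A path in $G^\upp$ from a source $s_i$ (at level $h$) to $r$ lifts to a unique path in $T^\upp_r$ (ending at the leaf tuple $(r, v_1, \ldots, v_h)$ with $v_h = s_i$, which is then joined to $s_i$ by the zero-cost arc), and this lift preserves length exactly. The only loss is that two source-to-$r$ paths in $G^\upp$ that share an edge near the root but diverge near the leaves will, after unrolling, use distinct copies of that shared edge in $T^\upp_r$ — so the buying cost $\sum_e c_e$ of the lifted tree can only go up by a factor equal to the number of paths sharing an edge, which is at most $k_r \le k$. To avoid this blow-up I would instead lift edge by edge with the standard layered-tree argument: since the $G^\upp$-solution produced by \Cref{thm:layering} is itself (or can be taken to be, after a flow-decomposition / shortest-path cleanup) a tree in which each level-$j$ vertex is reached along a unique path, the unrolling introduces no duplication and the buying cost is preserved exactly. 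The same argument applied to $G^\downn$ and $T^\downn_r$ handles the sinks. Summing the two lifted trees for $r$, connecting them via the zero-cost/zero-length connector arc at the root, and then connecting $s_i, t_i$ via the zero-cost terminal arcs yields a valid $s_i$–$t_i$ path through $r$ in $H$ for every $\siti \in \pi_r$. Taking the union over all $r \in R^*$ gives a feasible solution for $\cJ$ of total objective value $\sum_{r \in R^*} O(h k^{1/h}) \cdot (\text{cost of the }r\text{-junction-tree in }G) = O(h k^{1/h}) \opt_\junc(\cI)$, which is exactly the claimed bound since $\opt(\cJ)$ is the minimum over all feasible solutions.

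The main obstacle is precisely the potential cost blow-up when unrolling $G^\upp$ into the tree $T^\upp_r$: a naive path-by-path lift could multiply the buying cost by $k$. The resolution is to note that \Cref{thm:layering} already delivers, for the single-sink problem, a solution on $G^\upp$ that we may assume is a (Steiner) in-tree — equivalently, that the support of the near-optimal layered solution can be taken to be acyclic with each vertex on a unique top-ward path — so that the unrolling map on its edge set is injective and cost is preserved verbatim; only the zero-cost connector and terminal arcs are added, contributing nothing. With that in hand the rest is bookkeeping. Note also that this lemma is what makes the pre-processing worthwhile: combined with \Cref{clm:JtoI} (which says any feasible solution for $\cJ$ maps back to a junction-tree solution for $\cI$ of no larger cost) and the fact that the junction-tree approximation factor of $\cI$ is $O(\sqrt{k})$ \cite{Antonakopoulos10}, we get $\opt(\cJ) \le O(h k^{1/h}) \cdot O(\sqrt k) \cdot \opt(\cI) = O(k^{1/2 + \eps}) \opt(\cI)$, and since $H$ is tree-like the integrality gap $\beta$ and an online single-sink algorithm on $H$ become tractable, so \Cref{thm:main} applies to $\cJ$ and yields \Cref{thm:gen-bab}.
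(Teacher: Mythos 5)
Your proof takes essentially the same route as the paper's: for each root $r \in R^*$, apply \Cref{thm:layering}(i) to get a solution in $G^\upp$ (resp.\ $G^\downn$) of cost $O(hk^{1/h})$ times the cost in $G$, then ``unroll'' that solution into the tree $T^\upp_r$ (resp.\ $T^\downn_r$) with no further loss, and sum over roots. You are in fact more explicit than the paper about the unrolling step --- the paper's assertion that ``the objective value of the solution $S_T$ in $T^\upp_r$ is the same as that of $S_G$'' silently relies on $S_G$ being an in-tree, which is indeed what Lemma~\ref{lma:directed2layered} produces, and you correctly identify and resolve this point. One small correction to your description of the pitfall: duplication under unrolling is \emph{not} caused by paths that ``share an edge near the root but diverge near the leaves'' --- such paths have identical $(r,v_1,\dots,v_{i-1})$-prefixes and therefore map to the \emph{same} tree edge. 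The problematic case is the reverse: two paths sharing an edge $(v_i,v_{i-1})$ but diverging \emph{between $v_{i-1}$ and $r$}, since then the prefixes $(v_1,\dots,v_{i-1})$ differ and the tree copies are distinct. Your fix (take $S_G$ to be an in-arborescence, so each vertex has a unique path to $r$) rules out precisely this case, so the argument stands; it is just the motivating picture that was reversed.
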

\begin{proof}
	Consider the optimal junction tree solution for $\cI$. Let the
	optimum partition be $\Pi = (\pi_{r_1},\ldots,\pi_{r_q})$ where
	$R^* = \{r_1, r_2, \ldots, r_q\}$ is the set of roots of the
	junction trees.  For each $r \in R^*$, let $X_r = \{ s_i \, : \,
	\siti \in \pi_r \}$ be the set of sources of $\pi_r$ and $Y_r$ be
	the corresponding sinks. From~\Cref{thm:layering}, we know that the
	optimum objective value of any one single-sink problem connecting
	$X_r$ to $r$ in $G^\upp$ is at most $O(h k^{1/h})$ times the
	objective value of the optimum solution connecting each source in
	$X_r$ to $r$. An analogous upper bound holds for every optimal
	single-source solution connecting $r$ to each sink in $Y_r$.
	Therefore, we get that the total sum of objective values of each of
	the junction trees in $G^\upp$ and $G^\downn$ is at most
	$O(hk^{1/h}) \opt_\junc(\cI)$. Now notice that any solution $S_G$
	for $(G^\upp,X_r)$ can easily be ``simulated" by a solution $S_T$ in
	the tree $T^\upp_r$: indeed, for every root-vertex path $(r, v_1,
	v_2, \ldots, v_i)$ in the solution $S_G$, include the edge from
	$(r, v_1, v_2, \ldots, v_i)$ to $(r, v_1, v_2, \ldots, v_{i-1})$ in
	$S_T$ (recall the vertices in $T^\upp_r$ exactly correspond to such
	root-vertex paths). It is easy to see that the objective value of
	the solution $S_T$ in $T^\upp_r$ is the same as that of $S_G$.
	Similarly, any solution for $(G^\downn, Y_r)$ can be simulated in
	$T^\downn_r$ with the same objective value. It follows that there
	is a feasible solution in $\cJ$ of objective value at most $O(h
	k^{1/h}) \opt_\junc(\cI)$.
\end{proof}

\begin{corollary}\label{cor:ItoJ}
	$\opt(\cJ) \leq O(k^{\frac12 + \eps})\opt(\cI)$.
\end{corollary}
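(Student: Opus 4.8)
The plan is to obtain Corollary~\ref{cor:ItoJ} by simply composing Lemma~\ref{clm:optJtooptI} with the known junction-tree approximation guarantee for directed buy-at-bulk, and then substituting the parameter $h$ chosen in the pre-processing step. First I would invoke the result of Antonakopoulos~\cite{Antonakopoulos10} (extending Chekuri~\etal~\cite{ChekuriEGS11}), which asserts that the junction-tree approximation factor of any instance of the directed buy-at-bulk problem is $O(\sqrt{k})$; equivalently, $\opt_\junc(\cI) \le O(\sqrt{k})\cdot \opt(\cI)$ for our instance $\cI$. Plugging this into the bound $\opt(\cJ) \le O(h\,k^{1/h})\cdot \opt_\junc(\cI)$ established in Lemma~\ref{clm:optJtooptI} yields
\[
	\opt(\cJ) \;\le\; O\!\big(h\, k^{1/h}\big)\cdot O(\sqrt{k})\cdot \opt(\cI)
	\;=\; O\!\big(h\, k^{\,1/h + 1/2}\big)\cdot \opt(\cI).
\]

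Next I would substitute the choice $h = \ceil{1/\eps}$ fixed at the beginning of the pre-processing step. Since $1/h \le \eps$, we have $k^{1/h} \le k^{\eps}$, and since $\eps>0$ is a constant, $h \le 1/\eps + 1 = O(1)$, so the leading factor $h$ is absorbed into the big-Oh. This gives $\opt(\cJ) \le O(k^{1/2+\eps})\cdot \opt(\cI)$, which is exactly the statement of the corollary. Combined with Lemmas~\ref{clm:any-feasible-solution-in-J-is-a-junction-tree} and~\ref{clm:JtoI}, this shows that $\cJ$ is a structured instance that is both a valid relaxation in cost and whose solutions map back to $\cI$, which is precisely what is needed to invoke~\Cref{thm:main} on $\cJ$.

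I do not expect any genuine obstacle here: the corollary is an immediate arithmetic consequence of Lemma~\ref{clm:optJtooptI} and the off-the-shelf $O(\sqrt{k})$ junction-tree bound. The one point worth flagging explicitly in the write-up is that the hidden constant depends on $\eps$ — both through the additive $h = O(1/\eps)$ factor and through any $\poly(1/\eps)$ slack incurred by rounding $1/\eps$ up to an integer in the layering of Theorem~\ref{thm:layering} — so the competitive ratio degrades as $\eps \to 0$. This is consistent with the hypothesis that $\eps$ is a fixed constant and with the $n^{O(1/\eps)}$ running time noted earlier, and so does not affect the final claimed bound.
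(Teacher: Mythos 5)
Your proof is correct and matches the paper's own argument: invoke Antonakopoulos's $O(\sqrt{k})$ junction-tree approximation bound, compose with Lemma~\ref{clm:optJtooptI}, and substitute $h = \ceil{1/\eps}$ so that $h\,k^{1/h} = O(k^\eps)$. The paper's proofsketch is just a terser version of the same chain of inequalities.
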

\begin{proofsketch}
	Antonakopoulos \cite{Antonakopoulos10} shows that there exists a
	junction-tree solution of cost at most $O(\sqrt{k}) \opt(\cI)$. The
	corollary follows from this work and the fact that we set 
	$h = \Theta(1 / \eps)$.
\end{proofsketch}

\smallskip\noindent
Now we are ready to show that the new instance $\cJ$ has the
properties required by the reduction, i.e., Theorem~\ref{thm:main}
can be applied. In the
following lemma, a single-source (resp. single-sink)
\emph{sub-instance} $\cJ' = (H, \cT, v)$ of $\cJ = (H, \cX)$ is a
single-source (resp. single-sink) instance of the following form: the
graph is the same as in $\cJ$, namely $H$; the set of terminals $\cT$
is a subset of the sources (resp.  sinks) of $\cX$; the terminals
$\cT$ need to be connected to a root vertex $v \in V(H) \setminus \{
s_i, t_i \colon i \in [k]\}$.

\begin{lemma}\label{lem:done}
	Let $\cJ$ be the instance described above. Let $\alpha, \beta,
	\gamma$ be as in the statement of Theorem~\ref{thm:main}. We have
	\begin{itemize}
		\item [(i)] The junction-tree approximation factor of $\cJ$ is
		$1$ (i.e., $\alpha = 1$).
		\item [(ii)] The integrality gap of \eqref{sslablp} for any
		single-sink/source sub-instance $\cJ' = (H, \cT, v)$  is $O(h\log
		n\log k)$ (i.e., $\beta = O(h \log n \log k)$).
		\item [(iii)] There is a $O(h^2\log^2 n\log k)$-competitive
		algorithm for any single-sink/source sub-instance $\cJ' = (H,
		\cT, v)$ (i.e., $\gamma = O(h^2 \log^2 n \log k)$).
	\end{itemize}
\end{lemma}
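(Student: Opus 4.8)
Part~(i) is immediate from \Cref{clm:any-feasible-solution-in-J-is-a-junction-tree}: since \emph{every} feasible solution for $\cJ$ --- in particular an optimal one --- is a junction-tree solution, $\opt_\junc(\cJ)=\opt(\cJ)$, so the junction-tree approximation factor of $\cJ$ equals $1$ and $\alpha=1$. For parts~(ii) and~(iii) the plan is to show that each single-sink (and, symmetrically, single-source) sub-instance $\cJ'=(H,\cT,v)$ is, up to zero-cost/length identifications and a fixed ``tail'' path, a \emph{group buy-at-bulk} problem on a single tree, and then to invoke offline and online group-Steiner-tree machinery. First I would observe that the arborescences $T^\upp_r$ and $T^\downn_r$ are vertex-disjoint except for the zero-cost root-to-root arcs, and that the source/sink attachment arcs have zero cost and length. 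Consequently, any path from a source $s_i\in\cT$ to $v$ must enter the \emph{unique} arborescence $T^\upp_{r_0}$ whose vertices $s_i$ can reach --- the one containing $v$ if $v\in T^\upp_{r_0}$, or the one feeding the arborescence $T^\downn_{r_0}$ containing $v$ --- pass through a single leaf of it, follow the unique tree path towards its root, and then possibly continue along a forced path to $v$.

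\noindent\textbf{Reduction to a tree.} Restricting to the relevant subtree $T_v$ rooted at $v$, solving $\cJ'$ becomes: choose for each $s_i\in\cT$ one leaf of $T_v$ labelled $s_i$ (a member of the group $g_i:=\{\text{leaves of }T_v\text{ labelled }s_i\}$), buy the arcs of the subtree connecting the chosen leaves to $v$ (the $c_e$-cost), and pay $\ell_e$ times the number of chosen leaves below $e$ (the $\ell_e$-cost); fractionally, the flow variables form a unit flow from the group $g_i$ to $v$ and $x_e$ dominates the flow on $e$. Thus \eqref{sslablp} on $\cJ'$ relaxes the standard group-Steiner LP on $T_v$ plus a linear load term, where $T_v$ has $n^{O(h)}$ vertices, at most $n^h$ leaves, and $|\cT|\le k$ groups; the single-source case is identical after reversing all arcs. (Cases where $v$ is a leaf or is unreachable from $\cT$ are trivial.)

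\noindent\textbf{Part~(ii).} It then suffices to round a feasible fractional solution of \eqref{sslablp} on $\cJ'$. I would apply the classical recursive, level-by-level randomized rounding for group Steiner tree on the tree $T_v$: an arc $e$ is selected with probability exactly its fractional $x$-value, so $O(\log N\log k)$ independent repetitions (with $N$ the number of leaves of $T_v$) connect every group with high probability, incur expected bought cost $O(\log N\log k)\sum_e c_e x_e$, and route each group $g_i$ on a path of expected length $O(\log N\log k)\sum_e\ell_e f_i(e)$. Summing, the expected objective is $O(\log N\log k)$ times the LP value; since $N=n^{O(h)}$ this gives $\beta=O(h\log n\log k)$, and by the recovery guarantee of \Cref{thm:layering} the integrality gap is not increased by mapping back to the original graph.

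\noindent\textbf{Part~(iii), and the main obstacle.} Here I would combine an online fractional algorithm for \eqref{sslablp} on $T_v$ with an online analogue of the rounding above, following the online group-Steiner-tree algorithm of Alon~\etal~\cite{AlonAABN06}. The fractional step processes each arriving terminal by raising $x_e$ on the currently tight arcs at rate $x_e/c_e$ while augmenting incremental flow of bounded length (the ``cost-constraint'' device already used in Algorithm~\ref{alg-fractional}); since $T_v$ has $O(h\log n)$ levels this is $O(h\log n)$-competitive for both the $c_e$- and $\ell_e$-terms. The online rounding monotonically maintains a random integral subtree in which each group, once it has arrived, is connected with probability $\Omega(1/\polylog(n))$; repeating $O(\log k)$ times gives connectivity w.h.p.\ and loses a further $O(h\log n\log k)$ factor in both cost and routed length, so that $\gamma=O(h^2\log^2 n\log k)$, with running time polynomial in $|V(H)|=n^{O(h)}=n^{O(1/\eps)}$. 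The hard part is~(iii): online rounding of fractional network-design solutions is not known in general, so passing to a \emph{tree} instance --- where the online group-Steiner machinery applies --- is what makes an online SS-BB algorithm possible here, and one must take care to carry the buy-at-bulk \emph{length} term (not merely the Steiner edge costs) through both the online fractional solver and the online rounding.
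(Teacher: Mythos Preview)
Your high-level strategy matches the paper's: part~(i) is immediate, and for (ii)--(iii) one reduces the single-sink sub-instance on $H$ to a group-Steiner-type problem on the subtree $T_v$ rooted at $v$. The difference is in how the \emph{length} metric is handled, and this is where your argument has a gap.

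The paper does not carry the two-metric (cost $c_e$ plus length $\ell_e$) structure through the rounding. Instead it \emph{eliminates} the length metric before invoking any group-Steiner machinery: for each leaf $u$ of $T_v$ reachable from some $s_i$, it hangs a new vertex $u'$ below $u$ and gives the dangling edge $uu'$ weight $\ell(P_u)$, the total $\ell$-length of the unique $u$-to-$v$ path in $T_v$. The groups are $S_i=\{u':s_i\text{ connects to }u\}$. Because each leaf is used by exactly one source, one checks that feasible (fractional or integral) solutions of \eqref{sslablp} on $\cJ'$ correspond bijectively, with equal objective value, to feasible solutions of the \emph{single-metric} group Steiner tree LP on this augmented tree. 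Now (ii) is literally the Garg--Konjevod--Ravi integrality-gap bound $O(\log N\log K)$ with $N\le n^h$ and $K\le k$, and (iii) is literally the Alon~\etal online group-Steiner-tree algorithm with ratio $O(\log^2 N\log K)$. No separate treatment of the length term is needed.

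Your route instead keeps the $\ell$-term as a load cost and asserts that the GKR rounding ``route[s] each group $g_i$ on a path of expected length $O(\log N\log k)\sum_e\ell_e f_i(e)$.'' This is the step that is not justified: GKR rounds edges with marginal probability $x_e$, not $f_i(e)$, and the leaf that ends up connecting group $i$ need not be sampled in proportion to $i$'s fractional flow; bounding $\E[\ell(P_{\ell_i})]$ by the fractional routing cost requires exactly the kind of coupling that the dangling-edge trick makes automatic. Similarly, for (iii) you sketch a bespoke online fractional-plus-rounding scheme and invoke ``$O(h\log n)$ levels'' of $T_v$; but $T_v$ has only $h$ levels (the $h\log n$ arises as $\log N$, not as a level count), and more importantly the whole custom construction is unnecessary once the instance is a single-metric group Steiner tree, since Alon~\etal applies directly. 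The fix is short: add the dangling edges with weight $\ell(P_u)$, observe the objectives coincide, and cite \cite{GKR} and \cite{AlonAABN04} verbatim.
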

\begin{proof}
	Property (i) follows
	from~\Cref{clm:any-feasible-solution-in-J-is-a-junction-tree}. Thus
	we focus on proving (ii) and (iii). In the following, we assume
	that we are working with a single-sink sub-instance $\cJ'$ of
	$\cJ$; the proof is very similar for single-source sub-instances
	and we omit it.

	In order to show (ii) and (iii), we will map the sub-instance $\cJ'
	= (H, \cT, v)$ to an instance of the {\em group Steiner tree}
	problem on a tree as follows.
	Since $v \in V(H) \setminus \{s_i, t_i \colon i \in [k] \}$, we
	have $v \in T^\upp_r \cup T^\downn_r$ for some $r$. We first
	consider the case when $v \in T^\upp_r$. In order to define the
	tree of the group Steiner tree instance, we start with the subtree
	$T_v$ of $T^\upp_r$ rooted at $v$. We add the following `dangling'
	edges to $T_v$ for each source $s_i \in \cT$: for each leaf vertex
	$u \in T^\upp_r$ such that $s_i$ has an edge in $T^\upp_r$ to $u$,
	we add a new vertex $u'$ to $T_v$ and connect it to $u$. Let $T$ be
	the resulting tree. We assign weights to the edges of $T$ as
	follows.  Each of the old edges $e \in T_v$ receives a weight equal
	to $c_e$.  Each of the new edges $uu' \in E(T) \setminus E(T_v)$
	receives a weight equal to $\ell(P_u)$,  where $P_u$ is the path of
	$T^\upp_r$ from $u$ to $v$. Finally, we define the following
	groups: for each source $s_i \in \cT$, we introduce a group $S_i$
	consisting of all the new vertices $u' \in V(T) \setminus V(T_v)$
	such that $s_i$ is connected to its partner $u$ in $T^\upp_r$ (that
	is, $T^\upp_r$ has an edge from $s_i$ to $u$). In the resulting
	group Steiner tree instance, the goal is to connect all of the
	groups $\cS = \{S_i \colon s_i \in \cT \}$ to the root $v$ using a
	minimum weight subtree of $T$; we let $(T, \cS, v)$ denote this
	instance. 
	
	Now the key claim is that the feasible solutions to the
	single-sink buy-at-bulk $(H, \cT, v)$ are in a one-to-one
	correspondence with feasible solution to the group Steiner tree
	instance $(T, \cS, v)$; moreover, the objective value of a
	solution to the former is equal to the weight of the solution to
	the latter. To see this, consider a feasible solution $\cP$ for the
	single-sink buy-at-bulk instance. Note that, for each source $s_i
	\in \cT$, $\cP$ has a path connecting $s_i$ to $v$; it follows from
	our construction of $H$ that this path consists of an edge from
	$s_i$ to a leaf $u$ of $T^\upp_r$ followed by the unique path in
	$T^\upp_r$ from $u$ to $v$. Thus we can construct a feasible group
	Steiner tree solution by connecting each group $S_i$ using the path
	of $T$ from $u'$ to $v$, where $u'$ is the partner of the leaf $u$
	of $T^\upp_r$ through which $s_i$ connects to the root in the
	buy-at-bulk solution. The weight of the edge $u'u$ captures the
	$\ell$-cost of $s_i$'s path and the weight of the path from $u$
	to $v$ captures the $c$-cost of $s_i$'s path. 

	Moreover, we can apply the same argument to fractional solutions to
	the two problems and show that there is a bijection between
	feasible fractional solutions to \eqref{sslablp} and feasible
	fractional solutions to the LP relaxation for group Steiner tree of
	Garg, Konjevod, and Ravi \cite{GKR}; as before, these corresponding
	solutions have the same objective values. Now the desired upper
	bound on the integrality gap of \eqref{sslablp} follows from the
	work of \cite{GKR} who showed that the integrality gap of the group
	Steiner tree LP is $O(\log N\log K)$ where $N = \max_i |S_i|$ is
	the maximum size of a group and $K$ is the number of groups. In our
	setting, $K \leq |\cX| \leq k$ and $N \leq n^h$. Therefore the
	integrality gap is $O(h \log{n} \log{k})$, which establishes
	property (ii).

	Moreover, notice that the above reduction can also be used to
	obtain an online algorithm for the single-sink (and single-source)
	sub-instances. Indeed, we simply use the online group Steiner tree
	algorithm of Alon \etal~\cite{AlonAABN04} which has a competitive ratio
	$O(\log^2 N\log K) = O(h^2 \log^2{n} \log{k})$. This proves
	property (iii).
\end{proof}

\medskip\noindent
Now we are ready to complete the proof of Theorem~\ref{thm:genbab}.
\vspace{-0.1in}
\begin{proofof}{Theorem \ref{thm:genbab}}
	Given the instance $\cI = (G, \cX)$, we construct the graph $H$ as
	described above; the time taken to do so is $n^{O(h)}$.  We 
	pass the instance $(H, \cX)$ to~\Cref{thm:main} and,
	using~\Cref{lem:done}, we obtain an online algorithm that, for any
	collection of pairs $\cX$ and any adversarial ordering of $\cX$,
	returns a solution of cost $O(\polylog(n)) \cdot \opt(\cJ)$. 
	By~\Cref{clm:optJtooptI} and~\Cref{cor:ItoJ}, we can map solutions
	for $(H, \cX)$ to solutions for $(G, \cX)$.
\end{proofof}

\medskip\noindent
We note that the approach described above also gives us new
online algorithms for the \emph{single-sink} buy-at-bulk problem on
directed graphs. For single-sink instances, the junction-tree
approximation is equal to $1$ and thus we save a factor of
$\sqrt{k}$. 

\begin{corollary} \label{cor:dir-st}
	There is a polynomial time $O(k^{\eps} \cdot \polylog(n))$-competitive
	online algorithm for the single-sink (or single-source) buy-at-bulk
	problem on directed graphs. The competitive ratio can be improved
	to $\polylog(n)$ if the running time can be quasi-polynomial in $n$.
\end{corollary}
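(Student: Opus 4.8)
The plan is to reuse the pre-processing and group Steiner tree reduction developed for the directed multicommodity case, observing that in the single-sink setting the expensive step disappears: the junction-tree approximation factor is trivially $1$, since all demands already share a common sink, so there is no $\sqrt{k}$ loss, and the only overhead is the $O(hk^{1/h})$ factor from height reduction. Concretely, given a single-sink instance $\cI = (G,\cT,t)$ (the single-source case is symmetric, working with reversed edges and $G^\downn$ in place of $G^\upp$), I would first apply \Cref{thm:layering} with $h = \ceil{1/\eps}$ to obtain the layered graph $G^\upp_h$; part (i) of that theorem gives $\opt(G^\upp_h) \leq O(hk^{1/h})\cdot \opt(\cI) = O(k^\eps)\cdot \opt(\cI)$, and part (ii) lets us pull any integral online solution on $G^\upp_h$ back to one on $G$ of no greater cost. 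Hence it suffices to solve the single-sink buy-at-bulk problem online on the layered graph.

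Next I would build, exactly as in the construction of $H$ in Section~\ref{sec:directed} but only for the single root $t$, the in-arborescence $T^\upp_t$ whose layer-$i$ vertices are the tuples $(t,v_1,\ldots,v_i)$, together with the zero-cost/length dangling edges attaching each source $s_i \in \cT$ to the leaves of the form $(t,v_1,\ldots,v_h)$ with $v_h = s_i$. By the single-root restriction of the arguments behind \Cref{clm:JtoI} and \Cref{clm:optJtooptI}, feasible solutions to the single-sink buy-at-bulk instance on $G^\upp_h$ are in cost-preserving correspondence with feasible solutions on the tree $T^\upp_t$. Then, following the reduction inside the proof of \Cref{lem:done}, I would convert this tree instance into a group Steiner tree instance $(T,\cS,t)$: the tree $T$ is $T^\upp_t$ with each old edge $e$ weighted by $c_e$, each dangling edge re-weighted by the $\ell$-length of the associated leaf-to-$t$ path, and group $S_i$ collects the dangling endpoints associated with $s_i$. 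As in \Cref{lem:done}, feasible (and feasible fractional) solutions of the two problems are in bijection with equal objective, and here $K \leq k$ while $N \leq n^h$.

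The final step is to run the online group Steiner tree algorithm of Alon~\etal~\cite{AlonAABN04}, whose competitive ratio is $O(\log^2 N \log K) = O(h^2 \log^2 n \log k)$, and to translate its output back through the chain of reductions. This yields an online algorithm for the original single-sink directed instance with competitive ratio $O(k^\eps \cdot h^2 \log^2 n \log k) = O(k^\eps \polylog(n))$ for any constant $\eps$, and running time $n^{O(h)} = n^{O(1/\eps)}$, which is polynomial. For the quasi-polynomial statement I would instead take $h = \Theta(\log k)$, so that $O(hk^{1/h}) = O(\log k)$ and the group-Steiner competitive ratio becomes $O(\log^3 k \log^2 n) = \polylog(n)$, while the tree has $n^{O(\log k)}$ vertices and the running time is quasi-polynomial in $n$.

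I do not expect a genuine obstacle here, since each ingredient — height reduction, the arborescence construction, the group Steiner tree reduction, and the online group Steiner algorithm — has already been established above. The only points that need care are the bookkeeping observations that (a) the junction-tree approximation factor really is $1$ in the single-sink case, so \Cref{thm:main} is not needed at all and no $\sqrt{k}$ is paid, and (b) both parts of \Cref{thm:layering} are invoked: part (i) to bound the layered-graph optimum against $\opt(\cI)$, and part (ii) to lift the resulting online solution back to $G$ without increasing its cost.
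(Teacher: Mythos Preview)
Your proposal is correct and uses the same key ingredients as the paper --- height reduction, the tree construction $T^\upp_t$, and the reduction to online group Steiner tree on a tree --- but it takes a more direct route than the paper's presentation. The paper derives the corollary by re-running the full MC-D-BB machinery of \Cref{thm:genbab} (constructing all of $H$, invoking \Cref{thm:main} with its composite LP, partial rounding, and single-sink subroutine) and then observing that for a single-sink instance $\opt_\junc(\cI)=\opt(\cI)$, so the $\sqrt{k}$ loss from \Cref{cor:ItoJ} disappears and only the $O(hk^{1/h})$ overhead of \Cref{clm:optJtooptI} survives. You instead bypass \Cref{thm:main} altogether: with only one sink there is only one relevant tree $T^\upp_t$, so you go straight from the layered graph to that tree to group Steiner tree, and run Alon~\etal's algorithm once. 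This is simpler and avoids the composite-LP overhead, at the cost of not being a literal corollary of the multicommodity argument; the paper's route has the expository advantage of reusing the existing proof verbatim. Both yield the same competitive ratio and running-time tradeoff, and your two parameter choices $h=\Theta(1/\eps)$ and $h=\Theta(\log k)$ match the paper's.
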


\section{Online Single-sink, Undirected, Node-weighted Buy-at-Bulk}
\label{sec:node-weighted}
In this section, we prove~\Cref{thm:undirected-node-bab}. Again, we
use our main theorem (\Cref{thm:main}) and reduce the
multi-commodity buy-at-bulk problem to the single-sink version of the
problem. We combine the reduction theorem with the following results
from previous work. Chekuri {\em et al.}~\cite{ChekuriEGS11} show the
existence of a junction-tree scheme with approximation factor $O(\log
k)$. Moreover, the natural LP relaxation for the single-sink
buy-at-bulk problem on graphs with node costs is also $O(\log
k)$~\cite{ChekuriHKS10}. For the single-source online algorithm, we
resort to the algorithms from the previous section for the more
general directed single-sink buy-at-bulk problem
(see~\Cref{cor:dir-st}). We now obtain the desired result by the following
parameter settings:
\begin{eqnarray*}
	\alpha & = & O(\log k) \\
	\beta & = & O(\log k)\\ 
	\gamma & = & O(\polylog(n)) \\
	T & = & n^{O(\log n)}.
\end{eqnarray*}

\section{Online Prize-Collecting Buy-at-Bulk} \label{sec:prizes}
In the prize-collecting version of the buy-at-bulk problem, each terminal pair \siti also comes with a {\em penalty} $q_i$
and the algorithm may choose not to serve this request and incur the penalty in the total cost.
We show that our online reduction framework (\Cref{thm:main}) can be easily modified to handle prize-collecting versions as follows.

\begin{theorem}\label{thm:pcbab}
Let $\cI$ be a buy-at-bulk instance and suppose the three conditions of~\Cref{thm:main} hold.
Then there is an $O(\alpha\beta\gamma\cdot\polylog(n))$-competitive online algorithm for the online,
{\em prize-collecting} buy-at-bulk problem on $\cI$ with arbitrary penalties.
\end{theorem}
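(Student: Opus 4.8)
The plan is to re-run the three-step argument of \Cref{thm:main} almost verbatim, after augmenting the composite LP \eqref{lp1:eq1} with a \emph{penalty option}. For each pair \siti I would introduce a variable $y_i \ge 0$, add $\sum_i q_i y_i$ to the objective, and replace the outer constraint \eqref{lp1:eq2} by $\sum_{r \in V} z_{ir} + y_i \ge 1$; it is convenient to picture $y_i$ as the amount of flow routed to a fresh ``penalty root'' reachable only by pair $i$ along a single edge of cost $q_i$ and length $0$, so that this edge is never tight. Writing $\opt_{\mathsf{PC}}(\cI)$ for the optimal prize-collecting value, one applies the junction-tree decomposition to the sub-instance of pairs \emph{served} by an optimal prize-collecting solution and accounts for the \emph{dropped} pairs by their penalties; this shows that the optimum of the augmented LP is $O(\alpha \log n)\cdot \opt_{\mathsf{PC}}(\cI)$, the analogue of \Cref{lp:rel}. (Here we use condition (i) of \Cref{thm:main} in the mildly strengthened form that the junction-tree approximation factor is at most $\alpha$ for every sub-collection of the pairs, which holds in all our applications, where $\alpha$ is a bound valid for every instance on the underlying graph.)

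Next I would run the online fractional solver Algorithm~\ref{alg-fractional} unchanged, except that while processing \siti we also raise $y_i$ at rate $\frac{d y_i}{d t} = y_i/q_i$ (initializing $y_i = 1/n^5$ like the other variables), and the stopping condition becomes $\sum_r z_{ir} + y_i \ge 1$. The cost-rate bound (\Cref{lem:dpdt}) is unaffected: the extra objective increase is $q_i \cdot \frac{d y_i}{d t} = y_i \le 1$ per unit time, so the objective stays $O(\log n)\cdot \tau$. For the time bound (\Cref{lem:tau-vs-opt}) I add a third family of charging clocks, one \emph{penalty clock} per pair, with the rule: if the optimal prize-collecting solution serves the active pair \siti via a root $r$, use the unchanged rule of \Cref{lma:edge-clock}/\Cref{lma:terminal-clock} (some edge clock on $P^*_{s_i}\cup P^*_{t_i}$ ticks, or else the terminal clock of \siti ticks); if instead the optimal solution drops \siti, then the penalty clock of \siti ticks. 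At least one clock still ticks at every instant. The edge- and terminal-clock bounds are exactly as before, while the penalty clock of a dropped pair \siti ticks for only $O(q_i \log n)$ time, because $y_i$ rises at rate $y_i/q_i$ whenever pair $i$ is active and cannot exceed $1$. Combining these with the prize-collecting analogue of \eqref{eq:007} gives $\tau = O(\alpha \log^2 n)\cdot \opt_{\mathsf{PC}}(\cI)$, hence the analogue of \Cref{thm:lponline} with $\opt(\cI)$ replaced by $\opt_{\mathsf{PC}}(\cI)$.

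For the rounding, apply Algorithm~\ref{alg-partialround} with each penalty root treated like an ordinary root (it draws its own threshold and we set $\tilde y_i = 1$ iff $y_i$ exceeds it). Since $\sum_r z_{ir} + y_i \ge 1$, the proof of \Cref{thm:partialround} carries over to give that, with probability at least $1 - 1/n^3$, each pair \siti is either integrally assigned to a genuine root $r$ (in which case the inner LP restricted to $r$ still dominates a feasible fractional single-sink solution) or has $\tilde y_i = 1$, and that the expected scaled objective, \emph{including} the $\sum_i q_i \tilde y_i$ term, is $O(\alpha \log^5 n)\cdot \opt_{\mathsf{PC}}(\cI)$. Finally I would modify Algorithm~\ref{alg3} so that when \siti arrives we update and round the LP, then: if some genuine root has $\tilde z_{ir} \ge 1$, send $s_i$ and $t_i$ to \ssalg with root $r$; else if $\tilde y_i \ge 1$, pay the penalty $q_i$; else (a ``failure'', with probability $\le 1/n^3$) spend $\min\bigl(q_i,\ \mathrm{dist}_{c+\ell}(s_i,t_i)\bigr)$, i.e., pay the penalty or buy a cheapest $s_i$--$t_i$ path, whichever is cheaper. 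The only new observation needed is that the failure cost of each pair is $O(\opt_{\mathsf{PC}}(\cI))$: if the optimal solution serves \siti then $\mathrm{dist}_{c+\ell}(s_i,t_i) = O(\opt_{\mathsf{PC}}(\cI))$, while if it drops \siti then $q_i \le \opt_{\mathsf{PC}}(\cI)$; since there are at most $n^2$ distinct pairs each failing with probability $\le 1/n^3$, the total expected failure cost is $o(\opt_{\mathsf{PC}}(\cI))$. Summing the contribution of the assigned pairs (invoking the integrality gap $\beta$ and the $\gamma$-competitive \ssalg exactly as in \Cref{sec:wrap}), the dropped pairs, and the failures yields the claimed $O(\alpha\beta\gamma \cdot \polylog(n))$ bound.

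I expect the routine but slightly delicate step to be the time bound: one must confirm that the penalty edge genuinely acts as an always-available, never-capacity-constrained flow path inside the auxiliary min-cost max-flow subroutine, so that $\frac{d y_i}{d t} = y_i/q_i$ indeed holds at every instant pair $i$ is active, and that the three clock families cover every time instant and are mutually consistent. Everything else is a transcription of the proof of \Cref{thm:main} with the dropped pairs charged to their penalties.
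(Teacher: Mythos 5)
Your proposal is correct and follows the same high-level strategy as the paper: augment the composite LP with a per-pair ``drop'' variable, modify the outer covering constraint, run the fractional solver with a small modification, round with the penalty option treated as an extra root, and invoke the single-sink algorithms on the pairs that get assigned to genuine roots. However, the two proofs encode the penalty option differently, and it is worth comparing. The paper embeds the penalty as a path of \emph{zero cost} and \emph{length} $q_i/2$ in each of $G^\upp$ and $G^\downn$ going to one shared penalty vertex ``$0$'' at the root level. Because such a zero-cost path is never tight (its $x$-variables can be raised to $1$ for free), the rate at which $z_{i,0}$ rises is governed by the existing cost constraint $\sum_e \ell_e g \le z_{i,0}$ in the auxiliary flow subroutine, giving $dz_{i,0}/dt = \Theta(z_{i,0}/q_i)$ without any new update rule or new clock family; the terminal-clock bound of \Cref{lma:terminal-clock} already yields $O(q_i \log n)$ ticking time for dropped pairs, and \Cref{lem:dpdt}, \Cref{thm:partialround}, and the wrap-up of \Cref{sec:wrap} go through verbatim once ``$0$'' is allowed as a root. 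You instead use a single edge of \emph{cost} $q_i$ and \emph{length} $0$, prescribe $dy_i/dt = y_i/q_i$ directly, and introduce a third ``penalty clock'' family. This is also correct, but note that your intuitive justification is backwards: with cost $q_i$ and length $0$ the penalty edge \emph{is} tight in the running LP (it is economical to keep $x_e = f_i(e) = y_i$), and the rate $y_i/q_i$ is exactly the tight-edge rate $x_e/c_e$, not a consequence of the edge being ``never capacity-constrained.'' It is the paper's length-based encoding that keeps the penalty path slack; your cost-based encoding gets the right rate for the opposite reason. This is a bookkeeping slip, not a gap -- your explicit update rule and the penalty-clock argument make the analysis self-contained. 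Finally, your failure case (spending $\min(q_i, \mathrm{dist}_{c+\ell}(s_i,t_i))$) is a more careful fallback than the paper bothers to spell out, and your justification that this quantity is $O(\opt_{\mathsf{PC}}(\cI))$ in both the ``served'' and ``dropped'' cases is correct.
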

\begin{proof}
We closely follow the proof of~\Cref{thm:main}. The first difference is in the LP-formulation.
Now, for each \siti pair we have an extra variable $z_{i,0}$ which indicates whether we choose to discard this pair (and pay the corresponding penalty)
or not. We point out the differences with \eqref{lp1:eq1}. The new objective function is
\[
 \text{minimize }   \qquad \sum_{r \in V} \sum_{e \in E} c_e x^r_{e} + \sum_{\siti \in \cX} \sum_{r \in R} \sum_{e \in E} \ell_e \left( f^r_{(e,s_i)} + f^r_{(e,t_i)} \right) + \sum_{\siti\in\cX} q_i z_{i,0}
\]
and \eqref{lp1:eq2} is replaced by
\[
\sum_{r \in V} z_{ir}  + z_{i,0} \geq 1 \qquad \forall i
\]
Observe that the optimum value of this modified LP is at most $O(\alpha\log n)$ times the optimum: set $z_{i,0} = 1$ for the pairs the integral optimum solution does not connect,
and for the rest apply~\Cref{lp:rel}.
Also observe that the modified LP can be thought of the old LP on a modified instance where the graphs $G^\upp$ and $G^\downn$ (obtained from~\Cref{thm:layering}) have another vertex $``0"$ at the root level, and each $s_i$ has a direct path from $s_i$ to $``0"$ in $G^\upp$ with total length $q_i/2$ and no fixed cost,
and similarly, each $t_i$ has a path from $``0"$ to $t_i$ in $G^\downn$ with total length $q_i/2$ and no fixed cost. 
The rest of the proof now follows exactly as in~\Cref{sec:proofmain}, by also including the special vertex as a possible root while rounding to make the outer LP variables integral.
\end{proof} 

\section{Conclusion}
In this paper, we gave the first polylogarithmic-competitive online
algorithms for the non-uniform multicommodity buy-at-bulk problem.
Our result is a corollary of a generic online reduction
technique that we proposed in this paper for converting 
a multicommodity instance into several single-sink instances, 
which are often easier to design algorithms for.  We believe that this 
reduction will have other applications beyond the buy-at-bulk framework, 
and illustrate this by showing that recent results on online node-weighted 
Steiner forest and online generalized connectivity directly follow from 
our reduction theorem.
Our work also opens
up  new directions for future research. For instance, our algorithm for the
node-weighted problem runs in quasi-polynomial time, and a concrete
open question is to get a polynomial-time polylogarithmic-competitive algorithm for
the SS-N-BB problem (this suffices for MC-N-BB as well by our main
theorem). Another technical question concerns non-uniform demands. While our algorithm can be extended to the case of non-uniform demands, the approximation ratio incurs an additional $O(\log D)$ factor, where $D$ is the ratio of the largest to the smallest demand. It would be interesting to eliminate this dependence on $D$ since the corresponding offline results do not have this dependence. More generally, a broader question is to investigate other mixed packing-covering LPs that can be solved and rounded online.

\section*{Acknowledgements}
D. Panigrahi is supported in part by NSF Award CCF-1527084, a Google Faculty Research Award, and a Yahoo FREP Award.

\bibliographystyle{abbrv}
\bibliography{refs}

\appendix

\section{Reduction to Layered Instances (Proof of
Theorem~\ref{thm:layering})} \label{app:layering}
In this section, we prove
Theorem~\ref{thm:layering}, which is an extension of Zelikovsky's
`height reduction lemma' for the buy-at-bulk problem; Zelikovsky's
original Lemma was for a single metric, whereas in our setting there
is both a cost and a length metric.

We prove the up-ward case; the down-ward case follows analogously. In
order to simplify the notation, we remove the superscript $^\upp$.
For this reduction, we will adapt the notion of layered expansion of
a graph, which has been in the folklore for many years and has been
used recently by several papers (see, e.g.,\cite{ChekuriEGS11,
NaorPS11}). The $h$-level {\em layered expansion} of $G$ is a layered
DAG $G_h$ of $h+1$ levels (we index the level $0, 1, \ldots, h$)
defined as follows:

\begin{OneLiners}
	\item[(i)] For each $i$ such that $0 \leq i \leq h$, the vertices
	in level $i$ are copies of the vertices of $G$; we let $v_i$ to
	denote the copy of vertex $v \in V$ at level $i$.
	\item[(ii)] For each $i$ such that $1 \leq i \leq h$, there is a
	directed edge from every vertex in level $i$ to every vertex in
	level $i-1$. The fixed cost of an edge $(u_i, v_{i-1})$ is given by
	that of the shortest directed path $P^i_{uv}$ from $u_i$ to
	$v_{i-1}$ in $G$ according to the metric $c_e + k^{1 - i/h}
	\ell_e$. The length of this edge is set to be the length of the
	path $P^i_{uv}$ in the $\ell$ metric.
\end{OneLiners}

We now relate the optimal objective values for the two instances. One
of the directions of the reduction is straightforward.

\begin{lemma}
\label{lma:layered2directed}
	For any root $r$ and any set of terminals $X$, if there is a
	feasible integral/fractional solution of objective/LP value $\phi$
	for the single-sink buy-at-bulk problem connecting $X$ to $r$ on
	the $h$-level layered expansion $G_h$, then there is a feasible
	integral/fractional solution of objective/LP at most $\phi$ for the
	same problem in $G$.
\end{lemma}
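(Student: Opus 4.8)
The plan is to exploit the defining property of the layered expansion $G_h$: every edge $e'=(u_i,v_{i-1})$ of $G_h$ is a ``shortcut'' for a concrete directed path $P^i_{uv}$ in $G$, and by construction the fixed cost $c^{G_h}_{e'}$ of $e'$ equals the $(c+k^{1-i/h}\ell)$-length of $P^i_{uv}$ --- in particular it is at least the $c$-length $c(P^i_{uv}):=\sum_{g\in P^i_{uv}} c_g$ --- while the length $\ell^{G_h}_{e'}$ of $e'$ equals exactly the $\ell$-length $\ell(P^i_{uv})$. Hence substituting each $G_h$-edge by its associated $G$-path never increases cost and exactly preserves length, and the whole proof amounts to pushing a solution through this substitution.

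For the integral case I would take a feasible $G_h$-solution, which routes each terminal $x\in X$ along some path $P_x$ from the level-$h$ copy of $x$ to the level-$0$ copy of $r$. Concatenating the $G$-paths associated with the edges of $P_x$ (consecutive ones share a $G$-endpoint) yields an $x$--$r$ walk in $G$, which I shortcut to a simple path $Q_x$; I then output $\{Q_x : x\in X\}$. The routing term can only drop, since $\ell(Q_x)\le\sum_{e'\in P_x}\ell^{G_h}_{e'}$; the fixed-cost term can only drop as well, since the set of $G$-edges used by $\{Q_x\}$ is contained in the union, over the $G_h$-edges used by the solution, of the associated $G$-paths, so its total $c$-cost is at most $\sum_{e'\text{ used}} c(P_{e'}) \le \sum_{e'\text{ used}} c^{G_h}_{e'}$. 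Adding the two bounds gives objective at most $\phi$.

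For the fractional case I would perform the analogous substitution at the level of flows. Given a feasible $(x^{G_h},f^{G_h})$ for \eqref{sslablp} on $G_h$, define $x^G_g:=\sum_{e'\,:\,g\in P_{e'}} x^{G_h}_{e'}$ and, for each $x\in X$, $\hat f_x(g):=\sum_{e'\,:\,g\in P_{e'}} f^{G_h}_x(e')$, where the sums range over $G_h$-edges $e'$ whose associated $G$-path $P_{e'}$ contains $g$. Routing $f^{G_h}_x(e')$ units along $P_{e'}$ for each $e'$ keeps flow conserved at every internal vertex of the $P_{e'}$'s and at every vertex $v$ of $G$ (the divergence of $\hat f_x$ at $v$ equals the sum of the divergences of $f^{G_h}_x$ over the copies of $v$ in the levels of $G_h$), so $\hat f_x$ has net outflow $1$ at $x$, net inflow $1$ at $r$, and is balanced elsewhere; after discarding any cycle component (which only lowers the cost) it is a valid unit $x$--$r$ flow in $G$. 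The capacity constraint $\hat f_x(g)\le x^G_g$ holds term by term. Finally, swapping the order of summation gives $\sum_g c_g x^G_g=\sum_{e'} c(P_{e'})\,x^{G_h}_{e'}\le\sum_{e'} c^{G_h}_{e'} x^{G_h}_{e'}$ and $\sum_x\sum_g \ell_g\hat f_x(g)=\sum_x\sum_{e'}\ell(P_{e'})\,f^{G_h}_x(e')=\sum_x\sum_{e'}\ell^{G_h}_{e'} f^{G_h}_x(e')$, so the objective of $(x^G,\hat f)$ on $G$ is again at most $\phi$.

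None of this is deep; the routine verifications (flow conservation under the path-for-edge substitution, and the change of order of summation) are the bulk of the write-up. The one point requiring care --- and the closest thing to an obstacle --- is ensuring the fixed cost is not double-counted in $G$ when several $G_h$-edges in the solution map to $G$-paths that share edges: this is exactly why the inequality points the right way (overlaps help rather than hurt), and it is the only place where the argument is not a literal term-by-term comparison.
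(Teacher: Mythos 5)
Your proof is correct and uses exactly the same idea as the paper: replace each $G_h$-edge by its defining $G$-path, observing that this can only decrease cost and exactly preserves length. The paper states this in two sentences; you have filled in the routine verifications (flow conservation under the edge-to-path substitution, and the non-double-counting observation for fixed costs), but the underlying argument is identical.
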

\begin{proof}
	Note that for every edge in $G_h$, there is a corresponding path in
	$G$ with the property that the sum of edge costs and lengths on the
	path is at most the cost and length of the edge in $G$. Therefore,
	replacing the edges in the solution for the layered graph by the
	corresponding paths in $G$ yields a feasible solution in $G$
	without increasing the overall cost and length. Notice that the
	same ``embedding'' of edges in $G_h$ to paths in $G$ can be applied
	to the  fractional solution on $G_h$ as well. This shows property
	(ii) of the theorem statement.
\end{proof}

\medskip\noindent
The more interesting direction is to show that the optimal objective
value  on the layered graph $G_h$ can be bounded in terms of the
optimal objective value on the original graph $G$. To show this, we
will re-purpose the so-called ``height reduction'' lemma of Helvig,
Robins, and Zelikovsky~\cite{HelvigRZ01}. We restate the lemma in a
form that will be useful for us.


\begin{lemma}
	\label{lma:height-reduction}
	For any in-tree $T$ defined on the edges of $G$ that is rooted at
	$r$ and contains all the terminals in $X$, and for any integer $h
	\geq 1$, there is an in-tree $T'$ (on the same vertices as $G$ but
	over a different edge set) that is also rooted at $r$ and contains
	all the terminals, and has the following properties:
 	\begin{OneLiners}
 		\item[(i)] $T'$ contains $h+1$ levels of vertices, i.e., has height $h$.
 		\item[(ii)] $T'$ is an $k^{1/h}$-ary tree, i.e., each non-leaf vertex has
 		$k^{1/h}$ children.
 		\item[(iii)] Each edge $e' = (u',v')$ in tree $T'$ corresponds to the unique directed path $p_{e'}$
 		in $T$ from $u'$ to $v'$. Moreover, the number of terminals in the subtree
 		of $T$ rooted at $u'$ is exactly $k^{1-i/h}$, where $e'$ is an edge between
 		levels $i-1$ and $i$ of $T$.
 		\item[(iv)] Each edge in $T$ is in at most $2h k^{1/h}$ such paths $p_{e'}$
 		for edges $e'\in T'$.
 	\end{OneLiners}
\end{lemma}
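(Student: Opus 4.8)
The plan is to prove the lemma by induction on the height $h$, building $T'$ by a recursive \emph{balanced carving} of $T$; this is in essence the argument of Helvig, Robins and Zelikovsky~\cite{HelvigRZ01}, rephrased in terms of in-trees and the path-bundles $p_{e'}$. For the base case $h=1$, let $T'$ be the star with center $r$ whose leaves are exactly the terminals of $X$, where the $T'$-edge from a terminal $x$ to $r$ is mapped to the unique $x$--$r$ path in $T$; then (i)--(iv) are immediate (one internal node with $k = k^{1/1}$ children; the subtree of $T$ rooted at any terminal contains $1 = k^{1-1/1}$ terminal; and each edge of $T$ lies on at most $k = k^{1/1}\le 2\cdot 1\cdot k^{1/1}$ of the paths). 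Throughout we may assume every terminal is a leaf of $T$ (otherwise subdivide or hang a zero-weight pendant copy), and, by replacing $k$ with the next perfect $h$-th power, which is at most $2k$, that $b := k^{1/h}$ is an integer; this only affects the final constant.

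For the inductive step, put $s := k^{1-1/h} = k/b$. The core is a carving step: partition $X$ into groups $X_1,\dots,X_m$ with $m \le b$ and pick a vertex $v_j$ for each $X_j$ such that (a) $X_j$ lies in the subtree of $T$ rooted at $v_j$, (b) each $|X_j|$ equals $s$ (up to rounding), and (c) the minimal subtrees $T[X_j \cup \{v_j\}]$ are edge-disjoint (duplicating a shared branch vertex with a zero-weight edge when necessary, which the layered-graph application tolerates since a vertex of $G$ may then appear at several places of $T'$). Such a carving is produced by repeatedly locating a deepest vertex $v$ of $T$ all of whose child-subtrees hold fewer than $s$ terminals but whose own subtree holds at least $s$, bundling the child-subtrees of $v$ into groups of roughly $s$ terminals, carving those bundles off, deleting $T_v$, and recursing on what remains; since every piece spans at least $s$ of the $k$ terminals and the pieces are edge-disjoint, there are at most $b$ of them.

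We then apply the induction hypothesis with parameter $h-1$ to each pair $(T[X_j\cup\{v_j\}],\,v_j)$; note $|X_j|^{1/(h-1)} \approx s^{1/(h-1)} = k^{1/h} = b$, so each recursively produced tree $T'_j$ has height $h-1$ and branching $\approx b$. Assemble $T'$ by attaching each $T'_j$ below a fresh level-$1$ node named $v_j$, whose edge up to the root $r$ is mapped to the $v_j$--$r$ path in $T$. The result has height $h$, internal nodes of degree $\approx b = k^{1/h}$, leaves exactly $X$, and — by balancedness of the carving together with the inductive guarantee — level-$i$ subtrees each spanning exactly $k^{1-i/h}$ terminals, which is property (iii) (and is exactly what the $\ell$-part of the Theorem~\ref{thm:layering} bound will need). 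For property (iv), fix an edge $e$ of $T$ and split the count $|\{e' : e \in p_{e'}\}|$ by level: $e$ lies on at most $b$ of the $m\le b$ level-$1$ paths $v_j\to r$; and since the carved subtrees are edge-disjoint, $e$ belongs to at most one piece $T[X_j\cup\{v_j\}]$, so the contribution from the level-$\ge 2$ edges of $T'$ (all coming from the recursions on the pieces) is at most the corresponding count \emph{inside that one recursion}, which by induction is at most $2(h-1)b$. Hence $e$ appears on at most $b + 2(h-1)b = (2h-1)b \le 2hk^{1/h}$ paths, closing the induction.

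I expect the main obstacle to be the carving step. One must produce $O(k^{1/h})$ pieces that are simultaneously (nearly) balanced in terminal count and (nearly) edge-disjoint, because it is precisely edge-disjointness that keeps the per-level replication factor from compounding multiplicatively across the $h$ levels; a naive carving into exactly-$s$-sized pieces is only $2$-overlapping and would blow the bound up to $2^h k^{1/h}$, while a naive edge-disjoint carving yields pieces of size up to $2s$ and drifts the branching out of control at the bottom levels. Reconciling the two demands — and in particular squeezing the constant down to exactly $2hk^{1/h}$ rather than a looser $O(hk^{1/h})$ — is the bookkeeping that we import from~\cite{HelvigRZ01}.
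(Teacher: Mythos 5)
The paper does not prove this lemma at all: the authors import it from Helvig, Robins, and Zelikovsky~\cite{HelvigRZ01} and say only that they ``restate the lemma in a form that will be useful for us.'' So there is no in-paper proof to compare your sketch against, and a fair assessment is that your sketch actually goes further than the paper does.

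Your reconstruction of the HRZ induction is structurally right. The base case, the recursive assembly, the exponent bookkeeping (that $s^{1/(h-1)} = k^{1/h}$ and $|X_j|^{1-i'/(h-1)} = k^{1-(i'+1)/h}$, so the balance claim (iii) is preserved through the recursion with $i = i'+1$), and the split-by-level counting for (iv) (at most $m \le b$ level-$1$ paths, plus --- by edge-disjointness of the carved pieces --- at most one recursion contributing $\le 2(h-1)b$, giving $(2h-1)b \le 2hk^{1/h}$) are all correct. Your observation that edge-disjointness of the pieces is exactly what keeps the replication bound additive in $h$ rather than multiplicative is the central structural insight of the lemma. The gap you flag is genuine, though: the carving must deliver pieces that are \emph{simultaneously} edge-disjoint and tightly balanced, and if piece sizes drift to, say, $[s, 2s]$ then the inductive bound becomes $2(h-1)|X_j|^{1/(h-1)} \le 2^{1/(h-1)}\cdot 2(h-1)b$, and the $(2h-1)b$ arithmetic no longer closes cleanly; likewise the undersized residual piece (which can push $m$ to $b+1$ or force one oversized piece) needs explicit handling to match the exact ``$k^{1/h}$-ary'' and ``exactly $k^{1-i/h}$'' statements. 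That bookkeeping is precisely what~\cite{HelvigRZ01} supplies, and deferring it is defensible given that the paper defers the entire proof; but if you want a self-contained argument you would need to spell out that carving, or weaken (ii) and (iii) to ``at most'' / ``within a constant factor,'' which is in fact all that the downstream application in \Cref{lma:directed2layered} uses.
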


For an edge $e' \in T'$, suppose we define its cost to be the
cost of the path $p_{e'}$, and its length to be the length of the
path $p_{e'}$. Then it is easy to see that the overall cost
$\phi_{T'}$ of tree $T'$ is $O(h k^{1/h})$ times that of tree $T$;
this is due to the following implications of the above lemma: (a) the
total (buying) cost of all edges in $T'$ is at most $2h k^{1/h}$
times that of $T$ since each edge is reused at most $2h k^{1/h}$
times, and (b) for any terminal $x \in X$, the edges on its path to
the root in $T'$ correspond to disjoint sub-paths in the unique path
between $x$ and $r$ in $T$, and hence the total length cost in $T'$
is at most that in $T$.

Using this lemma, we can now complete the reduction by
``embedding'' the tree $T'$ in the layered graph $G_h$.

\begin{lemma}
\label{lma:directed2layered}
	If there is a feasible solution of overall cost $\phi$ for the
	single-sink buy-at-bulk problem on $G$, then there is a feasible
	solution of overall cost $O(h k^{1/h} \phi)$ for the same problem
	on the $h$-level layered extension $G_h$.
\end{lemma}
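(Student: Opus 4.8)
The plan is to reduce an arbitrary feasible solution on $G$ to an in-tree, shrink its height to $h$ via the height-reduction lemma (\Cref{lma:height-reduction}), and then embed the resulting bounded-height tree edge-by-edge into $G_h$. For the first step I would argue that, without loss of generality, the feasible solution of objective value $\phi$ is an in-tree $T$ rooted at $r$ containing all terminals in $X$. Given any feasible solution, let $E'$ be its support; every terminal reaches $r$ within $E'$, so we may take $T$ to be a shortest-path in-tree inside $E'$ with respect to the length metric $\ell$. Then the fixed cost of $T$ is at most $\sum_{e \in E'} c_e$, and its length cost $\sum_e \ell_e \load_T(e) = \sum_{x \in X} \ell(\mathrm{path}_T(x,r))$ is at most $\sum_{x \in X} \ell(Q_x)$, where $Q_x$ is the route of terminal $x$ in the original solution; hence $\mathrm{cost}_G(T) \le \phi$.

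Next I would invoke \Cref{lma:height-reduction} on $T$ to obtain the tree $T'$: it has height $h$, is $k^{1/h}$-ary, each edge $e'$ of $T'$ corresponds to a path $p_{e'}$ in $T$, and each edge of $T$ lies on at most $2 h k^{1/h}$ such paths. Assigning to each edge $e'$ the fixed cost $c(p_{e'})$ and the length $\ell(p_{e'})$, the discussion immediately following \Cref{lma:height-reduction} already shows that the buy-at-bulk objective $\phi_{T'}$ of $T'$ is $O(h k^{1/h}) \cdot \mathrm{cost}_G(T) = O(h k^{1/h}) \phi$: the total fixed cost grows by at most the factor $2 h k^{1/h}$ because of the edge-reuse bound, while the total length cost does not grow because, for every terminal, the paths $p_{e'}$ along its route in $T'$ are disjoint sub-paths of its route in $T$.

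The last step is to embed $T'$ into $G_h$. I would send each level-$i$ vertex $u'$ of $T'$ to its copy $u'_i$ in level $i$ of $G_h$ (so the terminals land in level $h$ and $r$ lands in level $0$) and replace each edge $e' = (u', v')$ of $T'$ between levels $i$ and $i-1$ by the $G_h$-edge $(u'_i, v'_{i-1})$; the image is an in-tree in $G_h$ spanning the terminals at level $h$. Being a tree, it carries on each level-$i$ edge a flow equal to the number of terminals in the subtree below it, which is exactly $k^{1 - i/h}$ by the $k^{1/h}$-ary structure of $T'$ --- precisely the scaling factor with which level-$i$ edges of $G_h$ were defined, via the metric $c_e + k^{1-i/h}\ell_e$. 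So the objective of the image solution is $\sum_{e' \in T'}\big(c^{G_h}(e'_{\mathrm{img}}) + k^{1 - i/h}\,\ell^{G_h}(e'_{\mathrm{img}})\big)$, where $i$ is the level of the lower endpoint of $e'$, and each summand equals the $(c + k^{1-i/h}\ell)$-length of a shortest $u'$--$v'$ path $P^i_{u'v'}$ in $G$, hence is at most the $(c + k^{1-i/h}\ell)$-length of $p_{e'}$, namely $c(p_{e'}) + k^{1-i/h}\ell(p_{e'})$. Summing over $e'$, the image solution costs at most $\phi_{T'} = O(h k^{1/h})\phi$, which is the claimed bound.

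The step that needs the most care is the accounting in this last embedding: one must check that the flow on each level-$i$ edge of the image is exactly $k^{1-i/h}$, so that it matches the scaling factor hard-coded into $G_h$'s edge costs, and one must ensure that passing to a spanning in-tree in the first step does not increase the length cost (which is why the tree must be chosen shortest with respect to $\ell$ inside the support). Everything else is lossless by construction: \Cref{lma:height-reduction} does the substantive work, and translating $T'$ into $G_h$ only ever replaces a path $p_{e'}$ by a no-longer $(c + k^{1-i/h}\ell)$-shortest path.
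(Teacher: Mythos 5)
Your proof follows the paper's own approach: reduce the feasible solution to an in-tree $T$ rooted at $r$, apply Lemma~\ref{lma:height-reduction} to obtain a height-$h$ tree $T'$, and embed $T'$ level-by-level into $G_h$. In the first step you are actually more careful than the paper, which asserts without justification that the union of the routing paths is already an in-tree; passing to a shortest-$\ell$-path in-tree inside the support, as you do, is the right way to make that reduction rigorous.

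There is, however, a small arithmetic slip in your final accounting. By construction $c^{G_h}(e'_{\mathrm{img}})$ is \emph{already} the full $(c + k^{1-i/h}\ell)$-length of $P^i_{u'v'}$, so your summand $c^{G_h}(e'_{\mathrm{img}}) + k^{1-i/h}\,\ell^{G_h}(e'_{\mathrm{img}})$ equals $c(P^i_{u'v'}) + 2k^{1-i/h}\ell(P^i_{u'v'})$, not the $(c + k^{1-i/h}\ell)$-length of $P^i_{u'v'}$ as you state; it is bounded by \emph{twice} $c(p_{e'}) + k^{1-i/h}\ell(p_{e'})$, so the embedded solution costs at most $2\phi_{T'}$ rather than $\phi_{T'}$. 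The paper's proof tracks this factor explicitly (``at most twice the overall cost of $T'$''). Since the stated bound is $O(h k^{1/h}\phi)$, the constant is absorbed and your conclusion stands.
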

\begin{proof}
	Let $T$ be the union of the paths in the optimum solution on the
	graph $G$. It's easy to see that $T$ is a directed in-tree.  First,
	we use Lemma~\ref{lma:height-reduction} to transform $T$ to tree
	$T'$ of height $h$. As noted earlier, the overall cost $\phi_{T'}$
	of $T'$ is $O(h k^{1/h} \phi)$. Now, we construct a feasible tree
	$T_h$ in $G_h$ using this solution $T'$ as follows: consider each
	edge $(u,v)$ in $T'$ where $u$ is at level $i$ and $v$ is at level
	$(i-1)$. Then, include the edge $(u_i, v_{i-1})$ in $T_h$.
	Clearly, since $T'$ connects all the terminals to the root, so does
	$T_h$. Moreover, notice that there is a 1-to-1 mapping between
	edges in $T'$ and edges in $T_h$.

	To bound the objective value of the subtree, we relate the objective
	value for each edge of the subtree in $G_h$ to its corresponding
	mapped edge in $T'$. First, note that the overall contribution of
	an edge $e' = (u',v')$ between layers $i$ and $i+1$ towards
	$\phi_{T'}$ is equal to the sum of costs and $k^{1-i/h}$ times the
	lengths of the edges on the associated path $p_{e'}$ from $u'$ to
	$v'$. This is because, by property (iii)
	of~\Cref{lma:height-reduction}, the number of demands in the
	subtree rooted at $u'$ is exactly $k^{1 - i/h}$ and all of them
	traverse this edge to reach $r$.  Next, we note that, by
	definition, the cost of the edge $(u'_i, v'_{i+1})$ between layers $i$
	and $i+1$ in $T_h$ is equal to the \emph{shortest directed path}
	from $u'$ to $v'$ in $G$ according to the metric $c_e + k^{1 - i/h}
	\ell_e$. Since we chose the shortest path, we get that the buying
	cost of edge $(u'_i, v'_{i+1}) \in T_h$ is at most the contribution
	of $(u',v')$ towards $\phi_{T'}$. Moreover, the total length cost
	is at most $k^{1-i/h}$ times the length of the shortest path, which
	is at most the fixed cost of $(u'_i, v'_{i+1}) \in T_h$ (again,
	this uses the fact that there are exactly $k^{1-i/h}$ terminals
	which route through this edge in $T_h$ also).  It therefore follows
	that the overall cost of the solution that we have inductively
	constructed in $G_h$ is at most twice the overall cost of $T'$,
	which is at most $O(h k^{1/h}) \phi$.
\end{proof}

\medskip\noindent
\Cref{thm:layering} follows from
\Cref{lma:layered2directed,lma:directed2layered}.


\end{document}